\documentclass[12pt]{article}

\usepackage[utf8]{inputenc}
\usepackage{amsmath,amssymb,amsfonts,amsthm}
\usepackage{mathtools}
\usepackage{graphicx}
\usepackage{epstopdf}
\usepackage{hyperref}
\usepackage{float}
\usepackage{subcaption}
\usepackage{multirow}
\usepackage{enumerate}
\usepackage{csquotes}
\usepackage{lipsum}
\usepackage{caption}
\usepackage{lineno}
\usepackage{anysize}
\usepackage{color,soul}

\usepackage{xr-hyper}   
\usepackage{hyperref}
\usepackage{authblk}

\newtheorem{theorem}{Theorem}[section]
\newtheorem{lemma}[theorem]{Lemma}

\title{Effect of wind on prey-predator dynamics with group defense and additional food strategy}

\author[1]{Anushree Hazra}

\author[2]{Aniket Banerjee}

\author[1]{Debaldev Jana\thanks{Corresponding author: \texttt{debaldevjana.jana@gmail.com}}}

\affil[1]{Department of Mathematics, 
Ramakrishna Mission Vivekananda Educational and Research Institute, 
Belur Math, West Bengal 711202, India}

\affil[2]{Sorbonne Université, 
CNRS, Laboratoire Jacques-Louis Lions (LJLL), 
75005 Paris, France}

\begin{document}
\date{} 
\maketitle

\begin{abstract}
Wind plays a crucial role in changing prey defense strategy and predation efficiency. We develop and analyze a prey-predator model that incorporates wind-driven environmental effects, prey group defense, and an additional food strategy for the predator. Wind intensity is assumed to modulate predation efficiency, while prey aggregation reduces predation pressure at high densities, and supplementary food alters predator growth independent of prey abundance. We investigate the existence and stability of biologically feasible equilibrium points. The analysis reveals that wind strength and additional food supply can induce stability switches, oscillatory dynamics via Hopf bifurcation, and more complex behaviors including saddle–node and Bogdanov–Takens bifurcations. Our results demonstrate that environmental forcing and resource supplementation jointly shape predator persistence and population fluctuations, providing theoretical insights into ecological management strategies involving food enrichment under variable environmental conditions.
    
\end{abstract}

\section{Introduction}
Prey–predator interactions are fundamental components of ecological systems and play a key role in shaping population dynamics and ecosystem structure \cite{wooster2026predator,lima1998nonlethal,beauchamp2007predator}.
In natural ecosystems, however, population dynamics are strongly influenced by environmental factors \cite{tablado2014environmental,domenici2007environmental,sih1994prey,marrow1996evolutionary} that are often neglected in basic models. Among these factors, wind plays a significant role in shaping ecological interactions, particularly in open terrestrial and aquatic environments \cite{barman2022impact}. Wind can affect prey–predator dynamics by altering movement patterns, dispersal rates, encounter frequencies, and foraging efficiency. For instance, wind-driven transport may facilitate or hinder predator–prey encounters, modify spatial distribution, and influence the energetic cost of movement for both species.

Motivated by these ecological considerations, several studies have emphasized the importance of incorporating environmental forcing into population models \cite{cuddington2002predator,engstrom1970predation,barman2023modelling,thirthar2026dynamics}. In this context, wind can be modeled as an external factor that modifies interaction terms, movement behavior, or species growth rates. Such modifications lead to nonlinear dynamical systems whose qualitative behavior may differ significantly from that of classical prey–predator models.

The role of additional food for predators has received considerable attention, particularly in biological control theory \cite{srinivasu2007biological,srinivasu2011role,srinivasu2018biological}. The provision of supplementary food, either naturally occurring or artificially supplied, modifies both the predator's numerical response and the functional response \cite{prasad2013dynamics,gurubilli2017global}. Several studies have demonstrated that the quality and quantity of additional food can either stabilize or destabilize predator-prey interactions \cite{srinivasu2010time, srinivasu2011role}. It has also been shown that supplementary food may lead to counterintuitive outcomes such as prey extinction or the emergence of complex bifurcation structures depending on parameter regimes \cite{parshad2023additional,VERMA2026100114}. However, most of these studies assume constant environmental conditions and do not explicitly incorporate adaptive prey defense mechanisms within the same modeling framework.

Although substantial progress has been made in modeling additional food, group defense, and nonlinear functional responses independently, an integrated framework combining these mechanisms with wind-modulated predation remains largely unexplored. In particular, the interaction between wind-dependent olfactory efficiency and supplementary food may generate novel stability thresholds and higher codimension bifurcations that cannot be captured by models lacking environmental modulation.

We develop a prey-predator model to understand the dynamical change in the species population when the prey adopts a group defense strategy as an anti-predator response in the presence of an additional food source for the predator. The effect of wind is studied and is important to understand its implications for population dynamics.

We assume the prey population follows logistic growth in the absence of the predator. The ecosystem has a carrying capacity $k$, which takes into account that there is a limit to the amount of resources for the populations in the ecosystem, which stops the unbounded growth of the prey population in the absence of a predator. The intrinsic growth rate of the prey population is assumed to be the positive constant $r$.
The logistic growth of the prey population simplifies and takes into account the growth, natural mortality, and intraspecific competition of the prey population. 

\begin{equation}\label{eq:1}
    \begin{cases}
        &\dfrac{dx}{dt}=rx\left ( 1-\frac{x}{k}\right) - yp(x), \\
        &\dfrac{dy}{dt}= q(x)y-dy,\\
    &x(0)>0,\hspace{2mm}y(0)>0.
    \end{cases}
\end{equation}

We assume the group defense strategy to be modeled as Holling type IV or Monod-Haldane type functional response(\cite{FREEDMAN1986493,sasmal2020dynamics}). The functional response to observe the study of group defense can be stated as $p(x)=\dfrac{\beta x}{a+bx+x^2}$ where $\beta$ is the rate of predation of the prey, $b$ is the rate of group defense by the prey population. As $b$ increases, $p(x)$ decreases, thereby reducing predation. So, group defense as an anti-predator response sensitivity is studied through the parameter $b$.

The additional food for the predator is assumed to exist or maybe be provided externally with the quantity $\xi$ and quality $\frac{1}{\alpha}$. The additional food source decreases the predation function response, i.e., $p(x)$ decreases with an increase in $\xi$, as the predator has an alternate food source for predation, while it increases the predator growth by providing nutrients, i.e., $q(x) $ increases with the increase of $\xi$. Such models have been studied extensively in the literature \cite{srinivasu2010time,srinivasu2011role,parshad2023additional}.

Thus, a developed model with the effect of group defense and the presence of additional food can be given by,

\begin{equation}\label{eq:2}
    \begin{cases}
        &\dfrac{dx}{dt}= rx\left(1-\frac{x}{k}\right) - \dfrac{ x y}{1+\alpha \xi + bx + x^2},\\
        &\dfrac{dy}{dt}= \dfrac{\beta (x+\xi)y}{1+\alpha \xi + bx + x^2} - dy,\\
         &x(0)>0,\hspace{2mm}y(0)>0.
    \end{cases}
\end{equation}

Beyond biotic interactions, abiotic environmental factors play a crucial role in shaping predator-prey encounters. Wind, in particular, influences olfactory communication \cite{banks2016predator,hilton1999choice}, dispersal patterns \cite{damschen2014fragmentation,nathan2011mechanistic}, and encounter rates in terrestrial and aerial ecosystems \cite{thomas2003aerial}. Empirical studies suggest that wind speed affects the transmission of chemical cues in a non-monotonic manner. At very low wind speeds, odor plumes disperse poorly, limiting detection, while at very high speeds, turbulence disrupts odor gradients and reduces signal reliability. Consequently, olfactory efficiency often exhibits an optimal intermediate wind speed. Despite such biological evidence, wind effects are rarely mechanistically incorporated into deterministic prey-predator models. Environmental variability is often modeled as stochastic noise or periodic forcing, without an explicit functional dependence between wind speed and predation efficiency.

Motivated by these considerations, we propose and analyze a prey-predator model that incorporates logistic prey growth, group defense represented via a Monod-Haldane type functional response, additional food of specified quality and quantity for the predator, and a wind-dependent modulation of predation efficiency described through a biologically motivated function. The wind function reflects the empirically observed relationship between wind speed and olfactory signal reception and directly modifies both predation and predator growth terms. In addition, a density-dependent mortality term associated with supplementary food consumption is included to capture potential ecological costs.

Thus, after modification, our model for the study is given as follows:

\begin{equation}\label{eq:5}
    \begin{cases}
        &\dfrac{dx}{dt}= rx\left(1-\frac{x}{k}\right) - \dfrac{\phi(w) x y}{1+\alpha \xi + bx + x^2},\\
        &\dfrac{dy}{dt}= \dfrac{\beta \phi(w) (x+\xi)y}{1+\alpha \xi + bx + x^2} - dy -c\xi \phi(w) y^2,\\
         &x(0)>0,\hspace{2mm}y(0)>0.
    \end{cases}
\end{equation}

The main objective of this work is to provide a rigorous analytical and numerical investigation of the resulting dynamical system. We establish positivity and uniform boundedness of solutions to ensure ecological feasibility. We determine the existence and local stability conditions of all equilibria, including trivial, boundary, and interior steady states. We further investigate local bifurcations, including transcritical, saddle-node, and Hopf bifurcations, with respect to biologically relevant parameters. Moreover, we analyze higher-codimension global bifurcations, including cusp singularities and Bogdanov--Takens bifurcations, thereby revealing the possible emergence of unstable limit cycles and homoclinic loops. Numerical simulations are performed to illustrate how different wind-dependent functional forms influence extinction, coexistence, and oscillatory dynamics.

The paper is organized as follows. Section~\ref{S3} establishes positivity and uniform boundedness of solutions. Section~\ref{S4} analyzes the existence and local stability of equilibria. Section~\ref{S5} is devoted to local bifurcation analysis, while Section~\ref{BT_bif} investigates global bifurcation phenomena including cusp and Bogdanov--Takens bifurcations. Numerical simulations are presented in Section~\ref{S6}, followed by an ecological interpretation of wind effects in Section~\ref{S8}. Finally, Section~\ref{S9} concludes the paper with discussion and future perspectives.





\section{The existence and local stability of the equilibria of the system \eqref{eq:5}}\label{S4}
The existence and local stability of the equilibrium points of the system \eqref{eq:5} are established under the appropriate parametric conditions described in Table \ref{tab:1}.

\section{ Analysis of local bifurcation}\label{S5}

\subsection{Transcritical bifurcation}

 \begin{theorem}\label{th_TC_2}
 The model \eqref{eq:5} experiences a transcritical bifurcation near the axial/ prey only equilibrium point  $E_2(K,0)$ in $\phi(w)=\phi_2^{(TC)}(w)=\frac{d(1+\alpha\xi+bk+k^2)}{\beta(k+\xi)}$ with $\phi(w) \neq \dfrac{rc\xi(1+\alpha\xi+bk+k^2)^3}{\beta k(-1-\alpha\xi+b\xi+2k\xi+k^2)}$ and $(-1-\alpha\xi+b\xi+2k\xi+k^2)\neq0$.   
  \end{theorem}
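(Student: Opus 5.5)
The plan is to apply Sotomayor's theorem at $E_2(k,0)$, taking $\phi=\phi(w)$ as the bifurcation parameter. Write the vector field of \eqref{eq:5} as $F(x,y;\phi)=(F_1,F_2)^T$ and set $D=1+\alpha\xi+bk+k^2$. First I compute the Jacobian at $E_2$:
\[
J(E_2)=\begin{pmatrix} -r & -\dfrac{\phi k}{D}\\[2mm] 0 & \dfrac{\beta\phi(k+\xi)}{D}-d\end{pmatrix}.
\]
Its eigenvalues are $-r<0$ and $\beta\phi(k+\xi)/D-d$. The second eigenvalue vanishes exactly when $\phi=\phi_2^{(TC)}(w)=d D/(\beta(k+\xi))$. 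So at this value $E_2$ is nonhyperbolic with a simple zero eigenvalue.

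Next I find the eigenvectors of the zero eigenvalue. For $J$, a right eigenvector is $V=(v_1,1)^T$ with $-rv_1-\phi k/D=0$, which gives $v_1=-\phi k/(rD)$. For $J^T$, a left eigenvector is $W=(0,1)^T$, since the first row of $J^T$ forces $w_1=0$. Sotomayor's transversality conditions for a transcritical bifurcation are then
\[
W^TF_\phi(E_2;\phi^{TC})=0,\qquad W^T[DF_\phi\,V]\neq0,\qquad W^T[D^2F(V,V)]\neq0 .
\]
The first condition holds automatically, because $F_2$ is proportional to $y$ and $y=0$ at $E_2$.

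For the second condition, $\partial_\phi F_2=y\,[\beta(x+\xi)/(1+\alpha\xi+bx+x^2)-c\xi y]$. Its $y$-derivative at $E_2$ is $\beta(k+\xi)/D$, and its $x$-derivative there is $0$. Hence $W^T[DF_\phi V]=\beta(k+\xi)/D>0$, so this condition is nondegenerate.

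The main work is the third condition, which requires the second derivatives of $F_2$ at $E_2$:
\[
F_{2,xx}=0,\qquad F_{2,xy}=\beta\phi\,\frac{d}{dx}\!\left[\frac{x+\xi}{1+\alpha\xi+bx+x^2}\right]_{x=k}=\frac{\beta\phi\,(1+\alpha\xi-b\xi-2k\xi-k^2)}{D^2},\qquad F_{2,yy}=-2c\xi\phi .
\]
Then
\[
W^T D^2F(V,V)=2v_1F_{2,xy}+F_{2,yy}=-\frac{2\beta\phi^2k(1+\alpha\xi-b\xi-2k\xi-k^2)}{rD^3}-2c\xi\phi .
\]
This quantity vanishes exactly when
\[
\phi=\frac{rc\xi D^3}{\beta k(-1-\alpha\xi+b\xi+2k\xi+k^2)},
\]
which is the value excluded in the statement. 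So under the stated hypothesis the third condition holds.

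The assumption $(-1-\alpha\xi+b\xi+2k\xi+k^2)\neq0$ is what makes this excluded value well defined; it also guarantees that the mixed term contributes nontrivially. With all three conditions verified, Sotomayor's theorem gives a transcritical bifurcation at $\phi=\phi_2^{(TC)}(w)$. Through it, $E_2$ exchanges stability with the interior equilibrium branch that emerges from it.

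The only delicate step is the sign and algebra in $F_{2,xy}$. Its numerator $D-(k+\xi)(b+2k)$ simplifies to $1+\alpha\xi-b\xi-2k\xi-k^2$, and I would double-check this expansion against the stated nondegeneracy expression.
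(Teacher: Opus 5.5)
Your proposal is correct and follows the paper's proof essentially step for step. Like the paper, you apply Sotomayor's theorem at $E_2$ with $W=(0,1)^T$ and $V=\bigl(-\phi(w)k/(r(1+\alpha\xi+bk+k^2)),1\bigr)^T$, run the same three transversality checks, and obtain the same quadratic coefficient $-2c\xi\phi(w)-2\beta k\phi(w)^2(1+\alpha\xi-b\xi-2k\xi-k^2)/\bigl(r(1+\alpha\xi+bk+k^2)^3\bigr)$. One small correction to your closing remark: the hypothesis $(-1-\alpha\xi+b\xi+2k\xi+k^2)\neq0$ is not needed for nondegeneracy, because if that factor vanishes the coefficient reduces to $-2c\xi\phi(w)\neq0$.
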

\begin{proof}
     A transcritical bifurcation occurs between the coexistence/interior equilibrium point $E^*$ and the prey only/axial equilibrium point $E_2$ with respect to the bifurcation parameter $\phi(w)$ and these two equilibrium points exchange their stability at that point $\phi(w)=\phi_2^{(TC)}(w)$.  Now we calculate the Jacobian matrix $ J_2$ in $E_2$ of model \eqref{eq:5}, that is,\\
     $$ J_2 = \left(
\begin{array}{cc}
-r&-\dfrac{\phi(w)k}{1+\alpha \xi + bk + k^2}\\
0&-d+\dfrac{\beta\phi(w)(k+\xi)}{1+\alpha \xi + bk + k^2}
\end{array}
\right).$$\\
From the above it is clear that one eigenvalue of the above Jacobian matrix is negative and the other zero iff $-d+\frac{\beta\phi(w)(k+\xi)}{1+\alpha \xi + bk + k^2}=0$, that is, iff $\phi(w)=\dfrac{d(1+\alpha\xi+bk+k^2)}{\beta(k+\xi)}$, which gives $\phi(w)=\phi_2^{TC}(w)=\dfrac{d(1+\alpha\xi+bk+k^2)}{\beta(k+\xi)}$. We obtain $W=(0,1)^{T}$ and $V=\bigg(\dfrac{-\phi(w)k}{r(1+\alpha\xi+bk+k^2)},1\bigg)^{T}$, where W and V represent the eigenvectors related to the zero eigenvalue of the matrices $ {J_2}^T$ and $ J_2$, respectively.\\
Now,
 \begin{equation*}
 \begin{split}
     &W^Tf_{\phi(w)}(E_2,\phi_2^{(TC)}(w)) =0,\\
    &W^T[Df_{\phi(w)}(E_X,\phi_2^{(TC)}(w))V] =\dfrac{\beta(k+\xi)}{1+\alpha\xi+bk+k^2 } \neq 0,\\
    &W^T[D^2f(E_X,\phi_2^{(TC)}(w))(V,V)]= -2c\xi\phi(w)-\dfrac{2\beta k\phi(w)^2(1+\alpha\xi-b\xi-2k\xi-k^2)}{r(1+\alpha\xi+bk+k^2)^3} \neq 0,\\
    &that \hspace{2mm} is \hspace{2mm} iff \hspace{2mm} \phi(w) \neq \dfrac{rc\xi(1+\alpha\xi+bk+k^2)^3}{\beta k(-1-\alpha\xi+b\xi+2k\xi+k^2)} \hspace{2mm} and \hspace{2mm} (-1-\alpha\xi+b\xi+2k\xi+k^2)\neq0.
\end{split}
 \end{equation*}\\
    Now we can say that the system\eqref{eq:5} experiences a transcritical bifurcation around $\phi(w)$ at $\phi(w)=\phi_2^{TC}(w)$ according to Sotomayor's Theorem (\cite{perko2013differential}).   
    
\end{proof}
\begin{theorem}\label{th_TC_1}
 The model \eqref{eq:5} transcritical bifurcation is experienced near the axial/ prey only equilibrium point  $E_1(0,y_1)$ at $\phi(w)=\phi_1^{(TC)}(w)=\frac{(1+\alpha\xi)(d+rc\xi(1+\alpha\xi))}{\beta\xi}$ with $1+\alpha\xi-b\xi \neq0$ and $\dfrac{\beta\phi(w)\xi}{1+\alpha\xi}>d$ and $ \phi(w) \neq \dfrac{(1+\alpha\xi)[bdk-rc\xi(1+\alpha\xi)^2]}{\beta k(1+\alpha\xi-2b\xi)}$ and $(1+\alpha\xi-2b\xi)\neq0$.   
  \end{theorem}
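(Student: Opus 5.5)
We will follow the same route as in the proof of Theorem~\ref{th_TC_2}: compute the Jacobian at $E_1$, find the parameter value at which one eigenvalue vanishes, and verify the three transversality conditions of Sotomayor's theorem (\cite{perko2013differential}) with bifurcation parameter $\phi(w)$.

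\textbf{Step 1: the equilibrium $E_1$.} Setting $x=0$ in the predator equation of \eqref{eq:5} gives
$$y_1=\frac{1}{c\xi\phi(w)}\left(\frac{\beta\phi(w)\xi}{1+\alpha\xi}-d\right).$$
This is positive exactly under the stated hypothesis $\frac{\beta\phi(w)\xi}{1+\alpha\xi}>d$. We write $D(x)=1+\alpha\xi+bx+x^2$.

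\textbf{Step 2: the Jacobian and the critical value.} At $E_1$ the Jacobian is lower triangular:
$$J_1=\left(\begin{array}{cc} r-\dfrac{\phi(w)y_1}{1+\alpha\xi} & 0\\ \dfrac{\beta\phi(w)y_1(1+\alpha\xi-b\xi)}{(1+\alpha\xi)^2} & -c\xi\phi(w)y_1\end{array}\right).$$
The $(2,2)$ entry is negative. The $(1,1)$ entry vanishes iff $y_1=r(1+\alpha\xi)/\phi(w)$. Substituting $y_1$ from Step 1, this is equivalent to $\frac{\beta\phi(w)\xi}{1+\alpha\xi}-d=rc\xi(1+\alpha\xi)$, that is, $\phi(w)=\phi_1^{(TC)}(w)$.

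\textbf{Step 3: eigenvectors.} Since $J_1^T$ has zero first column, its kernel is spanned by $W=(1,0)^T$. The kernel of $J_1$ is spanned by
$$V=\left(1,\ \frac{\beta(1+\alpha\xi-b\xi)}{c\xi(1+\alpha\xi)^2}\right)^T,$$
obtained by dividing the $(2,1)$ entry by $c\xi\phi(w)y_1$. The hypothesis $1+\alpha\xi-b\xi\neq0$ ensures that $V$ is genuinely oblique. Otherwise $V=(1,0)^T$ and the cross term in the second-order condition below disappears.

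\textbf{Step 4: Sotomayor's conditions.} Let $f_1=rx(1-x/k)-\phi xy/D(x)$ be the first component of the vector field.

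First, $W^Tf_{\phi}(E_1)=-xy/D\big|_{x=0}=0$, so no saddle-node occurs.

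Second, with the state held fixed,
$$W^T[Df_\phi V]=\partial_\phi\big(r-\phi y/(1+\alpha\xi)\big)\big|_{E_1}=-\frac{y_1}{1+\alpha\xi}\neq0,$$
because $y_1>0$.

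Third, the second-order quantity is
$$W^T[D^2f(V,V)]=f_{1,xx}v_1^2+2f_{1,xy}v_1v_2,$$
where, at $x=0$,
$$f_{1,xx}=-\frac{2r}{k}+\frac{2\phi y_1 b}{(1+\alpha\xi)^2},\qquad f_{1,xy}=-\frac{\phi}{1+\alpha\xi}.$$
Substituting $\phi y_1=r(1+\alpha\xi)$ at criticality and the value of $v_2$ gives an affine expression in $\phi(w)$. Requiring it to be nonzero should reduce to $\phi(w)\neq\frac{(1+\alpha\xi)[bdk-rc\xi(1+\alpha\xi)^2]}{\beta k(1+\alpha\xi-2b\xi)}$ together with $1+\alpha\xi-2b\xi\neq0$. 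Sotomayor's theorem then yields the transcritical bifurcation, with $E_1$ exchanging stability with $E^*$ as $\phi(w)$ crosses $\phi_1^{(TC)}(w)$.

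\textbf{Main obstacle.} The hardest step is the algebra in the third condition. Depending on whether $y_1$ is kept general or the critical relation $\phi y_1=r(1+\alpha\xi)$ is substituted, the degeneracy can be written in several equivalent forms. I would first keep $y_1$ general, express everything over the common denominator $k(1+\alpha\xi)^2$, and then eliminate $y_1$ using its expression in terms of $d$. This should produce the combination $bdk-rc\xi(1+\alpha\xi)^2$ and the factor $1+\alpha\xi-2b\xi$ appearing in the statement. If the exact form does not match, the nondegeneracy condition will still be stated as the nonvanishing of the computed expression.
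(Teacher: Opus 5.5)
Your route is the paper's route: the paper's proof simply defers to the Sotomayor computation done for $E_2$. Everything through the first two transversality conditions is correct. This covers the Jacobian at $E_1$, the critical value $\phi_1^{(TC)}(w)$, $W=(1,0)^T$, $V=(1,v_2)^T$, and your formulas for $f_{1,xx}$ and $f_{1,xy}$.

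The gap is the third condition. You do not compute it, and you assert that it ``should reduce'' to the printed hypothesis; neither of your two suggested routes gets there. Substituting $\phi y_1=r(1+\alpha\xi)$ gives
\[
W^T[D^2f(V,V)]=-\frac{2r}{k}+\frac{2rb}{1+\alpha\xi}-\frac{2\beta\phi(w)(1+\alpha\xi-b\xi)}{c\xi(1+\alpha\xi)^3},
\]
which contains no $d$ and no factor $1+\alpha\xi-2b\xi$. Eliminating $y_1$ via $c\xi\phi y_1=\frac{\beta\phi\xi}{1+\alpha\xi}-d$ instead gives
\[
W^T[D^2f(V,V)]=-\frac{2\big[\beta k\phi(w)(1+\alpha\xi-2b\xi)+bdk(1+\alpha\xi)+rc\xi(1+\alpha\xi)^3\big]}{kc\xi(1+\alpha\xi)^3},
\]
which vanishes at $\phi(w)=-\frac{(1+\alpha\xi)[bdk+rc\xi(1+\alpha\xi)^2]}{\beta k(1+\alpha\xi-2b\xi)}$. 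The statement's excluded value has the opposite sign on $bdk$, so it comes from a sign slip $-d\mapsto+d$ in $y_1$.

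Since $\phi(w)$ is already pinned at $\phi_1^{(TC)}(w)$, the genuine requirement is a condition on the other parameters. Inserting $\phi_1^{(TC)}(w)$, both correct forms are equivalent to
\[
Q:=dk(1+\alpha\xi-b\xi)+rc\xi(1+\alpha\xi)\big[(k+\xi)(1+\alpha\xi)-2bk\xi\big]\neq 0 .
\]
The printed condition instead has $1+\alpha\xi-3b\xi$ in place of $1+\alpha\xi-b\xi$ in the first term. The two are not equivalent, and the printed hypotheses do not secure Sotomayor's third condition. Take $\alpha=0$, $\xi=1$, $k=2$, $r=c=d=\beta=1$, $b=5/6$. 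Then $\phi_1^{(TC)}=2$ and every hypothesis of the statement holds (the excluded value is $-1/2$). Yet $f_{1,xx}=2/3$, $f_{1,xy}=-2$ and $v_2=1/6$, so $W^T[D^2f(V,V)]=0$.

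So the statement cannot be proved exactly as written. Your fallback of stating nondegeneracy as the nonvanishing of the computed expression is the right fix. Present it as a corrected hypothesis ($Q\neq 0$), not as agreement with the printed one.

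Two smaller points. First, the hypothesis $1+\alpha\xi-b\xi\neq0$ plays no role in Sotomayor's theorem: $V=(1,0)^T$ is admissible provided $f_{1,xx}\neq 0$. It only appears as a denominator if you solve the first displayed form for $\phi(w)$. Second, since $f_\phi(E_1)=(0,\,d\,y_1/\phi)^T\neq0$, the equilibrium $E_1$ moves with $\phi$. To apply the fixed-state conditions rigorously, shift $y\mapsto y-y_1(\phi)$ first. After the shift, $W^Tf_\phi=0$ still holds, the second quantity becomes $-\frac{1}{1+\alpha\xi}\frac{\mathrm{d}(\phi y_1)}{\mathrm{d}\phi}=-\frac{\beta}{c(1+\alpha\xi)^2}\neq 0$, and the third quantity is unchanged.
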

\begin{proof}
    Similar to the theorem \ref{th_TC_1}.
\end{proof}

\subsection{Saddle-node bifurcation}
Now, the system\eqref{eq:5}  can be represented as,
\begin{equation}\label{eq:19}
    \begin{cases}
        &\dfrac{dx}{dt}= rx\left(1-\frac{x}{k}\right) - \dfrac{\phi(w) x y}{1+\alpha \xi + bx + x^2}=p(x)\bigg(F(x)-y\bigg),\\
        &\dfrac{dy}{dt}= \dfrac{\beta \phi(w) (x+\xi)y}{1+\alpha \xi + bx + x^2} - dy -c\xi \phi(w) y^2=y\bigg(h(x)-c\xi\phi(w)y\bigg),\\
        &x(0)>0, \hspace{2mm} y(0)>0.
    \end{cases}
\end{equation}
where $p(x)=\dfrac{\phi(w)x}{1+\alpha\xi+bx+x^2}$ , $F(x)=\dfrac{r}{\phi(w)}\bigg(1-\dfrac{x}{k}\bigg)(1+\alpha\xi+bx+x^2)$ , $h(x)=\dfrac{\beta\phi(w)(x+\xi)}{1+\alpha\xi+bx+x^2}-d$.\\\\

The corresponding Jacobian matrix is\\
$$ J = \left(
\begin{array}{cc}
p'(x)(F(x)-y)+F'(x)p(x)& -p(x)\\
h'(x)y& (h(x)-c\xi\phi(w)y)-c\xi\phi(w)y\\
\end{array}
\right).$$\\
For interior $E^*=(x^*,y^*)$ with $y^*=F(x^*)$, we have that the Jacobian matrix becomes \\
$$ J^* = \left(
\begin{array}{cc}
F'(x^*)p(x^*)& -p(x^*)\\
h'(x^*)F(x^*)& -c\xi\phi(w)F(x^*)\\
\end{array}
\right).$$\\
Hence, 
\begin{equation}\label{eq:20}
tr(J^*)=F'(x^*)p(x^*) -c\xi\phi(w)F(x^*),
\end{equation}
\begin{equation}\label{eq:21}
    det(J^*)=p(x^*)F(x^*)\bigg(h'(x^*)-c\xi\phi(w)F'(x^*)\bigg).
\end{equation}\\\\
\begin{theorem}\label{th_SN}
   The model \eqref{eq:5} has a saddle-node bifurcation at interior equilibrium $E^*(x^*,y^*)$ when  $\phi(w)=\phi^{SN}_*(w)=\dfrac{rc\xi}{k}\bigg[\dfrac{c\xi(b+2x^*)y^*}{(1+\alpha\xi+bx^*+x^{*2})^2}-\dfrac{\beta(1+\alpha\xi-b\xi-2\xi x^*-x^{*2})}{(1+\alpha\xi+bx^*+x^{*2})^3}\bigg]^{-1} $  and  \\ $\phi(w)\neq-\dfrac{rx^*}{ky^*}\bigg[c\xi-\dfrac{x^*(b+2x^*)}{(1+\alpha\xi+bx^*+x^{*2})^2} \bigg]^{-1} $, $(x^*+\xi)(b+2 x^*)^2\neq(1+\alpha\xi+bx^*+x^{*2})(b+3 x^*+\xi)$ and 
   $\phi(w)\neq \dfrac{rc\xi(k-b-3x^*)(1+\alpha\xi+bx^*+x^{*2})^3}{\beta k[(x^*+\xi)(b+2x^*)^2-(1+\alpha\xi+bx^*+x^{*2})(b+3x^*+\xi)]}$.
\end{theorem}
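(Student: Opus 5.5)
We will use Sotomayor's theorem, as in the proof of Theorem \ref{th_TC_2}, now at the interior equilibrium $E^*=(x^*,y^*)$ with $y^*=F(x^*)$ and bifurcation parameter $\phi(w)$. The saddle-node condition is that $J^*$ has a simple zero eigenvalue, i.e. $\det(J^*)=0$ with $tr(J^*)\neq 0$. By \eqref{eq:21} and $p(x^*)F(x^*)>0$, the first condition is $h'(x^*)=c\xi\phi(w)F'(x^*)$.

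\textbf{Step 1: the critical value.} Write out
$$h'(x)=\frac{\beta\phi(w)\,(1+\alpha\xi-b\xi-2\xi x-x^2)}{(1+\alpha\xi+bx+x^2)^2},$$
and obtain $F'(x^*)$ by differentiating $F$. In $F'(x^*)$ we rewrite the factor $1-x^*/k$ through $y^*=F(x^*)$, so that $F'(x^*)$ is expressed in terms of $r/k$, $y^*$ and $(b+2x^*)$. Then solve the equation $h'(x^*)=c\xi\phi(w)F'(x^*)$ for $\phi(w)$. It should rearrange to
$$\frac{k}{rc\xi}=\frac{1}{\phi}\left[\frac{c\xi(b+2x^*)y^*}{D^2}-\frac{\beta(1+\alpha\xi-b\xi-2\xi x^*-x^{*2})}{D^3}\right],\qquad D=1+\alpha\xi+bx^*+x^{*2},$$
which gives $\phi^{SN}_*(w)$. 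Here we treat $(x^*,y^*)$ as a fixed equilibrium, as the statement implicitly does.

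\textbf{Step 2: nondegeneracy of the trace.} Impose $tr(J^*)=F'(x^*)p(x^*)-c\xi\phi F(x^*)\neq0$ using \eqref{eq:20}. Substituting $p$ and $F$ and solving the equality case for $\phi$ should produce the first excluded value, $\phi\neq-\frac{rx^*}{ky^*}[\cdots]^{-1}$. This keeps the zero eigenvalue simple, ruling out the Bogdanov--Takens case.

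\textbf{Step 3: the eigenvectors.} From the second row of $J^*$, the right null vector is
$$V=\bigl(c\xi\phi F(x^*),\; h'(x^*)F(x^*)\bigr)^T,$$
or equivalently $V=(1,F'(x^*))^T$ at the critical value, using the first row. The left null vector is
$$W=\bigl(h'(x^*)F(x^*),\; -F'(x^*)p(x^*)\bigr)^T,$$
up to scaling.

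\textbf{Step 4: the Sotomayor conditions.} Compute the two transversality quantities.

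For $W^Tf_{\phi}(E^*,\phi^{SN}_*)$, we have $f_\phi=\bigl(-xy/D,\;\beta(x+\xi)y/D-c\xi y^2\bigr)$. Using the equilibrium relations, this reduces to a nonzero multiple of a linear combination of $F'$ and $h'$. We expect it to be nonzero generically, or to be absorbed by the stated conditions.

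For $W^T[D^2f(E^*,\phi^{SN}_*)(V,V)]$, we need all second partials of $f_1=p(x)(F(x)-y)$ and $f_2=y(h(x)-c\xi\phi y)$. Here $f_{1,yy}=0$, $f_{1,xy}=-p'$, and $f_{1,xx}=p''(F-y)+2p'F'+pF''$, which equals $2p'F'+pF''$ at $E^*$. On the other side, $f_{2,xx}=yh''$, $f_{2,xy}=h'$ and $f_{2,yy}=-2c\xi\phi$.

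This is the main obstacle: expanding $F''$, $h''$ and $p'$ and simplifying with $\det J^*=0$. We expect the $F''$ term to contribute the factor $(k-b-3x^*)$, since $F''\propto (k-b-3x)$ up to sign. We expect $h''$ together with $h'$ to combine into $(x^*+\xi)(b+2x^*)^2-D(b+3x^*+\xi)$. The requirement that this quantity is nonzero should reproduce the last two hypotheses of Theorem \ref{th_SN}.

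Once both quantities are shown to be nonzero, Sotomayor's theorem (\cite{perko2013differential}) yields the saddle-node bifurcation.
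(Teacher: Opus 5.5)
Your route is the paper's: Sotomayor's theorem at $\det(J^*)=0$, $tr(J^*)\neq0$, with $V=(1,F'(x^*))^T$. Your Step 2 is correct. In the quadratic form, the $p'$ terms cancel. The $h'$ terms do not combine with $h''$; they drop out through $2F'(h'-c\xi\phi F')=0$. What remains is $p(x^*)F(x^*)\bigl(c\xi\phi F''(x^*)-h''(x^*)\bigr)$. Then $c\xi\phi F''=2rc\xi(k-b-3x^*)/k$ (here $\phi$ cancels) and $h''=2\beta\phi[(x^*+\xi)(b+2x^*)^2-D(b+3x^*+\xi)]/D^3$. These give exactly the last two hypotheses.

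The real gap is the first transversality condition. You leave it as ``nonzero generically, or absorbed by the stated conditions.'' No stated hypothesis covers it, and in fact none is needed. The predator equilibrium equation gives $\beta\phi(x^*+\xi)/D=d+c\xi\phi y^*$. Hence $f_\phi(E^*)=\bigl(-p(x^*)F(x^*)/\phi,\; dF(x^*)/\phi\bigr)^T$. With the paper's left vector $W=(c\xi\phi F(x^*),-p(x^*))^T$ this gives $W^Tf_\phi=-c\xi pF^2-dpF/\phi<0$ unconditionally.

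Your choice $W=(h'F,-F'p)^T$ causes trouble here. At $\det(J^*)=0$ it equals $F'(x^*)$ times the paper's vector. So both Sotomayor quantities pick up a factor $F'(x^*)$ and vanish when $F'(x^*)=0$. In that case $h'(x^*)=0$ as well, and your $W$ is the zero vector. Use the null vector built from the second column instead; it never degenerates, since $p(x^*)>0$.

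Your displayed relation in Step 1 is also inverted. With $F'(x^*)=-rD/(\phi k)+y^*(b+2x^*)/D$, the condition $h'=c\xi\phi F'$ gives $rc\xi/k=\phi[\cdots]$. That yields $\phi^{SN}_*=(rc\xi/k)[\cdots]^{-1}$. Your version $k/(rc\xi)=[\cdots]/\phi$ would instead give $\phi=(rc\xi/k)[\cdots]$.
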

\begin{proof}
For the system \eqref{eq:19} to  undergo a saddle-node bifurcation at the interior equilibrium point $E^*(x^*,y^*)$, the Jacobian matrix $J^*$ at $(x^*,y^*)$ must have one zero eigenvalue and one nonzero (positive or negative) eigenvalue. Therefore, $det(J^*)=0$ and $tr(J^*)\neq0$.\\

Now we obtain $V=(1, F'(x^*))^T$ and $W=(c\xi\phi(w)F(x^*),-p(x^*))^T$, where V and W represent the eigenvectors related to the zero eigenvalue of the matrices $J^*$ and $J^{*T}$ respectively.\\
Now\\
\begin{align*}
    &W^Tf_{\phi(w)}(x^*,\phi^{SN}_*(w))=-c\xi p(x^*)F(x^*)^2-\dfrac{dp(x^*)F(x^*)}{\phi^{SN}_*(w)}\neq0,\\
    &W^T[Df^2(x^*,\phi^{SN}_*(w))(V,V)]=p(x^*)F(x^*)\bigg(c\xi \phi^{SN}_*(w)F''(x^*)-h''(x^*)\bigg)\neq0 \hspace{2mm}iff\hspace{2mm}c\xi \phi^{SN}_*(w)F''(x^*)\neq h''(x^*),
\end{align*}
that is iff $\phi^{SN}_*(w)\neq \dfrac{rc\xi(k-b-3x^*)(1+\alpha\xi+bx^*+x^{*2})^3}{\beta k[(x^*+\xi)(b+2x^*)^2-(1+\alpha\xi+bx^*+x^{*2})(b+3x^*+\xi)]}$ and \\
$(x^*+\xi)(b+2 x^*)^2\neq(1+\alpha\xi+bx^*+x^{*2})(b+3 x^*+\xi).$\\ 
 As it satisfies the condition of Sotomayor's Theorem (\cite{perko2013differential}), system \eqref{eq:19} experiences a saddle-node bifurcation around $\phi(w)$ at $\phi(w)=\phi_*^{SN}(w)$.
 \end{proof}

\subsection{Hopf bifurcation}
For the system \eqref{eq:19} to undergo a Hopf bifurcation at the interior point $E^*=(x^*,y^*)$, the Jacobian $J^*$ at $(x^*, y^*)$ must have a pair of purely imaginary eigenvalues $\pm i\delta_0$. This implies that the determinant of the Jacobian matrix  $J^*$  is positive  and the trace of the Jacobian matrix $J^*$ vanishes.\\
That is, $det(J^*)=\delta_0^2>0$ and $tr(J^*)=0$.\\
\subsubsection{Direction of Hopf bifurcation and stability of bifurcating periodic solutions}     
 
    Under the condition of Hopf bifurcation, $tr(J^*)=0$ and $det(J^*)>0$, which implies as follows:\\
    \begin{equation}\label{eq:22}
        F'(x^*)p(x^*)=c\xi\phi(w)F(x^*).
    \end{equation}
     We begin by shifting coordinates via the affine transformation $X_1 = x-x^*$ and $Y_1 =y-F(x^*)$, which brings equilibrium $E^* =(x^*,F(x^*))$. Expanding the system in a Taylor series around $E^*$, we obtain the following reduced system:
    \begin{equation}\label{eq:25}
    \begin{cases}
     \dot{X_1} = p(x^*)F'(x^*)X_1-p(x^*)Y_1+\bigg(p'(x^*)F'(x^*)+\dfrac{p(x^*)F''(x^*)}{2}\bigg)X_1^{2}-p'(x^*)X_1Y_1+\\\bigg(\dfrac{F'''(x^*)p(x^*)}{6}+\dfrac{F''(x^*)p'(x^*)}{2}+\dfrac{F'(x^*)p''(x^*)}{2}\bigg)X_1^3-\dfrac{p''(x^*)}{2}X_1^2Y_1+O(|X_1,Y_1|^3),\\
    \dot{Y_1}=F(x^*)h'(x^*)X_1-c\xi\phi(w)F(x^*)Y_1+ \dfrac{F(x^*)h''(x^*)}{2}X_1^2+h'(x^*)X_1Y_1-c\xi\phi(w)Y_1^2 +\\ \dfrac{F(x^*)h'''(x^*)}{6}X_1^3+\dfrac{h''(x^*)}{2}X_1^2Y_1 + O(|X_1,Y_1|^3).
    \end{cases}
    \end{equation}
   Using the condition of Hopf bifurcation \eqref{eq:22}, the system \eqref{eq:19} becomes \\
   \begin{equation}\label{eq:23}
      \begin{cases}
          \dot{X_1}=a_{10}X_1+a_{01}Y_1+a_{20}X_1^2+a_{11}X_1Y_1+a_{30}X_1^3+a_{21}X_1^2Y_1+O(|X_1,Y_1|^4)\\
          \dot{Y_1}= b_{10}X_1-a_{10}Y_1+b_{20}X_1^2+b_{02}Y_1^2+b_{11}X_1Y_1+b_{30}X_1^3+b_{21}X_1^2Y_1+O(|X_1,Y_1|^4),
      \end{cases}
   \end{equation}
  where the coefficients are given by\\
 $a_{10}=p(x^*)F'(x^*)$,  $a_{01}=-p(x^*)$, 
 $a_{20}=p'(x^*)F'(x^*)+\dfrac{p(x^*)F''(x^*)}{2}$, $a_{11}=-p'(x^*)$,\\ $a_{30}=\dfrac{F'''(x^*)p(x^*)}{6}+\dfrac{F''(x^*)p'(x^*)}{2}+\dfrac{F'(x^*)p''(x^*)}{2},$ $a_{21}=-\dfrac{p''(x^*)}{2}$,
 $b_{10}=F(x^*)h'(x^*),$ \\ $b_{20}=\dfrac{F(x^*)h''(x^*)}{2}$,  $b_{02}=-c\xi\phi(w)$,  $b_{11}=h'(x^*)$,$b_{30}=\dfrac{F(x^*)h'''(x^*)}{6}$, $b_{21}=\dfrac{h''(x^*)}{2}$.\\
Hence, we can write \eqref{eq:23} as follows:\\
$$  \left(
\begin{array}{cc}
\dot{X_1} \\
\dot{Y_1}
\end{array}
\right)=J^*(x^*,y^*)\left(
\begin{array}{cc}
{X_1} \\
{Y_1}
\end{array}
\right)+\left(
\begin{array}{cc}
 a_{20}X_1^2+a_{11}X_1Y_1+a_{30}X_1^3+a_{21}X_1^2Y_1+O(|X_1,Y_1|^4)\\
b_{20}X_1^2+b_{02}Y_1^2+b_{11}X_1Y_1+b_{30}X_1^3+b_{21}X_1^2Y_1+O(|X_1,Y_1|^4)
\end{array}
\right).$$\\
Now, to reduce  system \eqref{eq:23} to a more canonical form, we apply the transformation\\
$X_1=X_2$ and $Y_1=\dfrac{-a_{10}}{a_{01}}X_2-\dfrac{\delta_0}{a_{01}}Y_2$, (\cite{li2025turing}).\\
Under this transformation, the above system becomes
\begin{equation}\label{eq:24}
    \begin{cases}
        \dot{X_2}=-\delta_0Y_2+A_{20}X_2^2+A_{11}X_2Y_2+A_{30}X_2^3+A_{21}X_2^2Y_2+O(|X_2,Y_2|^4),\\
        \dot{Y_2}=\delta_0X_2+B_{20}X_2^2+B_{11}X_2Y_2+B_{02}Y_2^2+B_{30}X_2^3+B_{21}X_2^2Y_2+O(|X_2,Y_2|^4),
    \end{cases}
\end{equation}
where the coefficients are:\\
$A_{20}=\bigg(a_{20}-\dfrac{a_{11}a_{10}}{a_{01}}\bigg)$, $A_{11}=-\dfrac{a_{11}\delta_0}{a_{01}}$, $A_{30}=\bigg(a_{30}-\dfrac{a_{21}a_{10}}{a_{01}}\bigg)$, $A_{21}=-\dfrac{a_{21}\delta_0}{a_{01}}$,\\
$B_{20}=-\dfrac{1}{\delta_0}\bigg(b_{20}a_{01}-b_{11}a_{10}+a_{20}a_{10}+\dfrac{b_{02}a_{10}^2}{a_{01}}-\dfrac{a_{11}a_{10}^{2}}{a_{01}}\bigg)$, $B_{11}=\bigg(\dfrac{a_{10}a_{11}}{a_{01}}+b_{11}-2\dfrac{b_{02}a_{10}}{a_{01}}\bigg)$, $B_{02}=-\dfrac{b_{20}\delta_0}{a_{01}}$,\\$B_{30}=\dfrac{a_{10}}{\delta_0}\bigg(\dfrac{a_{20}a_{10}}{a_{01}}-a_{30}-\dfrac{b_{30}a_{01}}{a_{10}} \bigg)$, $B_{21}=\bigg(\dfrac{a_{21}a_{10}}{a_{01}}+b_{21} \bigg).$\\

According to \cite{perko2013differential}[p-353], for  system \eqref{eq:24}, the Lyapunov number is given by \\
$ \alpha_1=\dfrac{3\pi}{2\delta_0^2}\bigg[3\delta_0A_{30}+\delta_0B_{21}+A_{11}A_{20}-B_{11}(B_{20}+B_{02})-2A_{20}B_{20}\bigg]$. 
 
 The direction of the Hopf bifurcation is determined by the sign of the first Lyapunov coefficient $\alpha_1<0$, which yields a supercritical Hopf with stable small-amplitude periodic solutions, while $\alpha_1>0$ gives a subcritical Hopf with unstable cycles. Thus, the stability of the bifurcating periodic solution directly depends on $\alpha_1$, with supercritical Hopf yielding stable limit cycles and subcritical Hopf yielding unstable ones.\\
 Numerically, it is shown that, taking ($b,c,d,r, k, \alpha, \beta, \xi$) = (0.5,0.1,0.1,0.4985,2.5,0.5,0.285,1.5) and for some $\phi(w)$, $\alpha_1>0$.\\

  Hence, the system \eqref{eq:5} exhibits the Hopf Bifurcation at $E^*(x^*,y^*)$, which is subcritical as  $\alpha_1>0$ (fig.-\ref{Hopf_bifurcation}).

 \section{Global bifurcation: Cusp singularity and Bogdanov-Takens bifurcations}\label{BT_bif}

For the system \eqref{eq:19} to undergo a Bogdanov-Takens bifurcation at the interior point $E^*=(x^*,y^*)$, the Jacobian $J^*$ at $(x^*, y^*)$ must have a double zero eigenvalue in a single Jordan
block. This implies that both the determinant and the trace of the Jacobian matrix $J^*$ vanish
simultaneously. \\
That is,
$det (J^*) = 0$ and $tr (J^*) = 0$.\\
Now \begin{equation}\label{tr=0}
tr(J^*) =0  \implies  \phi(w)=\dfrac{r x^*}{k y^*}\bigg(\dfrac{x^*(b+2x^*)}{(1+\alpha \xi+bx^*+x^{*2})^2}-c\xi \bigg)^{-1}
\end{equation} and \\
\begin{equation}\label{det=0}
det(J^*)=0  \implies  \dfrac{c\xi}{k}\bigg[1+\dfrac{(x^*-k)(b+2 x^*)}{(1+\alpha\xi+bx^*+x^{*2})} \bigg]+\dfrac{\beta(k-x^*)(1+\alpha\xi-b\xi-2 x^*\xi-x^{*2})}{k y^*(1+\alpha\xi+bx^*+x^{*2})^2}=0.
\end{equation}
\begin{theorem}
    If $(\xi,\phi(w))=(\xi^{CP},\phi^{CP}(w))$, then the positive equilibrium $E^*$ =$ (x^*, F(x^*))$ is a cusp  of co-dimension-2. 
\end{theorem}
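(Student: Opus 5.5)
The plan is to use the standard normal-form reduction for a nilpotent singularity. First I fix $(\xi,\phi(w))=(\xi^{CP},\phi^{CP}(w))$ as the parameter pair solving \eqref{tr=0} and \eqref{det=0} simultaneously, with $y^*=F(x^*)$. At this pair $J^*$ has a double zero eigenvalue. I then check that $J^*\neq 0$: its off-diagonal entry is $-p(x^*)\neq 0$ since $x^*>0$, so $J^*$ is a single nilpotent Jordan block. This is the linear part required in \cite{perko2013differential}.

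Next I translate $E^*$ to the origin as in \eqref{eq:25}. With $a_{10}=p(x^*)F'(x^*)$, $a_{01}=-p(x^*)$, $b_{10}=F(x^*)h'(x^*)$, trace zero gives $b_{01}=-a_{10}$, and determinant zero gives $b_{10}=-a_{10}^2/a_{01}$. I apply the linear change $X_1=u$, $Y_1=\frac{a_{10}}{-a_{01}}u+\frac{1}{a_{01}}v$, chosen so that the linear part becomes $\dot u=v$, $\dot v=0$. The expanded system then reads
\begin{equation*}
\dot u = v + \tilde a_{20}u^2+\tilde a_{11}uv+\tilde a_{02}v^2+O(|u,v|^3),\qquad \dot v=\tilde b_{20}u^2+\tilde b_{11}uv+\tilde b_{02}v^2+O(|u,v|^3),
\end{equation*}
with $\tilde a_{ij},\tilde b_{ij}$ explicit in the $a_{ij},b_{ij}$ listed after \eqref{eq:23}. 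Using the near-identity changes $\bar u=u$, $\bar v=v+\tilde a_{20}u^2+\tilde a_{11}uv+\dots$ and a time rescaling, as in Perko/Kuznetsov, this becomes $\dot u=v$, $\dot v=D u^2+E uv+O(|u,v|^3)$, where $D=\tilde b_{20}$ and $E=\tilde b_{11}+2\tilde a_{20}$.

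The decisive step is the sign data. If $D\neq 0$ and $E\neq 0$, the point would be a Bogdanov–Takens (codimension-2) cusp, and the statement follows. I expect $D$ to be proportional to $c\xi\phi F''-h''$, the quadratic nondegeneracy quantity already computed in Theorem \ref{th_SN}. So $D\neq0$ amounts to the condition excluded there, i.e. $(x^*+\xi)(b+2x^*)^2\neq(1+\alpha\xi+bx^*+x^{*2})(b+3x^*+\xi)$ together with the corresponding $\phi$-inequality, evaluated at $(\xi^{CP},\phi^{CP})$. For $E$, I combine $a_{20},a_{11},b_{11},b_{02}$ with the relation $F'p=c\xi\phi F$ from \eqref{eq:22}.

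The main obstacle is that $E$ is an unwieldy rational expression in $x^*,\xi,\phi$, and its sign cannot realistically be settled in closed form. My first attempt would be to simplify $E$ using both \eqref{tr=0} and \eqref{det=0} and express it through $p',h'$. Failing that, I would verify $D\,E\neq0$ numerically at the computed $(\xi^{CP},\phi^{CP})$ for the parameter set used in the Hopf subsection. I would then state the nondegeneracy $D E\neq 0$ as the (generic) hypothesis under which $E^*$ is a cusp of codimension 2.
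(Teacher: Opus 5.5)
Your proposal is correct and follows essentially the same route as the paper. Both translate $E^*$ to the origin, use $\mathrm{tr}(J^*)=\det(J^*)=0$ to reduce the linear part to a nilpotent Jordan block (the paper's change $X_1=X_2+Y_2/b_{10}$, $Y_1=Y_2/a_{10}$ merely swaps the roles of the two variables relative to yours), pass to the normal form $\dot X_3=D_1Y_3^2+D_2X_3Y_3$, $\dot Y_3=X_3$, and conclude from $D_1D_2\neq0$. The paper, like you, never verifies this nondegeneracy in closed form; it simply asserts it at $(\xi^{CP},\phi^{CP}(w))$, so your identification $D=\tfrac12 p(x^*)F(x^*)\bigl(c\xi\phi(w)F''(x^*)-h''(x^*)\bigr)$ (which does hold under the trace and determinant conditions) and your treatment of $DE\neq0$ as a numerically checked or generic hypothesis is, if anything, more transparent.
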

\begin{proof}
    We begin by shifting coordinates via the affine transformation $X_1 = x-x^*$ and $Y_1 =y-F(x^*)$, which brings the equilibrium $E^*$ = $(x^*,F(x^*))$. Expanding the system in a Taylor series around $E^*$, we obtain the following reduced system:
    \begin{equation}\label{eq:25}
    \begin{cases}
     \dot{X_1} = p(x^*)F'(x^*)X_1-p(x^*)Y_1+\bigg(p'(x^*)F'(x^*)+\dfrac{p(x^*)F''(x^*)}{2}\bigg)X_1^{2}-p'(x^*)X_1Y_1+O(|X_1,Y_1|^3),\\
    \dot{Y_1}=F(x^*)h'(x^*)X_1-c\xi\phi(w)F(x^*)Y_1+ \dfrac{F(x^*)h''(x^*)}{2}X_1^2+h'(x^*)X_1Y_1-c\xi\phi(w)Y_1^2 + O(|X_1,Y_1|^3).
    \end{cases}
    \end{equation}
    Under the condition of double zero eigenvalues of $J^*$,\\
    $tr(J^*)=0$ and $det(J^*)=0$, which implies as follows:\\
    \begin{equation}\label{eq:26}
        F'(x^*)p(x^*)=c\xi\phi(w)F(x^*),
    \end{equation}
    \begin{equation}\label{eq:27}
        h'(x^*)=c\xi\phi(w)F'(x^*).
    \end{equation}
Using the condition \eqref{eq:26}, system \eqref{eq:25} becomes \\
   \begin{equation}\label{eq:28}
      \begin{cases}
          \dot{X_1}=a_{10}X_1+a_{01}Y_1+a_{20}X_1^2+a_{11}X_1Y_1+O(|X_1,Y_1|^3),\\
          \dot{Y_1}= b_{10}X_1-a_{10}Y_1+b_{20}X_1^2+b_{02}Y_1^2+b_{11}X_1Y_1+O(|X_1,Y_1|^3),
      \end{cases}
   \end{equation}
   where the coefficients are given by\\
 $a_{10}=p(x^*)F'(x^*)$,  $a_{01}=-p(x^*)$, 
 $a_{20}=p'(x^*)F'(x^*)+\dfrac{p(x^*)F''(x^*)}{2}$, $a_{11}=-p'(x^*)$, \\
 $b_{10}=F(x^*)h'(x^*)$, $b_{20}=\dfrac{F(x^*)h''(x^*)}{2}$,  $b_{02}=-c\xi\phi(w)$, $b_{11}=h'(x^*)$. \\
 From the equation \eqref{eq:26} \\
 $F'(x^*)p(x^*)=c\xi\phi(w)F(x^*)$ $>0$, as $F(x^*)>0$. Hence $F'(x^*)p(x^*)\neq0$ implies $a_{10}\neq0$.\\
 From the equation \eqref{eq:27}\\
  $h'(x^*)=c\xi\phi(w)F'(x^*)\neq0$.Therefore $b_{10}=F(x^*)h'(x^*)\neq0$.\\
Now to reduce the system \eqref{eq:28} to more general canonical form, we apply the transformation\\
$X_1=X_2+\dfrac{Y_2}{b_{10}}$ and $Y_1=\dfrac{Y_2}{a_{10}}$.\\
Under this transformation, the system \eqref{eq:28} becomes \\
\begin{equation}\label{eq:29}
    \begin{cases}
        
        \dot{X_2}=d_{01}Y_2+ d_{20}X_2^2+d_{02}Y_2^2+d_{11}X_2Y_2+O(|X_2,Y_2|^3),\\
        \dot{Y_2}= c_{10}X_2+c_{20}X_2^2+c_{02}Y_2^2+c_{11}X_2Y_2+O(|X_2,Y_2|^3),\\
    \end{cases}
\end{equation}
where the coefficients are as follows:\\
$c_{10}=a_{10}b_{10}$, $c_{20}=a_{10}b_{20}$, $c_{02}=\bigg(\dfrac{a_{10}b_{20}}{b_{10}^2}+\dfrac{b_{02}}{a_{10}}+\dfrac{b_{11}}{b_{10}}\bigg)$, $c_{11}=\bigg(\dfrac{2a_{10}b_{20}}{b_{10}}+b_{11}\bigg)$\\
$d_{01}=\bigg(\dfrac{a_{10}}{b_{10}}+\dfrac{a_{01}}{a_{10}}\bigg)$, $d_{20}=\bigg(a_{20}-\dfrac{a_{10}b_{20}}{b_{10}}\bigg)$, $d_{02}=\bigg(\dfrac{a_{11}}{a_{10}b_{10}}+\dfrac{a_{20}}{b_{10}^2}-\dfrac{a_{10}b_{20}}{b_{10}^3}-\dfrac{b_{02}}{a_{10}b_{10}}-\dfrac{b_{11}}{b_{10}^2}\bigg)$,\\
$d_{11}=\bigg(\dfrac{a_{11}}{a_{10}}+\dfrac{2a_{20}}{b_{10}}-\dfrac{2a_{10}b_{20}}{b_{10}^2}-\dfrac{b_{11}}{b_{10}}\bigg)$.\\\\
Now, since $det(J^*)=0$, implies that $a_{10}^2+b_{10}a_{01}=0$.
That is, $d_{01}=\bigg(\dfrac{a_{10}}{b_{10}}+\dfrac{a_{01}}{a_{10}}\bigg)=0.$\\\\

Then system\eqref{eq:29} becomes
\begin{equation}\label{eq:30}
    \begin{cases}
    \dot{X_2}=d_{20}X_2^2+d_{02}Y_2^2+d_{11}X_2Y_2+O(|X_2,Y_2|^3),\\
        \dot{Y_2}= c_{10}X_2+c_{20}X_2^2+c_{02}Y_2^2+c_{11}X_2Y_2+O(|X_2,Y_2|^3).
    \end{cases}
\end{equation}
Making a $C_{\infty}$-change of variables $X_3=c_{10}X_2+c_{02}Y_2^2-d_{20}X_2Y_2$, $Y_3=Y_2-\dfrac{c_{11}+d_{20}}{2c_{10}}Y_2^2-\dfrac{C_{20}}{C_{10}} X_2Y_2$ in a small neighborhood of $(0, 0)$, system \eqref{eq:30} transforms to  the Standard normal form
\begin{equation}\label{eq:31}
    \begin{cases}
    \dot{X_3}=D_1Y_3^2+D_2X_3Y_3+O(|X_3,Y_3|^3),   \\
   \dot{Y_3}=X_3+O(|X_3,Y_3|^3), \\
    \end{cases}
\end{equation}
where $D_1=c_{10}d_{02}$, $D_2=(d_{11}+2c_{02})$.\\
Notice that, for the values of $\phi^{CP}(w)$ and $\xi^{CP}$, $D_1D_2\neq0$, where $\phi^{CP}(w)$ and $\xi^{CP}$ satisfy equations \eqref{eq:26} and \eqref{eq:27}.\\
Hence the positive equilibrium $E^*$ is a cusp of co-dimension 2.
\end{proof}

We now discuss if the system \eqref{eq:19} undergoes Bogdanov-Takens bifurcation in a small neighborhood of $(\phi(w),\xi)$,
choosing $\phi(w)$ and $\xi$ as bifurcation parameters, we perform a bifurcation analysis of the system \eqref{eq:19} as $(\phi(w),\xi)$ varies near $(\phi^*(w),\xi^*)$, where $\phi^*(w)$ and $\xi^*$ satisfies the equations \eqref{tr=0} and \eqref{det=0}. We consider the following unfolding system of the system \eqref{eq:19}\\ 
\begin{equation}\label{eq:32}
 \begin{cases}
        &\dfrac{dx}{dt}= rx\left(1-\frac{x}{k}\right) - \dfrac{(\phi(w)+\lambda_1) x y}{1+\alpha \xi + bx + x^2}=p_1(x)\bigg(F_1(x)-y\bigg),\\
        &\dfrac{dy}{dt}= \dfrac{\beta \phi(w) (x+\xi)y}{1+\alpha \xi + bx + x^2} - dy -c(\xi+\lambda_2) \phi(w) y^2=y\bigg(h(x)-c(\xi+\lambda_2)\phi(w)y\bigg), \\
        &x(0)>0, \hspace{2mm} y(0)>0.
 \end{cases}
\end{equation}

where $\lambda=(\lambda_1,\lambda_2)$ are small parameters and  $p_1(x)=\dfrac{(\phi(w)+\lambda_1)x}{1+\alpha\xi+bx+x^2}$, \\ $F_1(x)=\dfrac{r}{(\phi(w)+\lambda_1)}\bigg(1-\dfrac{x}{k}\bigg)(1+\alpha\xi+bx+x^2)$, $h(x)=\dfrac{\beta\phi(w)(x+\xi)}{1+\alpha\xi+bx+x^2}-d$.\\

Now we have the following theorem:\\
\begin{theorem}
   If we vary $(\lambda_1,\lambda_2)$ in a small neighborhood of origin, then the system \eqref{eq:19} undergoes Bogdanov-Takens bifurcation in a small neighborhood of $E^*$. 
\end{theorem}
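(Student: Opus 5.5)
We will follow the standard route for a codimension-two Bogdanov--Takens unfolding, as in the cusp theorem above: reduce the unfolding \eqref{eq:32} near $E^*$ to the versal normal form
\[
\dot u = v,\qquad \dot v = \mu_1(\lambda)+\mu_2(\lambda)\,v + u^2 \pm u v + O(|u,v|^3),
\]
and then check that the map $\lambda=(\lambda_1,\lambda_2)\mapsto(\mu_1,\mu_2)$ is a local diffeomorphism at the origin.

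\textbf{Step 1: Taylor expansion with parameters.}
Translate $X_1=x-x^*$, $Y_1=y-F(x^*)$, keeping $\lambda$ as a parameter. Expand \eqref{eq:32} to second order in $(X_1,Y_1)$ and first order in $\lambda$. This gives
\[
\dot X_1 = a_{00}(\lambda)+a_{10}(\lambda)X_1+a_{01}(\lambda)Y_1+\cdots,
\]
and similarly for $\dot Y_1$. The constant terms are explicit:
\begin{itemize}
\item $a_{00}=-\lambda_1 x^*F(x^*)/(1+\alpha\xi+bx^*+x^{*2})$,
\item $b_{00}=-c\lambda_2\phi(w)F(x^*)^2$.
\end{itemize}
At $\lambda=0$ the linear part has zero trace and zero determinant, by \eqref{eq:26} and \eqref{eq:27}. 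Its quadratic coefficients agree with those in the proof of the cusp theorem.

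\textbf{Step 2: Reduction to normal form.}
Apply the same linear change as in that proof, $X_1=X_2+Y_2/b_{10}$, $Y_1=Y_2/a_{10}$, now with $\lambda$-dependent coefficients. This produces a system whose linear part at $\lambda=0$ is nilpotent. Then carry out, in order:
\begin{enumerate}
\item a near-identity $C^\infty$ change, as used above, to remove the $X^2$-type terms from the first equation;
\item a parameter-dependent translation $u\mapsto u-\kappa(\lambda)$ to eliminate the $u$-linear term in $\dot v$;
\item a time rescaling $t\mapsto t/(1+\cdots)$ and a scaling of $u,v$ to normalize the coefficients of $u^2$ and $uv$.
\end{enumerate}
The nondegeneracy conditions $D_1D_2\neq0$ from the cusp theorem (with $D_1=c_{10}d_{02}$ and $D_2=d_{11}+2c_{02}$) are exactly what guarantee that this rescaling is possible. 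The result is the form displayed above, with $\mu_1,\mu_2$ smooth in $\lambda$ and $\mu_1(0)=\mu_2(0)=0$.

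\textbf{Step 3: Transversality (the main obstacle).}
We must verify
\[
\det\frac{\partial(\mu_1,\mu_2)}{\partial(\lambda_1,\lambda_2)}\bigg|_{\lambda=0}\neq0 .
\]
The plan is to compute $\mu_1,\mu_2$ only to first order in $\lambda$.
\begin{itemize}
\item $\mu_1$ is essentially a combination of $a_{00}$ and $b_{00}$, transformed by the linear change and the scaling.
\item $\mu_2$ is essentially the first-order change of the trace, $\partial_\lambda\, tr(J^*)$, corrected by the translation shift $\kappa(\lambda)$.
\end{itemize}
Because $\lambda_1$ enters only the prey equation and $\lambda_2$ only the predator self-limitation term, the Jacobian matrix should be close to triangular. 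Its determinant should then reduce to an expression in $x^*, F(x^*), p(x^*), F'(x^*), c,\xi,\phi(w)$.

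We expect this determinant to be generically nonzero. Where a closed form is unwieldy, we will state it as a nondegeneracy condition. We will then confirm it numerically at a parameter set satisfying \eqref{tr=0} and \eqref{det=0}, in the same way the Hopf section checks $\alpha_1>0$ numerically.

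\textbf{Step 4: Conclusion.}
Once transversality holds, the standard Bogdanov--Takens theory (Perko; Kuznetsov) applies. For $(\lambda_1,\lambda_2)$ near the origin there are:
\begin{itemize}
\item a saddle-node bifurcation curve,
\item a Hopf bifurcation curve,
\item a homoclinic bifurcation curve,
\end{itemize}
all emanating from $\lambda=0$. Hence \eqref{eq:19} undergoes a Bogdanov--Takens bifurcation in a small neighborhood of $E^*$.

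The hardest part is the bookkeeping in Step 3. The $\lambda$-dependence of the translation $\kappa(\lambda)$ feeds into $\mu_2$, so first-order terms in $\lambda$ must be tracked carefully through both the nonlinear change of variables and the rescaling.
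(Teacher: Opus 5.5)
Your outline is correct and follows the same overall strategy as the paper. Both translate to $E^*$, reduce the unfolding \eqref{eq:32} to $\dot u=v$, $\dot v=\mu_1+\mu_2v+u^2\pm uv+O(|u,v|^3)$, check that $\lambda\mapsto(\mu_1,\mu_2)$ is a local diffeomorphism, and read off the saddle-node, Hopf and homoclinic curves.

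The mechanics of the normalization differ. The paper passes to $y_2=u_{10}x_1+u_{01}y_1$ and applies the Malgrange preparation theorem to factor $\theta(Z_1,\lambda)=(\eta_1+\eta_2Z_1+Z_1^2)\Phi(Z_1,\lambda)$. It then rescales time by $\sqrt{\Phi}$ and finally shifts $Q_1$ to remove the linear $Q_1$ term. The preparation theorem gives the quadratic factor exactly, so the paper gets the explicit formulas $\mu_1=\eta_1-\eta_2^2/4$ and $\mu_2=\eta_3-\frac12\eta_2\eta_4$. You instead use an explicit $\lambda$-dependent translation plus time and space scaling, in the style of Kuznetsov. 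That route is more elementary and makes clear that $D_1D_2\neq0$ is exactly what the normalization needs.

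Your constant terms $a_{00}$ and $b_{00}$ are right, and your transversality target is the correct one. Since $\eta_1,\eta_2,\eta_3$ vanish at $\lambda=0$, what must be nonsingular is $(D\eta_1,\,D\eta_3-\frac12\eta_4(0)D\eta_2)$. The paper instead states its determinant, asserted nonzero, for $(\eta_1,\eta_2)$. In both treatments, however, non-vanishing of the determinant is effectively an extra nondegeneracy hypothesis, not something proved.

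Two corrections to your Step 3:
\begin{enumerate}
\item The ``nearly triangular'' expectation is unfounded. To first order, $\mu_1$ is proportional to $b_{10}a_{00}-a_{10}b_{00}$, i.e.\ the constant terms paired with the left null vector $(b_{10},-a_{10})$ of $J^*$. So both $\lambda_1$ and $\lambda_2$ enter $\mu_1$ with nonzero coefficients.
\item You can avoid tracking $\kappa(\lambda)$ through all the coordinate changes by checking Kuznetsov's condition (BT.3) directly. This asks for regularity at $(x^*,y^*,0,0)$ of the map $(x,y,\lambda_1,\lambda_2)\mapsto(f,g,\mathrm{tr}\,J,\det J)$, a $4\times4$ determinant in the original variables. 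The parameter derivatives are $f_{\lambda_1}=-x^*y^*/(1+\alpha\xi+bx^*+x^{*2})$, $f_{\lambda_2}=0$, $g_{\lambda_1}=0$ and $g_{\lambda_2}=-c\phi(w)y^{*2}$.
\end{enumerate}
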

\begin{proof}
    When $\lambda_1=\lambda_2=0$, the equilibrium $E^*$ is a cusp of dimension 2.
    Now we consider Bogdanov-Takens bifurcation of the system \eqref{eq:32} in a small neighborhood of $E^*(x^*,y^*)$. When $(\lambda_1,\lambda_2)$ vary in a small neighborhood of origin, let $x_1=x-x^*$ and $y_1=y-y^*.$ Then the system \eqref{eq:32} becomes\\
\begin{equation}\label{eq:33}
   \begin{cases}
      \dot{x_1}=u_{10}x_1+u_{01}y_1+u_{11}x_1y_1+u_{20}x_1^2+O(|x_1,y_1|^3),\\
      \dot{y_1}=v_{00}+v_{10}x_1+v_{01}y_1+v_{11}x_1y_1+v_{20}x_1^2+v_{02}y_1^2+O(|x_1,y_1|^3),
      \end{cases}
\end{equation}
   where $u_{10}=p_1(x^*)F_1'(x^*)$  , $u_{01}=-p_1(x^*)$, 
 $u_{20}=p_1'(x^*)F_1'(x^*)+\dfrac{p_1(x^*)F_1''(x^*)}{2}$, $u_{11}=-p_1'(x^*)$ \\\\
$v_{00}=-c\lambda_2\phi(w)F_1(x^*)$, $v_{10}=F_1(x^*)h'(x^*)$ , $v_{01}=-c(\xi +\lambda_2)\phi(w)F_1(x^*)-c\lambda_2\phi(w)F_1(x^*)$ ,\\\\ $v_{20}=\dfrac{F_1(x^*)h''(x^*)}{2}$,  $v_{02}=-c(\xi +\lambda_2)\phi(w)$, $v_{11}=h'(x^*)$. \\
We now apply the transformation $x_2=x_1$ and $y_2=u_{10}x_1+u_{01}y_1$ as in \cite{gong2014bogdanov,huang2014bifurcations}. The system \eqref{eq:33} becomes:\\
\begin{equation}\label{eq:34}
    \begin{cases}
    \dot{x_2}=&y_2+\bigg(u_{20}-\dfrac{u_{11}u_{10}}{u_{01}}\bigg)x_2^2+\dfrac{u_{11}}{u_{01}}x_2y_2+O(|x_2,y_2|^3),\\
       \dot{ y_2}=&u_{01}v_{00}+(u_{01}v_{10}-u_{10}v_{01})x_2+(u_{10}+v_{01})y_2+\bigg(u_{10}u_{20}-\dfrac{u_{11}u_{10}^2}{u_{01}}+v_{20}u_{01}+\dfrac{v_{02}u_{10}^2}{u_{01}}-v_{11}u_{10} \bigg)x_2^2+\\&\bigg(\dfrac{u_{11}u_{10}}{u_{01}}+v_{11}-2v_{02}u_{10} \bigg)x_2y_2+\dfrac{v_{02}}{u_{01}}y_2^2+O(|x_2,y_2|^3).
    \end{cases}
\end{equation}
Now, we using the following $C^{\infty}$ change of co-ordinates in a small neighborhood of (0,0) as \\
\begin{equation*}
   \begin{split}
   & x_3=x_2-\bigg(\dfrac{u_{11}}{2u_{01}}+\dfrac{v_{02}}{2u_{01}} \bigg)x_2^2,\\
    &y_3=y_2+\bigg(u_{20}-\dfrac{u_{11}u_{10}}{u_{01}} \bigg)x_2^2-\dfrac{v_{02}}{u_{01}}x_2y_2.
    \end{split}
\end{equation*}
Hence, the system \eqref{eq:34} becomes\\
\begin{equation}\label{eq:35}
    \begin{cases}
    \dot{x_3}=&y_3+O(|x_3,y_3|^3),\\
    \dot{y_3}=&u_{01}v_{00}+(u_{01}v_{10}-u_{10}v_{01}-v_{02}v_{00})x_3+(u_{10}+v_{01})y_3+\bigg(v_{11}-2v_{02}u_{10}+2u_{20}-\dfrac{u_{11}u_{10}}{u_{01}}\bigg)x_3y_3\\&+A_1(\lambda)x_3^2+O(|x_3,y_3|^3),
    \end{cases}
\end{equation}
where $A_1(\lambda)=u_{10}u_{20}-\dfrac{u_{11}u_{10}^2}{u_{01}}+v_{20}u_{01}+\dfrac{v_{02}u_{10}^2}{u_{01}}-v_{11}u_{10}-\dfrac{v_{02}}{u_{01}}(u_{01}v_{10}-u_{10}v_{01})-(u_{10}+v_{01})\bigg(u_{20}-\dfrac{u_{11}u_{10}}{u_{01}}\bigg)+\bigg(\dfrac{u_{11}}{2u_{01}}+\dfrac{v_{02}}{2u_{01}} \bigg) \bigg(u_{01}v_{10}-u_{10}v_{01}-v_{02}v_{00}\bigg)$. \\

Now, we choose another $C^{\infty}$ change of co-ordinates near the origin as
$Z_1=x_3$ and $Z_2=y_3+O(|x_3,y_3|^3)$ and obtained from system \eqref{eq:35} that\\
\begin{equation}\label{eq:36}
    \begin{cases}
        \dot{Z_1}=&Z_2,\\
        \dot{Z_2}=&u_{01}v_{00}+(u_{01}v_{10}-u_{10}v_{01}-v_{02}v_{00})Z_1+(u_{10}+v_{01})Z_2+A_1(\lambda)Z_1^2+G_1(Z_1)+Z_2G_2(Z_1)+ \\&\bigg(v_{11}-2v_{02}u_{10}+2u_{20}-\dfrac{u_{11}u_{10}}{u_{01}}\bigg)Z_1Z_2+Z_2^2G_{3}(Z_1,Z_2),
    \end{cases}
\end{equation}
where $G_1,G_2 $ are $C^{\infty}$ in $Z_1$ and $G_3$ is $C^{\infty}$ in $(Z_1,Z_2)$ and $G_1(Z_1)=O(|Z_1|^3)$, $G_2(Z_1)=O(|Z_1|^2)$, $G_3(Z_1,Z_2)=O(|Z_1,Z_2|)$. We rewrite the system \eqref{eq:36} as follows:\\
\begin{equation}\label{eq:37}
\begin{cases}
    \dot{Z_1}=Z_2,\\
        \dot{Z_2}=\theta(Z_1,\lambda)+(u_{10}+v_{01})Z_2+\bigg(v_{11}-2v_{02}u_{10}+2u_{20}-\dfrac{u_{11}u_{10}}{u_{01}}\bigg)Z_1Z_2+Z_2G_2(Z_1)+Z_2^2G_{3}(Z_1,Z_2),
\end{cases}
\end{equation}
where  $\theta(Z_1,\lambda)=u_{01}v_{00}+(u_{01}v_{10}-u_{10}v_{01}-v_{02}v_{00})Z_1+A_1(\lambda)Z_1^2+G_1(Z_1)$.\\
Now, for certain conditions of $\phi(w)$ and $\xi$, we have \\
$A_1(0)=u_{10}u_{20}-\dfrac{u_{11}u_{10}^2}{u_{01}}+v_{20}u_{01}+\dfrac{v_{02}u_{10}^2}{u_{01}}-v_{11}u_{10}>0$. \\
Applying the Malgrange Preparation Theorem [\cite{chow2012methods,huang2014bifurcations}] to $\theta(Z_1,\lambda)$, we have\\
$\theta(Z_1,\lambda)=\bigg(\eta_1(\lambda)+\eta_2(\lambda )Z_1+Z_1^2 \bigg)\Phi(Z_1,\lambda)$, \\
where $\eta_1(\lambda)=\dfrac{u_{01}v_{00}}{A_1(\lambda)}$,
 $\eta_2(\lambda)=\dfrac{u_{01}v_{10}-u_{10}v_{01}-v_{02}v_{00}}{A_1(\lambda)}$ and $\Phi(0,\lambda)=A_1(\lambda)$ and $\Phi(Z_1,\lambda)$ is a power series in $Z_1$, whose coefficients depend on the parameters $\lambda=(\lambda_1,\lambda_2)$.\\
Now we consider $Q_1=Z_1$ , $Q_2=\dfrac{Z_2}{\sqrt{\Phi(Z_1,\lambda)}}$ and $T=\int _0^t\sqrt(\Phi(Z_1(s),\lambda))\,ds$. Then the system \eqref{eq:37} becomes\\
\begin{equation}\label{eq:38}
    \begin{cases}
        \dot{Q_1}=Q_2,\\
        \dot{Q_2}=\eta_1(\lambda)+\eta_2(\lambda )Q_1+Q_1^2+\eta_3(\lambda)Q_2+\eta_4(\lambda)Q_1Q_2+ R(Q_1,Q_2,\lambda),
    \end{cases}
\end{equation}
where $\eta_3(\lambda)=\dfrac{(u_{10}+v_{01})}{\sqrt{(\Phi(0,\lambda))}}$, $\eta_4(\lambda)=\dfrac{B}{\sqrt{(\Phi(0,\lambda))}}$, $B=v_{11}+2u_{20}-\dfrac{u_{11}u_{10}}{u_{01}}-2v_{02}u_{10}-\dfrac{v_{02}(u_{10}+v_{01})}{u_{01}}$ and $R(Q_1,Q_2,0)$ is a power series of $(Q_1,Q_2)$ with the term $Q_1^i Q_2^j$ such that $i+j\geq 3$, $i\geq 4$ and $j\geq2$.\\
By numerically we have seen that, for some parameter values of $\phi(w)$ and $\xi$, $B>0$. Hence with certain condition $\eta_4(\lambda)>0$.

Now we make an another affine transformation as $X=Q_1+\dfrac{\eta_1(\lambda)}{2}$ and $Y=Q_2.$ Then using Taylor's series expansion, we have the system \eqref{eq:38} as follows: \\
\begin{equation}\label{eq:39}
    \begin{cases}
        \dot{X}=Y,\\
        \dot{Y}=\mu_1(\lambda_1,\lambda_2)+\mu_2(\lambda_1,\lambda_2)Y+X^2+\eta_4(\lambda)XY+O(|X,Y|^3),
    \end{cases}
\end{equation}
where $\mu_1(\lambda_1,\lambda_2)=\eta_1(\lambda)-\dfrac{\eta_2^2(\lambda)}{4}$ and $\mu_2(\lambda_1,\lambda_2)=\bigg(\eta_3(\lambda)-\dfrac{1}{2}\eta_2(\lambda)\eta_4(\lambda)\bigg)$.\\
Now, with parameter values of $\phi^*(w)$ and $\xi^*$, we have\\ $\bigg|\dfrac{\partial(\eta_1,\eta_2)}{\partial(\lambda_1,\lambda_2)} \bigg|_{\lambda_1=\lambda_2=0}=-\dfrac{c^2\xi \phi(w)pF^2 h'}{2 A_1(0)^{5/2}}(h'+p'F'+pF''+2ph')\neq0.$ \\

Using lemma 8.8 in \cite{kuznetsov1998elements}, the system \eqref{eq:39} is locally topologically equivalent to the following\\
\begin{equation}\label{eq:40}
    \begin{cases}
        \dot{X}=Y,\\
        \dot{Y}=\mu_1(\lambda_1,\lambda_2)+\mu_2(\lambda_1,\lambda_2)Y+X^2+\eta_4(\lambda)XY,
    \end{cases}
\end{equation}
which is strongly topologically equivalent to \\
\begin{equation}\label{eq:41}
    \begin{cases}
        \dot{X}=Y,\\
        \dot{Y}=\mu_1(\lambda_1,\lambda_2)+\mu_2(\lambda_1,\lambda_2)Y+X^2+XY.
    \end{cases}
\end{equation}

Choosing $\mu_1, \mu_2$ as bifurcation parameters, the system \eqref{eq:40} undergoes a Bogdanov-Takens bifurcation in a small neighborhood of $E^*$, when $(\lambda_1,\lambda_2)$ vary in a small neighborhood of origin.\\
Now we observe that, if $\mu_1>0$, the system \eqref{eq:40} has no critical value. If $\mu_1=0 $ and $\mu_2\neq0$, then the system \eqref{eq:40} has only one critical value $(0,0)$, which is non hyperbolic. In this case system \eqref{eq:40} has saddle-node at origin.\\

If $\mu_1<0$, then the system \eqref{eq:40} has two equilibrium points $(\sqrt{-\mu_1},0)$ and $(-\sqrt{-\mu_1},0)$.
Now Jacobian matrix for any equilibrium of the system \eqref{eq:40} is given by\\
$$ J' = \left(
\begin{array}{cc}
0& 1\\
2X+\eta_4(\lambda)Y& \mu_2+\eta_4(\lambda)X\\
\end{array}
\right).$$\\
As the eigenvalues for the equilibrium point
$(\sqrt{-\mu_1},0)$ are\\ $\dfrac{1}{2}\bigg[\mu_2+\eta_4(\lambda)\sqrt{-\mu_1}\pm\sqrt{\bigg((\mu_2+\eta_4(\lambda)\sqrt{-\mu_1})^2+8\sqrt{-\mu_1} \bigg)}\bigg]$, then it is saddle point.\\
The eigenvalues of the equilibrium point $(-\sqrt{-\mu_1,0)}$ are \\ $ \dfrac{1}{2}\bigg[\mu_2-\eta_4(\lambda)\sqrt{-\mu_1}\pm\sqrt{\bigg((\mu_2+\eta_4(\lambda)\sqrt{-\mu_1})^2-8\sqrt{-\mu_1} \bigg)}\bigg]$. \\
If $\mu_2>\eta_4(\lambda)\sqrt{-\mu_1}$, then stable focus. \\
If $\mu_2<\eta_4(\lambda)\sqrt{-\mu_1}$, then unstable focus and\\
if $\mu_2=\eta_4(\lambda)\sqrt{-\mu_1}$ then non-hyperbolic.\\
  
We observe that for the system \eqref{eq:40}, Lyapunov number according to \cite{perko2013differential} is given by \\
$\sigma=\dfrac{3\pi }{2(2\sqrt{-\mu_1})^{3/2}}\eta_4(\lambda)>0$. \\
Hence the system \eqref{eq:40} undergoes a subcritical hopf bifurcation at $\mu_2=\eta_4(\lambda)\sqrt{-\mu_1}$.\\

Hence we obtain the local representations of the bifurcation curves as follows: 
 \begin{align*}
(i)\; \text{Saddle--node bifurcation curve:}\quad 
SN = \Big\{(\lambda_1,\lambda_2):\;
&\mu_1(\lambda_1,\lambda_2)=0, \\
&\mu_2(\lambda_1,\lambda_2)\neq 0 \Big\}. \\
(ii)\; \text{Hopf bifurcation curve:}\quad 
H = \Big\{(\lambda_1,\lambda_2):\;
&\mu_1(\lambda_1,\lambda_2)<0, \\
&\mu_2(\lambda_1,\lambda_2)=
\eta_4(\lambda)
\sqrt{-\mu_1(\lambda_1,\lambda_2)} \Big\}. \\[2mm]
\end{align*}
Now from \cite{perko2013differential}[p-481], we obtain \\
\begin{align*}
    (iii)\; \text{Homoclinic bifurcation curve:}\quad 
HC = \Big\{(\lambda_1,\lambda_2):\;
&\mu_1(\lambda_1,\lambda_2)<0, \\
&\mu_2(\lambda_1,\lambda_2)=
\dfrac{5}{7}\eta_4(\lambda)
\sqrt{-\mu_1(\lambda_1,\lambda_2)} \Big\}.
\end{align*}

\end{proof}

\section{Numerical simulations}\label{S6}
We now analyze the numerical simulations for the models \eqref{eq:5} and \eqref{eq:19}. Here we discuss how the wind function performs to find the number of equilibrium points, their stability, and the occurrence of different types of bifurcation.\\
In Figs. \ref{TS_E2}, \ref{TS_E1} and \ref{TS_E*} we have performed a time series analysis over a long time period $( t = 1000)$ to obtain the values of the stable equilibrium points  $E_2(k,0)$, $E_1(0,y_1)$ and $E^*(x^*,y^*)$ from the set of carefully chosen parameters. The set of parameters chosen to obtain a stable equilibrium is  r=0.5, k = 4, $\alpha = 0.4,\hspace{2mm}\xi=0.6,\hspace{2mm}b = 0.1,\hspace{2mm}\beta = 0.6,\hspace{2mm}d = 0.1,\hspace{2mm}c = 0.2$. When $\phi(w)$=1.5  prey-free equilibrium point $E_1$ is stable, when $\phi(w)=1$  interior equilibrium point $E^*$ is stable and for $\phi(w)=0.6$ predator-free equilibrium point $E_2$ is stable. Therefore, the figures show with decreasing $\phi(w)$ the fitness of predator switches to fitness of prey. \\
In Fig. \ref{R_P_1}, the stability dynamics of the region is stated by different colours. The figure interprets the curves $-tr(J^*)=0$, $det(J^*)=0$, $\dfrac{d(1+\alpha \xi + bk + k^2)}{\beta(k+\xi)}-\phi(w)=0$ and\\
$\phi(w)-\dfrac{\bigg(d+rc\xi(1+\alpha\xi)\bigg)(1+\alpha\xi)}{\beta \xi}=0$, which are the stability conditions of the all equilibriums with respect to $\phi(w)$. In the region $R_1$(blue), both equilibria $E^*$ and $E_1$ are stable. Hence in this region the system \eqref{eq:5} experiences bi-stability between $E^*$ and $E_1$. In the region $R_2$(green), only the predator-free equilibrium $E_2$ is stable. For $R_3$(red), both $E_2$ and $E^*$ are stable, which shows that in this region the system \eqref{eq:5} experiences the bi-stability between $E_2 $ and $E^*$. In the region $R_4$(magenta) all the equilibriums are stable, which indicates that  the system \eqref{eq:5} also has bi-stability between two axial equilibrium $E_1$ and $E_2$.  For the region $R_5$(yellow), only the interior equilibrium $E^*$ is stable.\\   
In Figs. \ref{TC_2} and \ref{TC_1}, we get the existence of transcritical bifurcation of the system \eqref{eq:5}. The parameter set chosen to get these two bifurcations is r=0.5, k=2.5, b=0.5, d=0.1, $\xi=1.5,\hspace{2mm}\beta=0.285,\hspace{2mm}\alpha=0.5,\hspace{2mm}c=0.1.$  In fig. \ref{TC_2}, the points $E^*$ and $E_2$ are stable and unstable, respectively, for the condition $\phi(w)>\phi_2^{(TC)}(w)=0.812$, marked by black dot(TC) and $E^*$(which is not feasible here) and $E_2$ are unstable  and stable, respectively, for the condition $\phi(w)<\phi_2^{(TC)}(w)$(in theorem \ref{th_TC_2}). Hence, we get a transcritical bifurcation between $E_2$ and $E^*$. In fig. \ref{TC_1}, the points $E^*$ and $E_1$ are stable and unstable respectively, for the condition $\phi(w)<\phi_1^{(TC)}(w)=0.95$, marked by black dot(TC) and $E^*$(which is infeasible here) and $E_1$ are unstable  and stable respectively, for the condition $\phi(w)>\phi_1^{(TC)}(w)$(in theorem \ref{th_TC_1}). Hence, we get a transcritical bifurcation between $E_1$ and $E^*$. Hence, we get 2 transcritical bifurcations for the same parameter set with different values of $\phi(w)$.\\
In Figs. \ref{SN_1} and \ref{SN_2},  we get the existence of Saddle-Node bifurcation of the system \eqref{eq:5}. The parameter set chosen to get these two bifurcation is r=0.5, k=2.5, b=0.5, d=0.1, $\xi=1.5,$ $ \beta=0.285,$ $ \alpha=0.5$, c=0.1.
In fig. \ref{SN_1}, two interior points (one stable and one unstable) coincide at $\phi(w)=\phi_*^{SN}(w)=1.3985,$ (in theorem \ref{th_SN}) marked by a black dot(SN). When $\phi(w)>\phi_*^{SN}(w)$, there is no interior equilibrium. In fig. \ref{SN_2}, two interior points ( one stable and one unstable) coincide at $\phi(w)=\phi_*^{SN}(w)=0.895,$ marked by a black dot(SN). When $\phi(w)<\phi_*^{SN}(w)$, there is no interior equilibrium. Hence, we get saddle-node bifurcation at  $\phi(w)=1.3985$ and $\phi(w)=0.895$. Hence, two saddle-node bifurcations for the same parameter set with different values of $\phi(w)$.\\  
In fig. \ref{BT_dynamics}, the global bifurcation that can be seen in section \ref{BT_bif} has been studied numerically. The figure \ref{bt_1} shows the codimension bifurcation dynamics, and it is observed that multiple interior equilibrium points cause the two saddle-node bifurcation and a Hopf bifurcation. It is well-known that a region with Hopf and saddle-node bifurcation causes a Bogdanov-Takens (BT) bifurcation \cite{kuznetsov1998elements, beyn1994numerical}. We study the $(\xi,\phi(w))$ parameter space in fig. \ref{bt_2} and show the existence of BT bifurcation and cusp (CP) of codimension 2. Fig. \ref{bt_hopf_curve} shows the two-parameter Hopf curve on which the trace of the Jacobian is always zero. As is established in past studies, the neighborhood of the BT bifurcation shows the existence of a homoclinic orbit. In fig. \ref{homoclinic}, it can be seen that when two interior equilibrium points are close to each other at the BT-bifurcation neighborhood, a homoclinic orbit is formed.\\
In Fig. \ref{fig:1}, we present the plot of $x$ versus
 $f(x)=x^5+Ax^4+Bx^3+Cx^2+Dx+E$ for four different values of the wind
 function $\phi(w)$. The others parameters are  r=0.5, k=2.5, b=0.5, d=0.1, $\xi=1.5,$ $\beta=0.285$, $\alpha=0.5$, c=0.1. For $\phi(w)=0.92$,  $\phi(w)=0.892$,  $\phi(w)=0.86$ and $\phi(w)=1.5$, the expression $f(x)$ cut the positive part of $f(x)=0$ at three points, two points, one point and nowhere respectively. Hence we get three interior, two interior, unique interior and no interior equilibrium respectively, of the system \eqref{eq:5}. \\

\section{Effects of different wind-predation rate interaction on system dynamics}\label{S8}
In this section, we discussed the effects of the wind on population dynamics. Since wind intensity can vary seasonally or stochastically, its inclusion in ecological models helps explain complex dynamics such as instability, coexistence, or sudden regime shifts.
Here we choose four different predation functions $\phi(w)$ with respect to wind in the system \eqref{eq:5} that provides a more realistic framework to understand population dynamics in flowing environments.\\
\subsection{Bounded U-shaped wind effect}
To capture the non–monotonic influence of wind speed on predation, the predation rate is modeled by a U-shaped bounded function $\phi(w)=2\big(1-e^{-6(w-w_0)^2}\big)$, which attains minimal values near the intermediate wind level $w_0$  and increases toward both lower and higher wind extremes. This functional choice is intended to represent ecological situations in which predation pressure is intensified under environmental stress at both extremes of wind intensity, while moderate wind conditions reduce effective predator–prey encounters. The bounded nature of the function ensures biological realism by preventing unbounded growth of predation intensity.

The bifurcation structures corresponding to this formulation are illustrated in Figures \ref{2exp_1} and \ref{2exp_2}. It is observed that the stability of boundary equilibria depends strongly on wind speed. In the regions of low and high wind intensity, the prey-free equilibrium is found to be stable, indicating that strong predation pressure at these extremes can drive the prey population to extinction. In contrast, for intermediate wind speeds, the predator-free equilibrium becomes stable, suggesting reduced predation efficiency and possible predator decline. Figures \ref{2exp_3} and \ref{2exp_4} illustrate the time evolution of the prey and predator populations, respectively. The corresponding phase portraits for different values of the wind speed parameter $w$ are presented in Figure \ref{2exp_5}. The diagram \ref{2exp_6} further reveal parameter intervals where multiple attractors coexist, and as shown in the corresponding phase portraits, bistability is present, meaning that the long-term outcome depends sensitively on initial population densities. 

From a biological perspective, these results highlight that wind acts as a nonlinear environmental regulator capable of generating alternative stable states. Both low and high wind conditions may enhance predator success, whereas intermediate wind speeds may provide a refuge-like zone for prey. Consequently, moderate environmental conditions may stabilize coexistence, while extreme conditions promote dominance of one species, emphasizing the ecological significance of incorporating U-shaped environmental responses into predator–prey models.

\subsection{Periodic wind effect}
To represent recurrent environmental fluctuations in wind intensity, the predation rate is modeled as a periodic function, $\phi(w)=1-sin(2\pi w)$. This formulation is chosen to reflect cyclical variations in wind conditions, such as seasonal or rhythmic atmospheric patterns, which may repeatedly enhance or reduce predator–prey encounter rates. The periodic structure allows the predation pressure to alternate regularly between higher and lower levels as wind speed varies.

The corresponding dynamical behavior is depicted in Figures \ref{sin_1} and \ref{sin_2}. It is observed that within each period of the function, the stability of equilibria alternates systematically. Specifically, for wind speeds near the lower and upper phases of each cycle, the prey-free equilibrium is found to be stable, indicating intensified predation pressure. In contrast, in the intermediate phase of each period, the predator-free equilibrium becomes stable, suggesting reduced predator efficiency and possible predator decline. The bifurcation patterns repeat consistently across successive periods, demonstrating that the qualitative dynamics are periodically modulated by wind speed. As wind speed progresses through successive periods, the system repeatedly shifts between prey-dominated and predator-dominated states, emphasizing the dynamic nature of ecological responses under periodic environmental drivers. Figures \ref{sin_3} and \ref{sin_4} illustrate the time series dynamics of the prey and predator populations, respectively. Figure \ref{sin_5} presents the corresponding phase portraits for different values of the wind speed parameter $w$. These figures collectively demonstrate how variations in wind speed influence the temporal evolution and qualitative behavior of the prey–predator system.

Biologically, these results indicate that cyclical wind fluctuations can generate repeated transitions between predator-dominated and prey-dominated states. Hence, wind speed acts as a time-varying environmental driver capable of inducing regular shifts in ecosystem structure, highlighting the importance of incorporating periodic environmental forcing in prey–predator models.

\subsection{Exponential wind effect}
To understand the influence of wind on predator–prey interactions, an increasing exponential predation function of the form $\phi(w)=\dfrac{e^{\sigma w}}{5}$
is considered. This functional choice reflects the assumption that predation intensity accelerates with increasing wind speed, such that small increments in wind at higher levels produce disproportionately large increases in predation pressure. Such a formulation is ecologically justified when wind enhances predator efficiency through mechanisms such as improved scent dispersion or the flushing of prey from refuges.

The dynamical consequences of this assumption are illustrated in Figures \ref{exp^sigma_1} and \ref{exp^sigma_2}. Variations in wind speed are observed to significantly affect the stability properties of the equilibria. For relatively low wind speeds, the predator-free equilibrium is found to be stable, indicating predator extinction and prey persistence. As the intensity of the wind increases, the system undergoes qualitative changes, and for sufficiently high wind speeds, the equilibrium of prey-free becomes stable, implying prey extinction due to enhanced predation pressure. These transitions suggest that wind modifies dispersal patterns and hunting efficiency in a nontrivial manner, leading to shifts in dominance between species. Furthermore, the figure \ref{exp^sigma_6} indicate the presence of bistability in certain parameter regimes, where multiple stable states coexist and the long-term outcome depends on initial population densities. Figures \ref{exp^sigma_3} and \ref{exp^sigma_4} represent the temporal dynamics of the prey and predator populations, respectively, whereas Figure \ref{exp^sigma_5} shows the phase portraits corresponding to various values of the wind speed parameter $w$.

Biologically, it can therefore be concluded that wind speed acts as a crucial external environmental driver capable of restructuring community outcomes. Depending on its magnitude, wind may either suppress predator populations or intensify predation to the extent of prey collapse. Therefore, climatic factors such as wind should be regarded as critical external forces that can reshape predator–prey dynamics, potentially leading to population imbalance or even local prey depletion under sustained high-wind conditions. The presence of bistability suggests ecological sensitivity to initial conditions and environmental fluctuations, highlighting the potential for abrupt regime shifts in wind-influenced ecosystems.

\subsection{Bounded concave wind effect}
To incorporate a more ecologically realistic wind effect, a concave bounded predation function of the form $\phi(w)=\dfrac{2w}{\sqrt{1+w^4}}$ is considered. This functional response is selected to represent a situation in which predation initially increases with wind speed but gradually saturates, reflecting biological constraints that prevent indefinite growth in hunting efficiency. This formulation captures the idea that while moderate wind may improve encounter rates, excessive wind can limit predator performance due to movement difficulty, sensory interference, or energetic costs.

The bifurcation structure with respect to wind speed is depicted in Figures \ref{square_1} and \ref{square_2}. From the diagrams, it is observed that for both low and high wind speeds, the predator-free equilibrium is stable, indicating predator extinction under these environmental extremes. In contrast, within intermediate ranges of wind intensity, the prey-free equilibrium becomes stable, suggesting that predators successfully dominate when wind conditions are moderately favorable. The transition between these stability regions highlights the presence of wind-driven qualitative changes in the system dynamics. Furthermore, the diagram \ref{square_6} reveal the presence of bistability within certain wind intervals, where two stable equilibria coexist. In these regions, the long-term outcome depends sensitively on initial population densities. Figures \ref{square_3} and \ref{square_4} depict the time series of the prey and predator populations, respectively, while Figure \ref{square_5} illustrates the corresponding phase portraits for different values of the wind speed $w$.

From a biological perspective, these results indicate that wind acts as a regulating environmental factor with a saturating influence on predation. Moderate wind speeds may enhance predator success by disturbing prey movement or reducing escape efficiency, whereas strong wind imposes limitations on both species, preventing unlimited escalation of predation pressure. The emergence of bistability further suggests that ecosystem responses to wind variation can be non-unique and history-dependent, highlighting the complex role of environmental forcing in shaping population persistence and extinction patterns.

\section{Discussion}\label{S9}
In this study, we investigate the dynamical behavior of a prey–predator system incorporating the effect of wind on population interactions. The model exhibits rich dynamical features like stability switching, oscillatory dynamics and global bifurcations. From a biological perspective, this means that population dynamics are not governed solely by intrinsic growth and predation parameters, but also by external abiotic drivers that can regulate or destabilize ecological balance. Unlike classical prey–predator models, where population changes depend solely on biological interactions, the inclusion of wind introduces an external ecological driver that significantly alters system stability and long-term behavior. As a result, wind acts not merely as a perturbation but as a key control parameter governing the qualitative behavior of the system. In natural ecosystems—such as grasslands, coastal regions, agricultural fields, or open marine environments—wind can significantly alter foraging efficiency, prey vulnerability, and habitat accessibility.

The system shows the phenomenon of bistability, which can be biologically crucial. Ecologically, this indicates the presence of alternative stable states in the predator–prey system, where long-term population outcomes depend on initial conditions and disturbance magnitude by wind. Such systems exhibit threshold behavior and reduced resilience, implying that environmental forcing, including wind variability, can trigger abrupt regime shifts between coexistence and oscillatory or extinction states.

From the bifurcation perspective, parameter variation reveals transitions between distinct dynamical regimes. For different wind intensities, the system typically maintains a stable coexistence equilibrium, indicating balanced prey-predator interaction. However, as wind-related parameters vary, the stability of equilibrium points changes qualitatively. The emergence of saddle–node or transcritical bifurcations indicates thresholds at which coexistence states are created or destroyed, reflecting ecological tipping points in which small environmental changes can result in population extinction or persistence.

At low wind intensity, the system undergoes a subcritical Hopf bifurcation, producing unstable limit cycles around the coexistence equilibrium. Ecologically, this indicates that predator–prey populations may experience abrupt transitions and reduced resilience under wind disturbances.

Wind significantly influences prey-predator interactions by altering encounter and capture efficiency under varying environmental conditions. Changes in wind speed can either enhance or reduce interaction strength, reflecting ecological constraints and behavioral responses of both species. Under fluctuating wind conditions, interaction intensity may vary cyclically, leading to recurrent population oscillations rather than stable coexistence. In some situations, small variations in wind can produce rapid changes in interaction strength, making the system highly sensitive to environmental forcing and potentially causing sudden transitions in population dynamics.

Furthermore, under appropriate parameter conditions, we see that the system exhibits richer codimension-two dynamics (ex., Bogdanov-Takens Bifurcation). Near this singularity, the system can exhibit multiple dynamical phenomena, including saddle-node bifurcations, Hopf bifurcations, and homoclinic orbits. Also for different type of predation function, we have seen dynamical changes of population.

Overall, the biological interpretation emphasizes that wind acts as a regulatory environmental driver that can reshape prey-predator stability, persistence, and resilience. The model highlights the ecological importance of incorporating abiotic factors into population dynamics studies, particularly for predicting ecosystem responses to environmental variability and climate-induced changes.
Although the present study provides a comprehensive qualitative analysis of the proposed prey–predator model under the influence of wind-dependent predation, several important directions remain open for future investigation.
First, spatial heterogeneity could be included through reaction–diffusion formulations. Wind often plays a significant role in spatial dispersal of species; therefore, extending the model to a spatial domain may lead to pattern formation, traveling wave solutions, or Turing instability.
The higher order rich bifurcations such as like saddle-transcritical bifurcation, generalized Hopf etc. can be studied further. Experimental results to understand the functional response behavior for particular ecosystems are yet to be analyzed where wind speed over time can be of great importance to study an autonomous system and produce time dependent predictions for the species populations.
\section*{Conflict of interest}
The authors have no conflict of interest.
 \begin{figure}[H]
	\centering
    \subfloat[]{\includegraphics[width=2.2in, height=1.4 in]{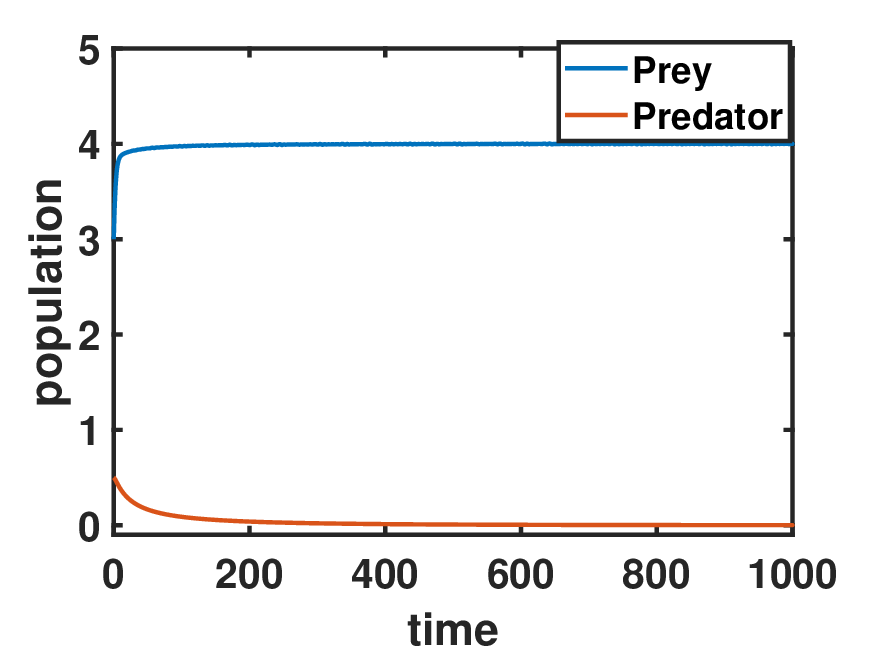}\label{TS_E2}}
	\subfloat[]{\includegraphics[width=2.2 in, height=1.4 in]{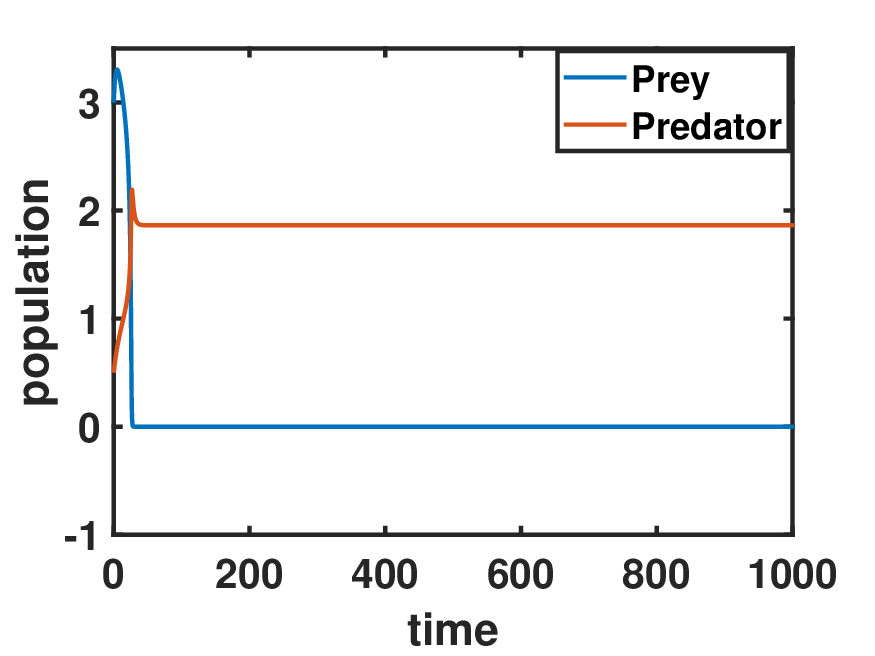}\label{TS_E1}} 	
	\subfloat[]{\includegraphics[width=2.2 in, height=1.4 in]{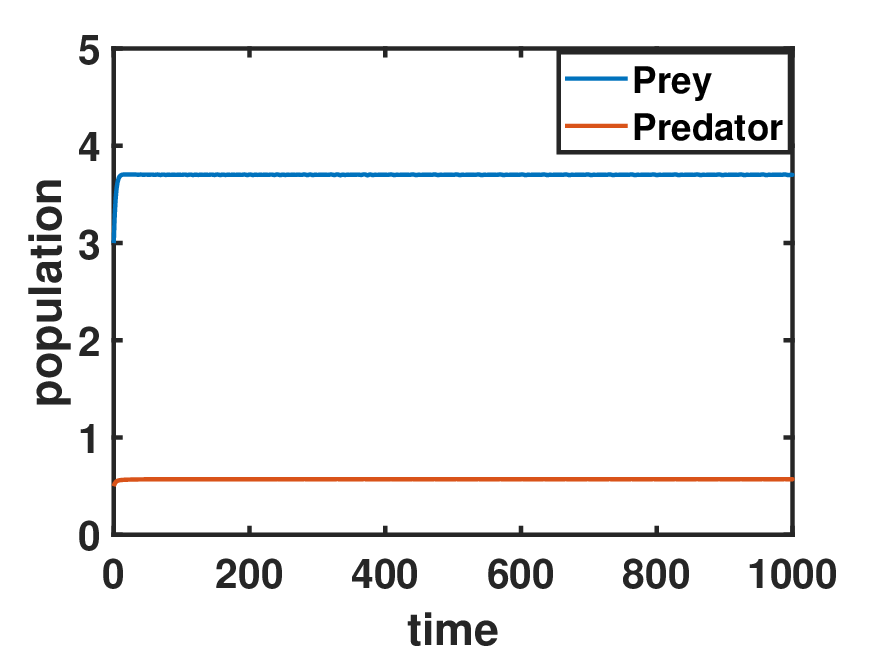}\label{TS_E*}}
     
	\caption{Time series diagram for the local stability of $E_2(k,0)$, $E_1(0,y_1)$, $E^*(x^*,y^*)$   of \eqref{eq:5} in figures \ref{TS_E2}, \ref{TS_E1}  and  \ref{TS_E*} respectively with the initial condition $(3,0.5)$ and $\phi(w)=0.61$, $\phi(w)=1.5$ and $\phi(w)=1$,  respectively and the others parameters are $r=0.5,k = 4,\alpha = 0.4,\xi=0.6,b = 0.1,\beta = 0.6,d = 0.1,c = 0.2.$}
	\end{figure}
    
 \begin{figure}[H]
	\centering
	\includegraphics[width=4.2 in, height=1.5 in]{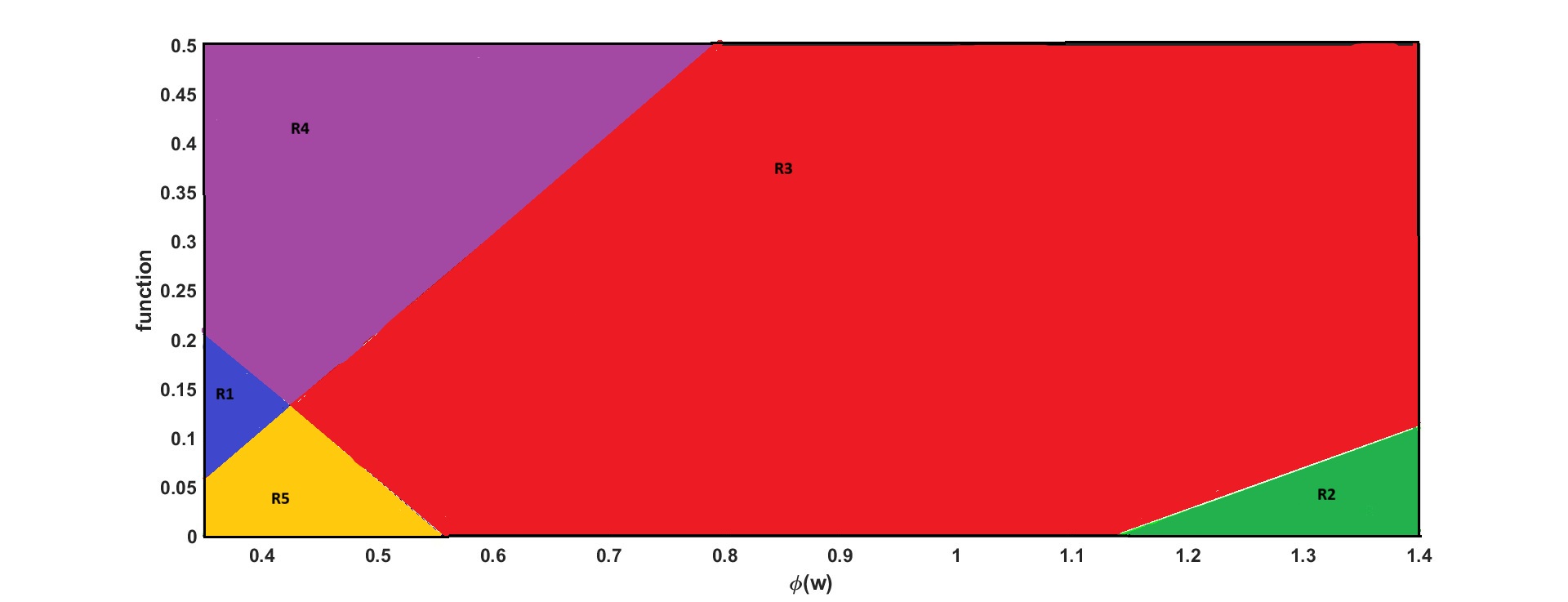}

	\caption{The figure shows the local stability of equilibriums of the model \eqref{eq:5} with respect to the wind speed function $\phi(w)$. The other parameters are $r = 1,\hspace{2mm}k = 5,\hspace{2mm}\alpha = 0.1,\hspace{2mm}\xi = 2,\hspace{2mm}b = 0.2,\hspace{2mm}\beta = 0.7,\hspace{2mm}d = 0.1,\hspace{2mm}c = 0.1.$ The region $R_1$(blue) is the region in which both the prey-free equilibrium $E_1 $ and the interior equilibrium $E^*$ are stable. In the region $R_2$(green), only the predator-free equilibrium $E_2 $ is stable. In the region $R_3$(red), both predator-free equilibrium $E_2$ and interior equilibrium $E^*$ are stable. The region $R_4$(magenta) is the region in which all the equilibriums are stable. In the region $R_5$(yellow), only interior equilibrium $E^*$ is stable.} 
    \label{R_P_1}
	
\end{figure}

\begin{table}[H]
\caption{Existence and stability analysis of equilibria of the system \eqref{eq:5}}
\centering
\begin{tabular}{|c|p{2.5cm}|p{3cm}|p{9.5cm}|p{0.8cm}|}
\hline
S.No. & Equilibrium Points & Feasibility Condition &  Status of Stability & proof \\ 
\hline
(i) & $ E_0(0,0)$ & Always (\ref{Th_Existence}) & Saddle point & (\ref{lemma:4.3}) \\ 
\hline
(ii) & $E_2(k,0)$ & Always (\ref{Th_Existence}) & \begin{tabular}{|p{9.1cm}|}
stable node if
    $\phi(w)<\dfrac{d(1+\alpha \xi + bk + k^2)}{\beta(k+\xi)}$ in Fig. \ref{TS_E2} \\ \hline  Saddle point if
    $\phi(w)>\dfrac{d(1+\alpha \xi + bk + k^2)}{\beta(k+\xi)}$\\ \hline  Neutral if
    $\phi(w)=\dfrac{d(1+\alpha \xi + bk + k^2)}{\beta(k+\xi)}$\\ 
\end{tabular} &  ( \ref{lemma:4.4}) \\ 
\hline 
(iii) & $E_1(0,y_1)$,$y_1=\dfrac{\beta\xi}{c\xi(1+\alpha\xi)}-\dfrac{d}{c\xi\phi(w)}$ &  $d-\dfrac{\beta\xi\phi(w)}{1+\alpha\xi} < 0$ (\ref{Th_Existence}) & \begin{tabular}{|p{9.1cm}|}
stable node if
    $\phi(w)>\dfrac{\bigg(d+r c\xi(1+\alpha\xi)\bigg)(1+\alpha\xi)}{\beta\xi}$ in Fig. \ref{TS_E1}\\ \hline Saddle point if
    $\phi(w)<\dfrac{\bigg(d+rc\xi(1+\alpha\xi)\bigg)(1+\alpha\xi)}{\beta\xi} $\\ \hline
   Neutral if
    $\phi(w)=\dfrac{\bigg(d+r c\xi(1+\alpha\xi)\bigg)(1+\alpha\xi)}{\beta\xi}$\\
\end{tabular} & ( \ref{lemma:4.5})\\
\hline
(iv)&  $E^*(x^*,y^*)$ & Any one of $A,B,C,D,E$ is negative (\ref{Th_Existence}) & \begin{tabular}{|p{9.1cm}|}
    asymptotically stable if $tr(J^*)<0$ and $det(J^*)>0$ in Fig. (\ref{TS_E*}) \\ \\ \\
    \hline 
    non-hyperbolic if $det(J^*)=0$\\ \\ \\
    \end{tabular} & ( \ref{lemma:4.6})\\
\hline
\end{tabular}
\label{tab:1}
\end{table}

 \begin{figure}[h]
	\centering
	\subfloat[]{\includegraphics[width=3.2 in, height=2 in]{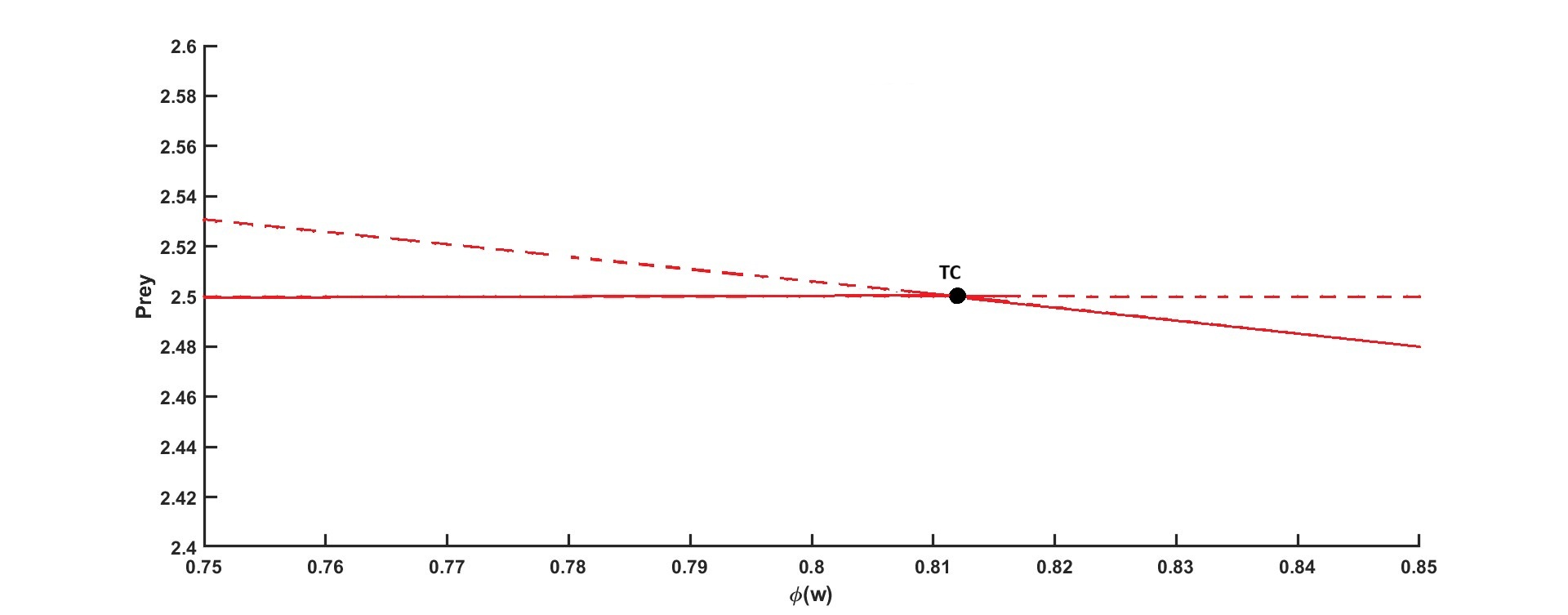}\label{TC_2}} 
    \subfloat[]{\includegraphics[width=3.2 in, height=2 in]{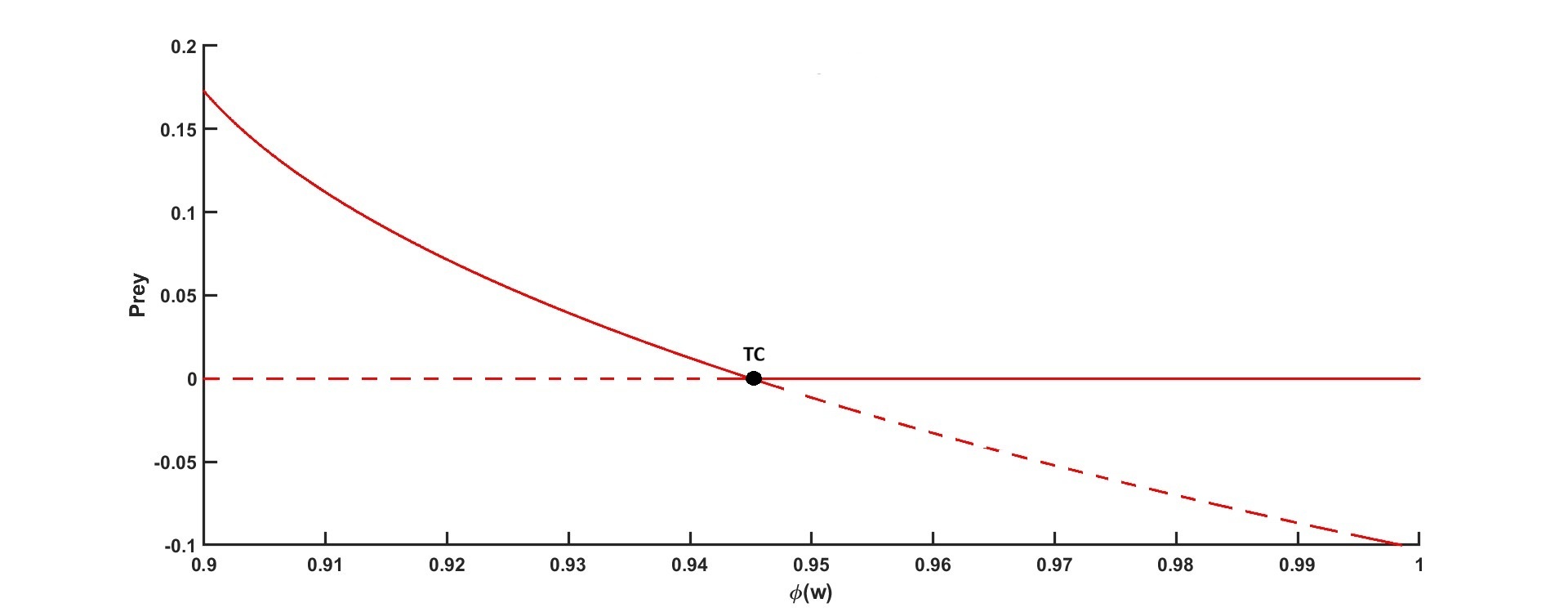}\label{TC_1}} 
	\caption{$r=0.5, \hspace{2mm}k=2.5,\hspace{2mm} b=0.5,\hspace{2mm}d=0.1,\hspace{2mm}\xi=1.5,\hspace{2mm}\beta=0.285,\hspace{2mm}\alpha=0.5,\hspace{2mm}c=0.1.$ \\
     Dotted lines indicates unstable and solid lines denote stable equilibriums. 
     Transcritical between axial $E_2$ and interior equilibrium at $\phi(w)=0.812$ in \ref{TC_2} and transcritical between axial $E_1$ and interior equilibrium at $\phi(w)=0.95$ in \ref{TC_1}. TC: Transcritical.}
\end{figure}
\begin{figure}[H]
	\centering
	\subfloat[]{\includegraphics[width=3 in, height=2 in]{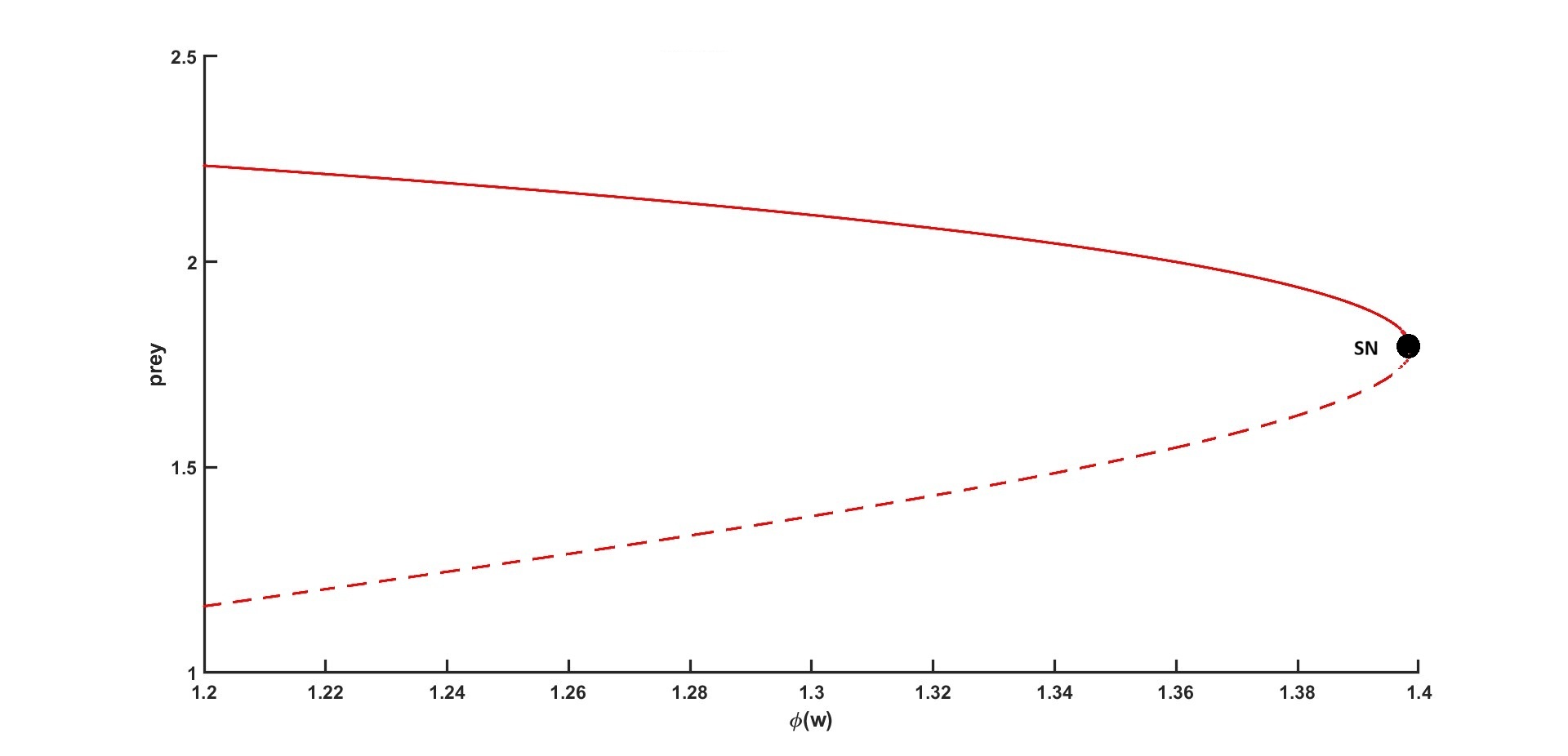}\label{SN_1}}
    \subfloat[]{\includegraphics[width=3 in, height=2 in]{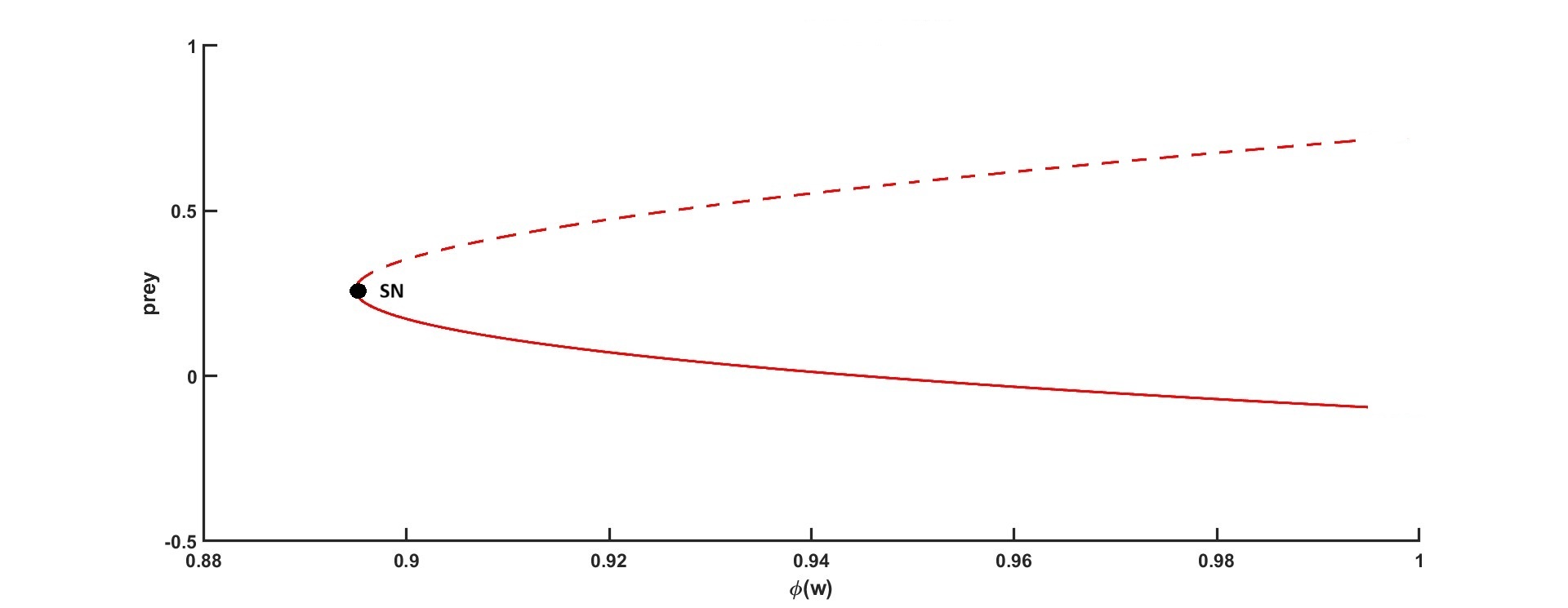}\label{SN_2}} 
	\caption{$r=0.5,\hspace{2mm} k=2.5,\hspace{2mm} b=0.5,\hspace{2mm}d=0.1,\hspace{2mm}\xi=1.5,\hspace{2mm}\beta=0.285,\hspace{2mm}\alpha=0.5,\hspace{2mm}c=0.1.$ \\
    Solid lines denote stable and dotted lines denotes unstable equilibrium.In figure \ref{SN_1} and \ref{SN_2}, at $\phi(w)=1.3985$ and at $\phi(w)=0.895$ respectively, saddle-node bifurcation occurs. SN: Saddle-node.}
\end{figure}

\begin{figure}[H]
    \centering
    \includegraphics[width=0.8\linewidth]{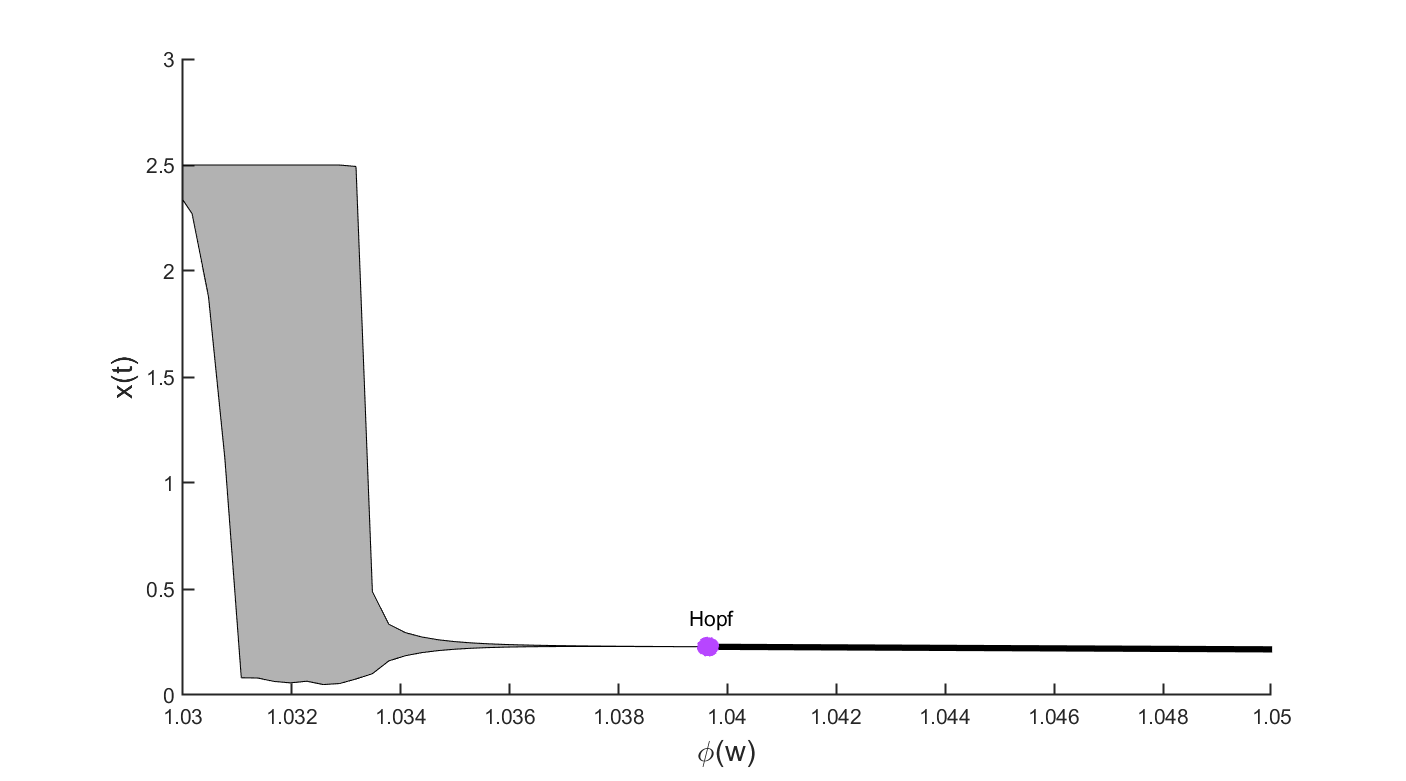}
    \caption{The figure shows the bifurcation diagram of $\phi(w)$ vs prey population. The grey shaded region shows the unstable limit cycle, and the black line signifies the stable equilibrium and this change is due to the Hopf bifurcation. The parameters used are $r=0.5,\hspace{2mm} k=2.5,\hspace{2mm} b=0.5,\hspace{2mm}d=0.1,\hspace{2mm}\xi=1.5,\hspace{2mm}\beta=0.285,\hspace{2mm}\alpha=0.5,\hspace{2mm}c=0.1.$}
    \label{Hopf_bifurcation}
\end{figure}

\begin{figure}[H]
    \centering

    \begin{subfigure}{0.45\textwidth}
        \centering
        \includegraphics[width=\linewidth]{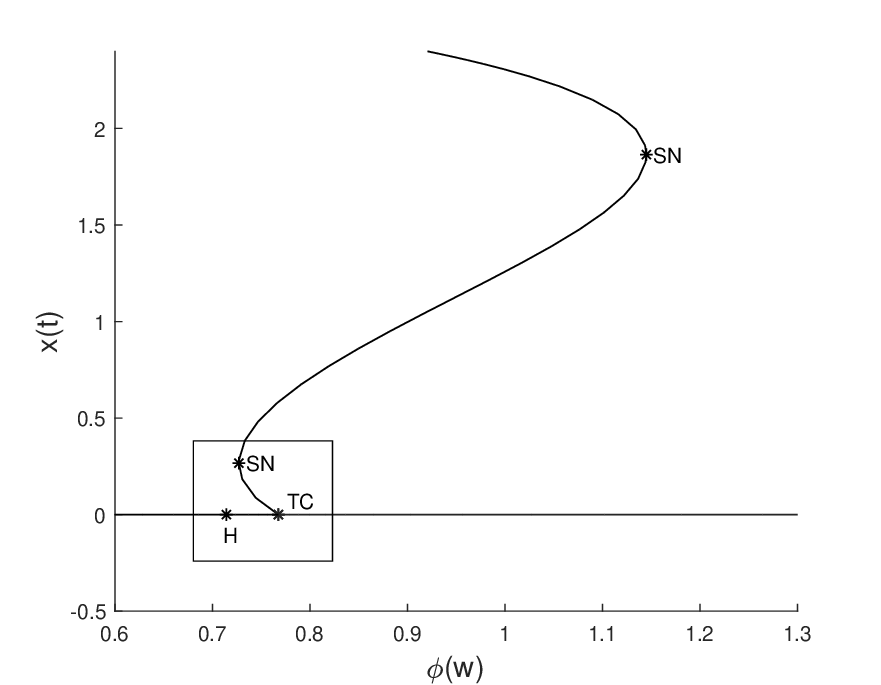}
        \caption{}
        \label{bt_1}
    \end{subfigure}
    \hfill
    \begin{subfigure}{0.45\textwidth}
        \centering
        \includegraphics[width=\linewidth]{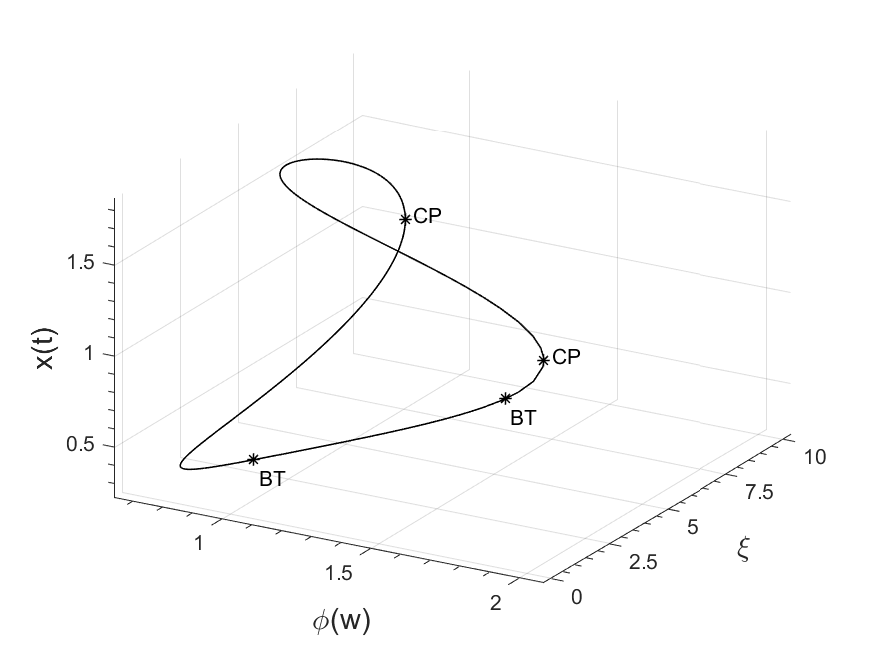}
        \caption{}
        \label{bt_2}
    \end{subfigure}

    \vspace{0.5cm}

    \begin{subfigure}{0.45\textwidth}
        \centering
        \includegraphics[width=\linewidth]{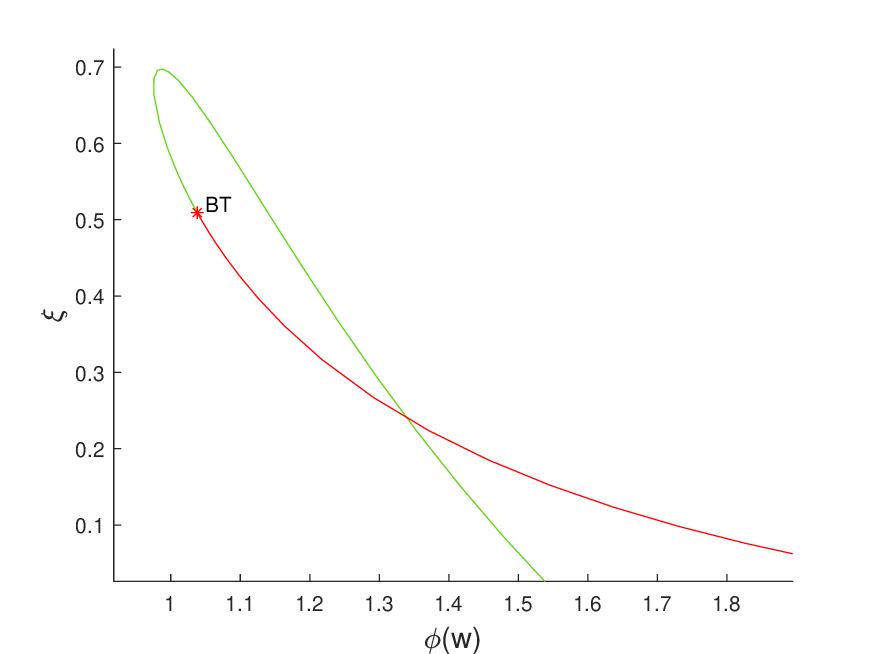}
        \caption{}
        \label{bt_hopf_curve}
    \end{subfigure}
    \hfill
    \begin{subfigure}{0.45\textwidth}
        \centering
        \includegraphics[width=\linewidth]{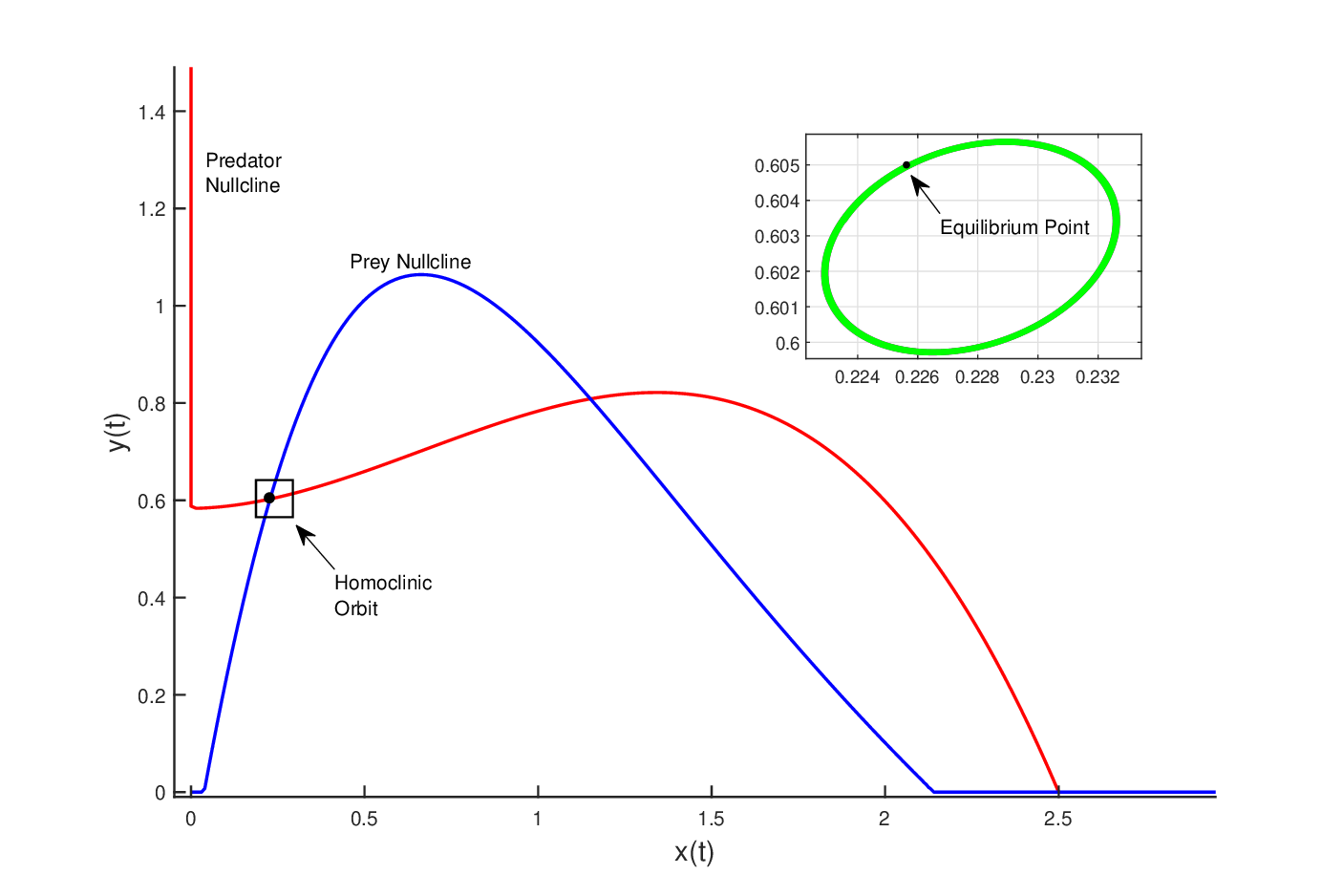}
        \caption{}
        \label{homoclinic}
    \end{subfigure}

    \caption{The figure shows the existence of Bogdanov-Takens bifurcation and dynamics around the neighborhood. Figure \ref{bt_1} and \ref{bt_2} show the codimension 1 and 2 bifurcation diagrams, respectively. Figure \ref{bt_hopf_curve} shows the Hopf curve with two changing parameters $\xi$ and $\phi(w)$. Figure \ref{homoclinic} shows the prey nullcline (Blue) and predator nullcline (red), and the boxed region is the equilibrium where a homoclinic orbit exists. The zoomed-in image shows the existence of a homoclinic orbit around the same BT neighborhood. SN: Saddle-node, TC: Transcritical, BT: Bogdanov-Takens, CP: Cusp. The parameters used are $r=0.5,\hspace{2mm} k=2.5,\hspace{2mm} b=0.5,\hspace{2mm}d=0.1,\hspace{2mm}\beta=0.285,\hspace{2mm}\alpha=0.5,\hspace{2mm}c=0.1.$}
    \label{BT_dynamics}
\end{figure} 

\begin{figure}[H]
	\centering
	\subfloat[]{\includegraphics[width=3.2 in, height=2 in]{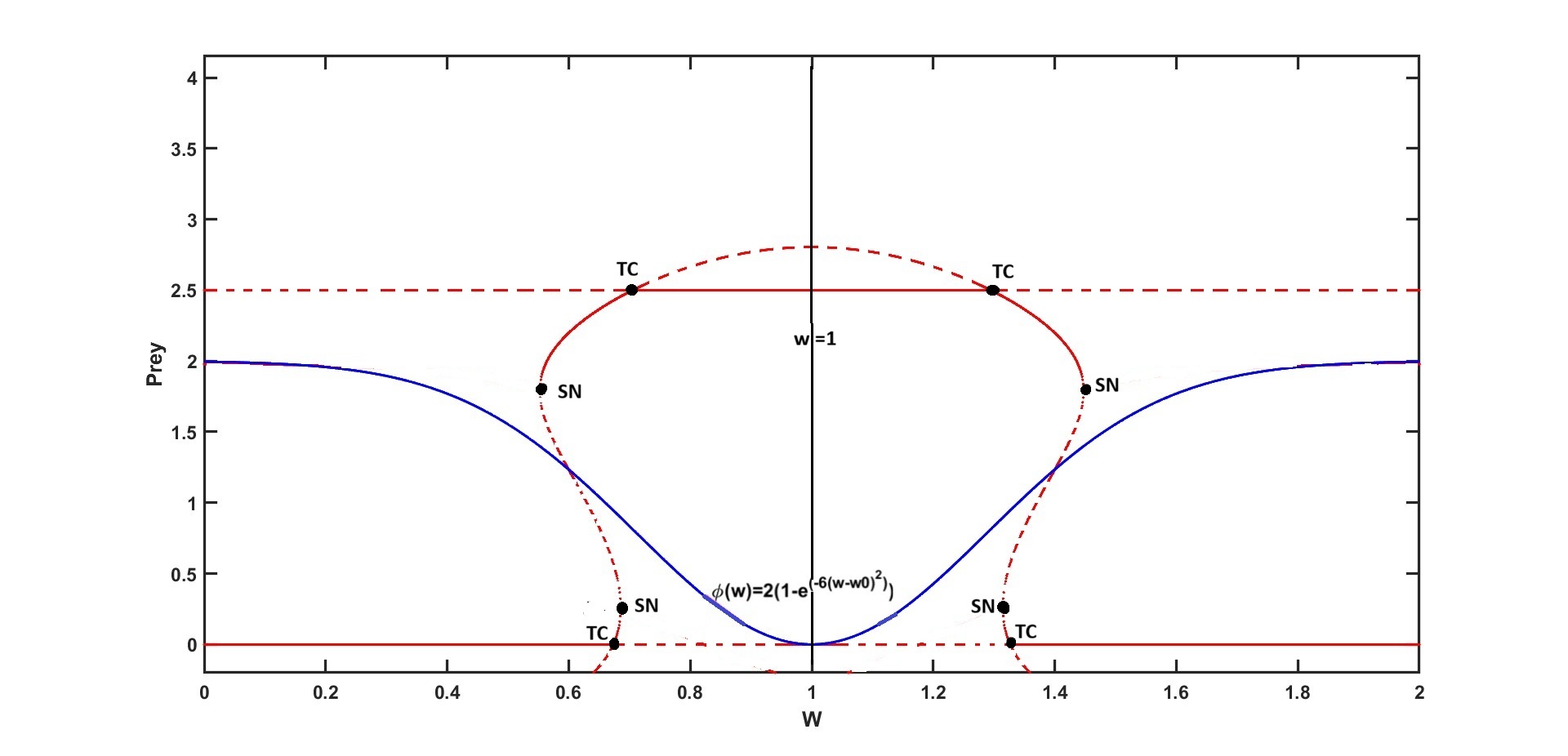} \label{2exp_1}}
    \subfloat[]{\includegraphics[width=3.2 in, height=2 in]{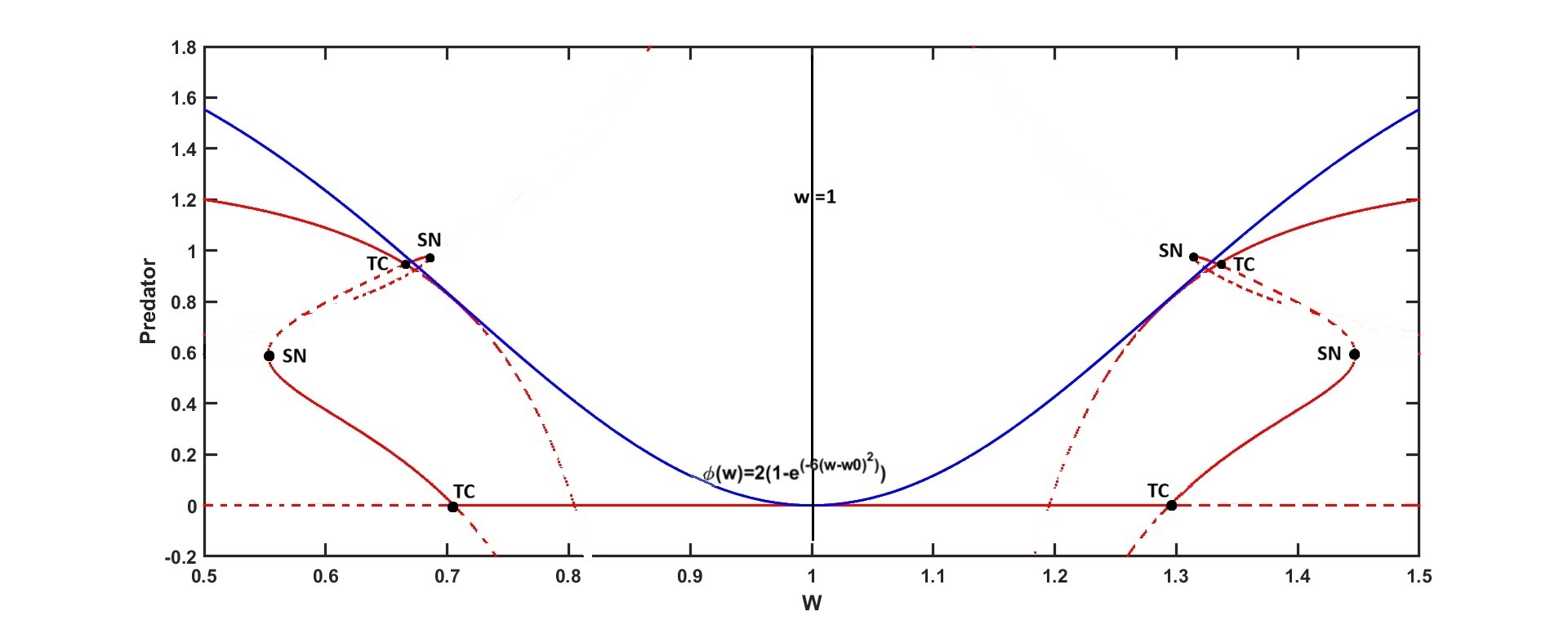}\label{2exp_2}}
    \\
    \subfloat[]{\includegraphics[width=3.2 in, height=2 in]{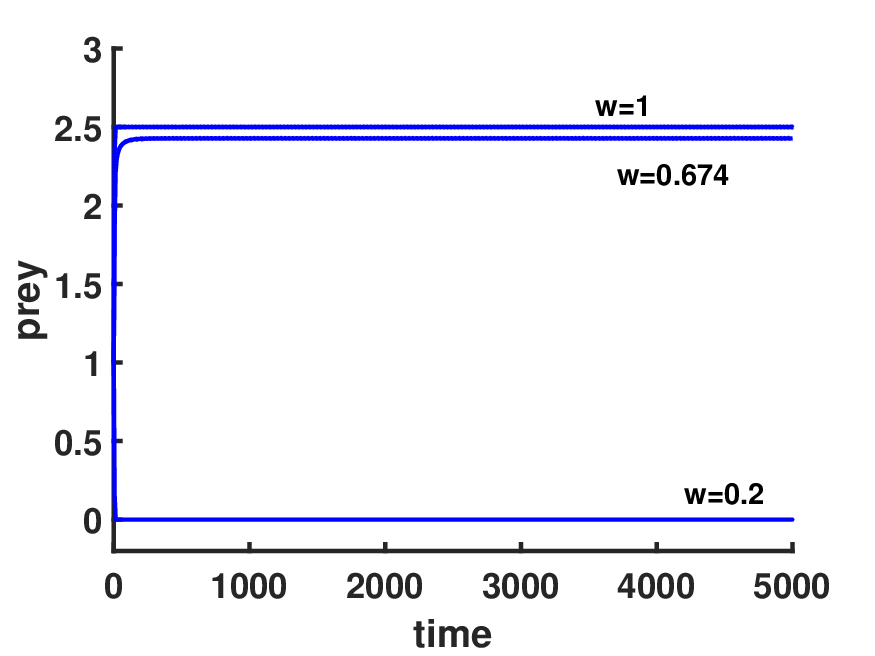}\label{2exp_3}}
    \subfloat[]{\includegraphics[width=3.2 in, height=2 in]{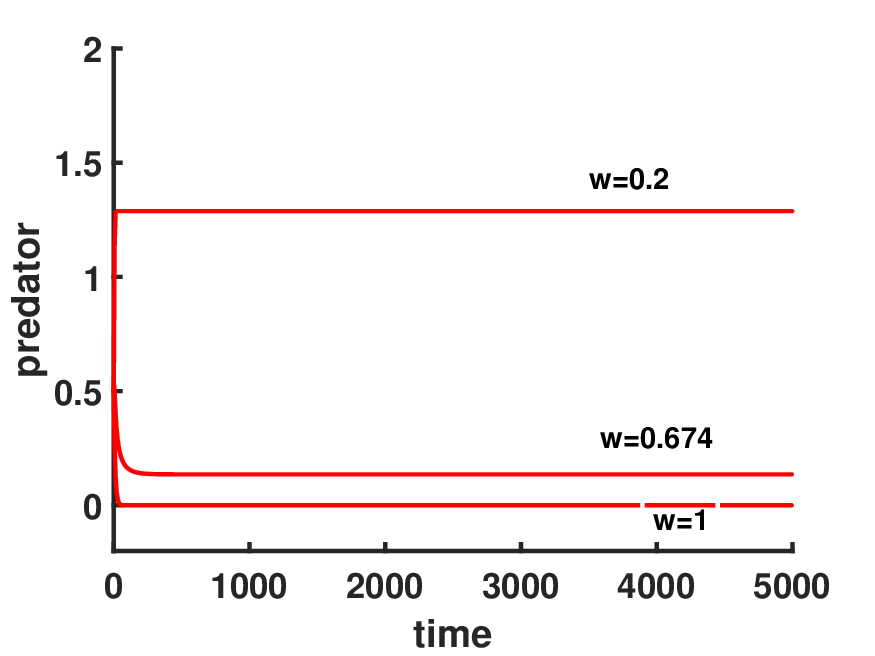}\label{2exp_4}}\\
    \subfloat[]{\includegraphics[width=3.2 in, height=2 in]{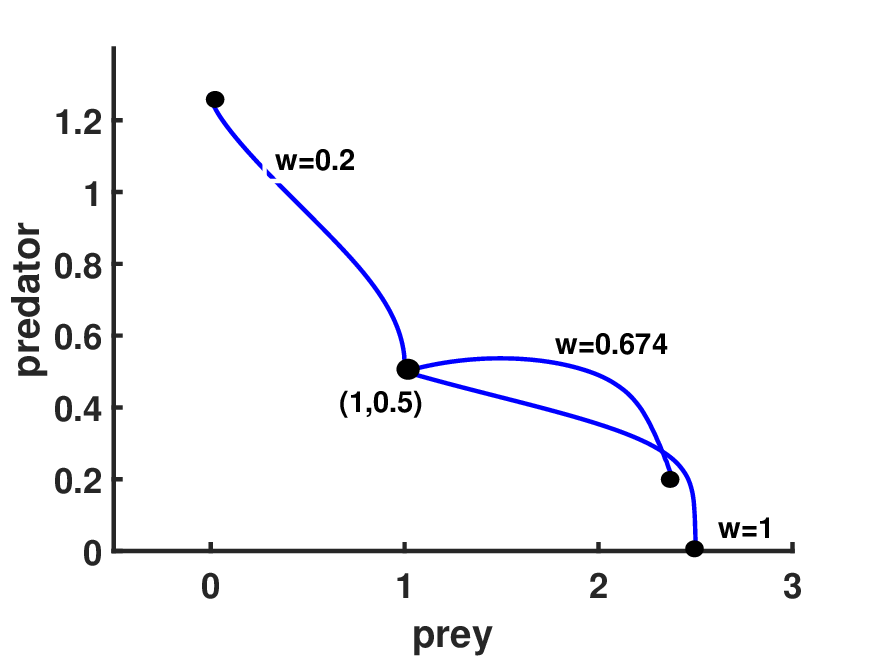}\label{2exp_5}}
    \subfloat[]{\includegraphics[width=3.2 in, height=2 in]{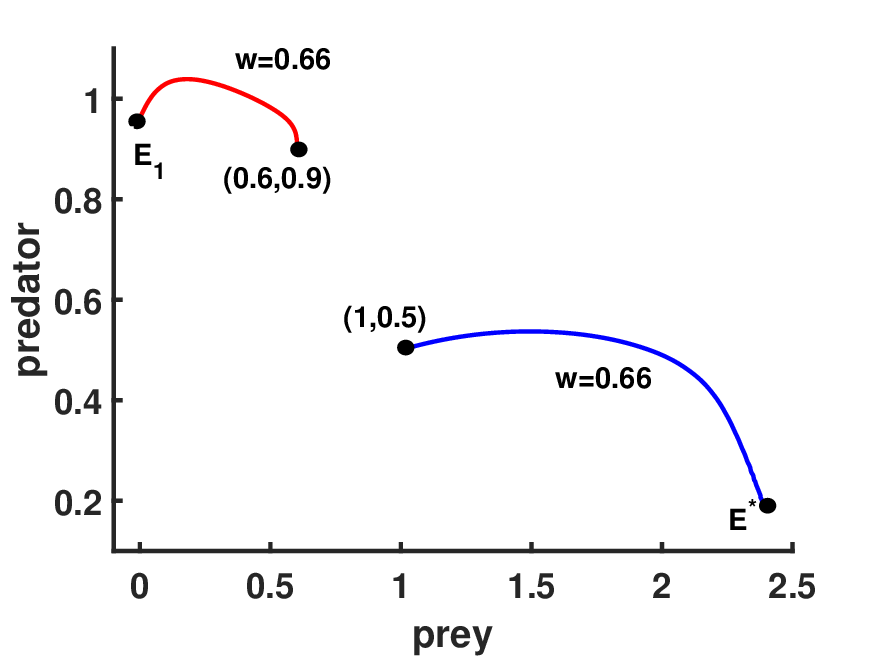}\label{2exp_6}}
   
   \caption{Here we choose a bounded U-shaped function  $\phi(w)=2\bigg(1-e^{(-6(w-w_0)^2)}\bigg)$,  where $w_0$ is the optimum wind speed. Figs. \ref{2exp_1} and \ref{2exp_2} represents the bifurcation figures of wind speed $w$ vs prey and predator  population, respectively. In Figs. \ref{2exp_1} and \ref{2exp_2}, dotted red lines represents unstable equilibrium and solid red lines indicates stable equilibrium and solid blue line represents the function $\phi(w)$. Figs. \ref{2exp_3} and \ref{2exp_4} shows the time series for prey and predator population with different $w$ and fig. \ref{2exp_5} is the phase portrait of the respective population. The others parameters are r=0.5, k=2.5, b=0.5, d=0.1, $\xi=1.5,\hspace{2mm}$ $\beta=0.285,$ $\alpha=0.5$ c=0.1, $w_0$=1. SN: Saddle-Node, TC: Transcritical.
   } 
\end{figure}
\begin{figure}[H]
	\centering
	\subfloat[]{\includegraphics[width=3.2 in, height=2 in]{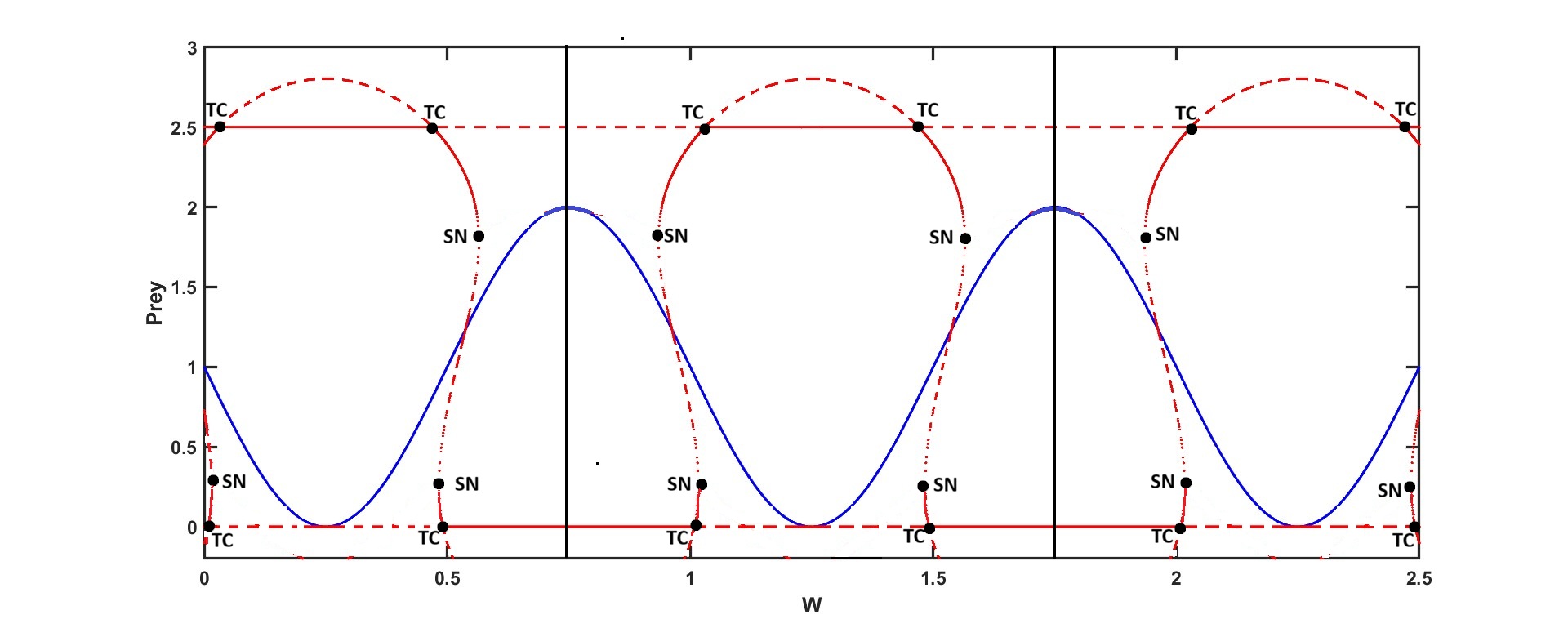} \label{sin_1}}
    \subfloat[]{\includegraphics[width=3.2 in, height=2 in]{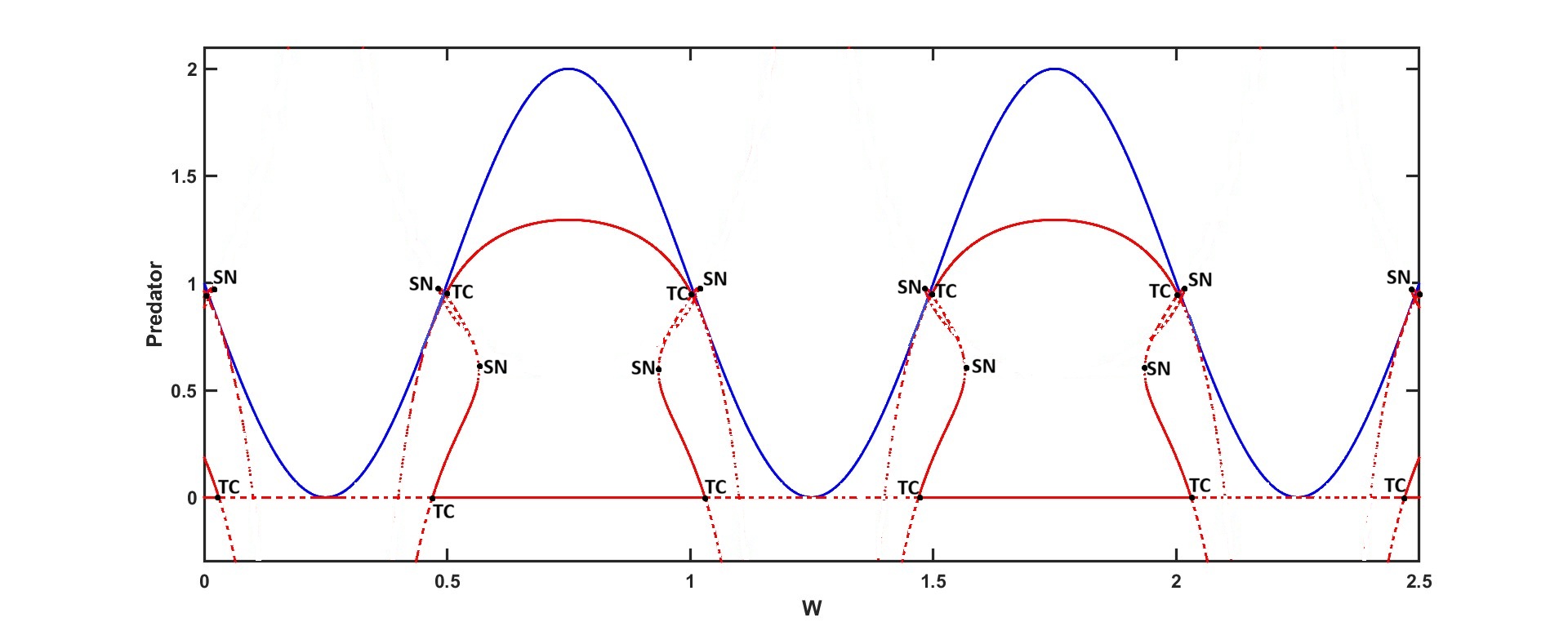}\label{sin_2}}
    \\
    \subfloat[]{\includegraphics[width=3.2 in, height=2 in]{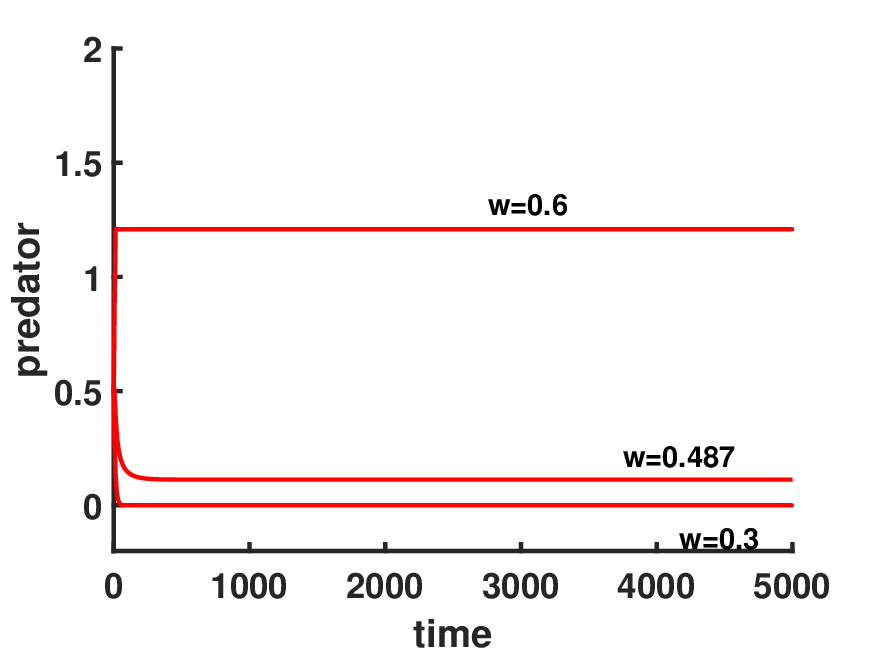}\label{sin_3}}
    \subfloat[]{\includegraphics[width=3.2 in, height=2 in]{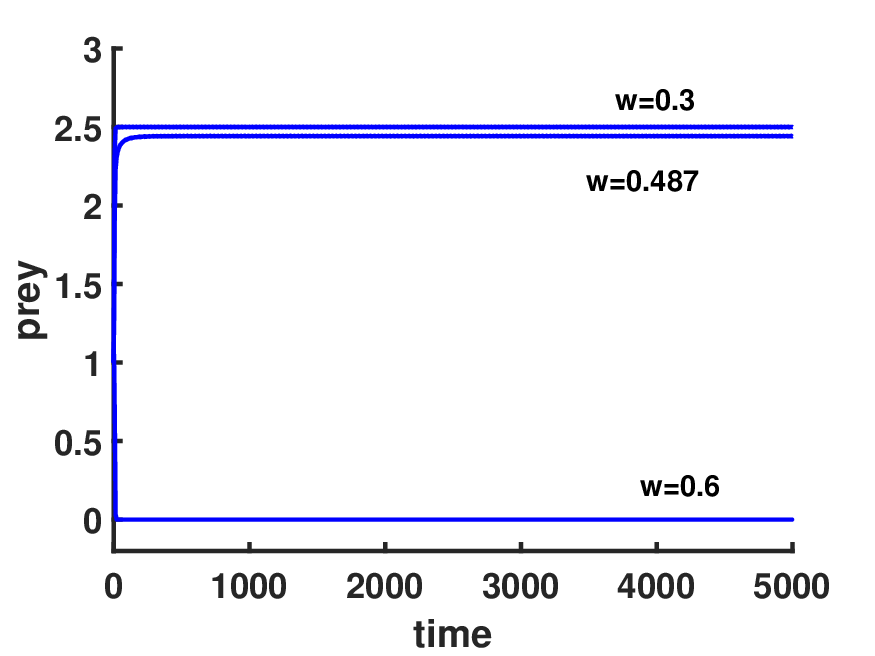}\label{sin_4}}\\
    \subfloat[]{\includegraphics[width=3.2 in, height=2 in]{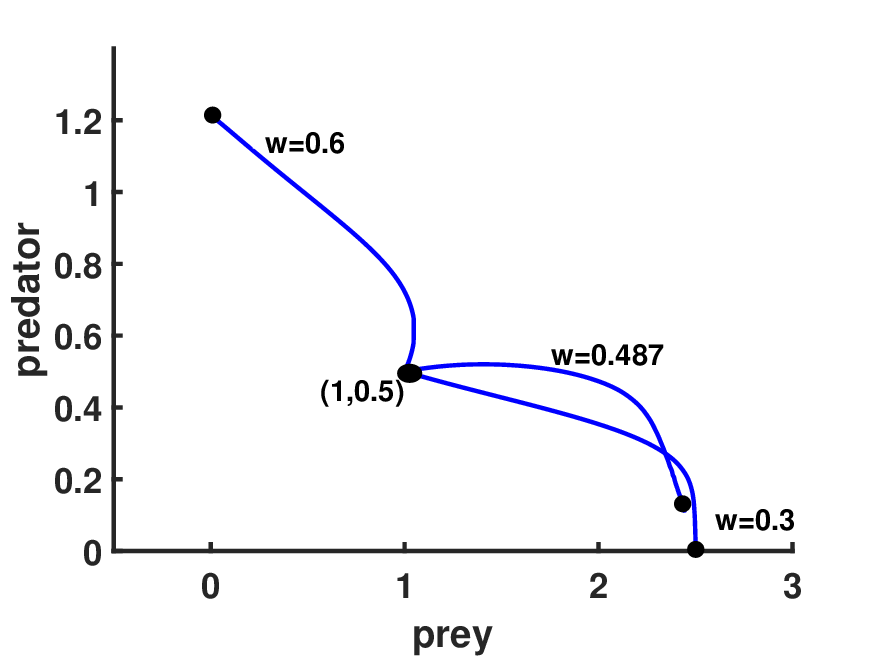}\label{sin_5}}
    \caption{Here we choose a periodic function $\phi(w)=1-sin(2\pi w)$. Figs. \ref{sin_1} and \ref{sin_2} shows the bifurcation figures of wind speed $w$ vs prey and predator  population, respectively. In Figs. \ref{sin_1} and \ref{sin_2}, dotted red lines represent unstable equilibrium, and solid red lines indicate stable equilibrium, and the solid blue line represents the function $\phi(w)$. Figs. \ref{sin_3} and \ref{sin_4} show the time series for prey and predator population with different $w$ and fig. \ref{sin_5} is the phase portrait of the respective population dynamics. The others parameters are r=0.5, k=2.5, b=0.5, d=0.1, $\xi=1.5,$ $\beta=0.285,$ $\alpha=0.5,$ c=0.1. SN: Saddle-Node, TC: Transcritical.
  } 
\end{figure}

 \begin{figure}[htb!]
	\centering
	\subfloat[]{\includegraphics[width=3.2 in, height=2 in]{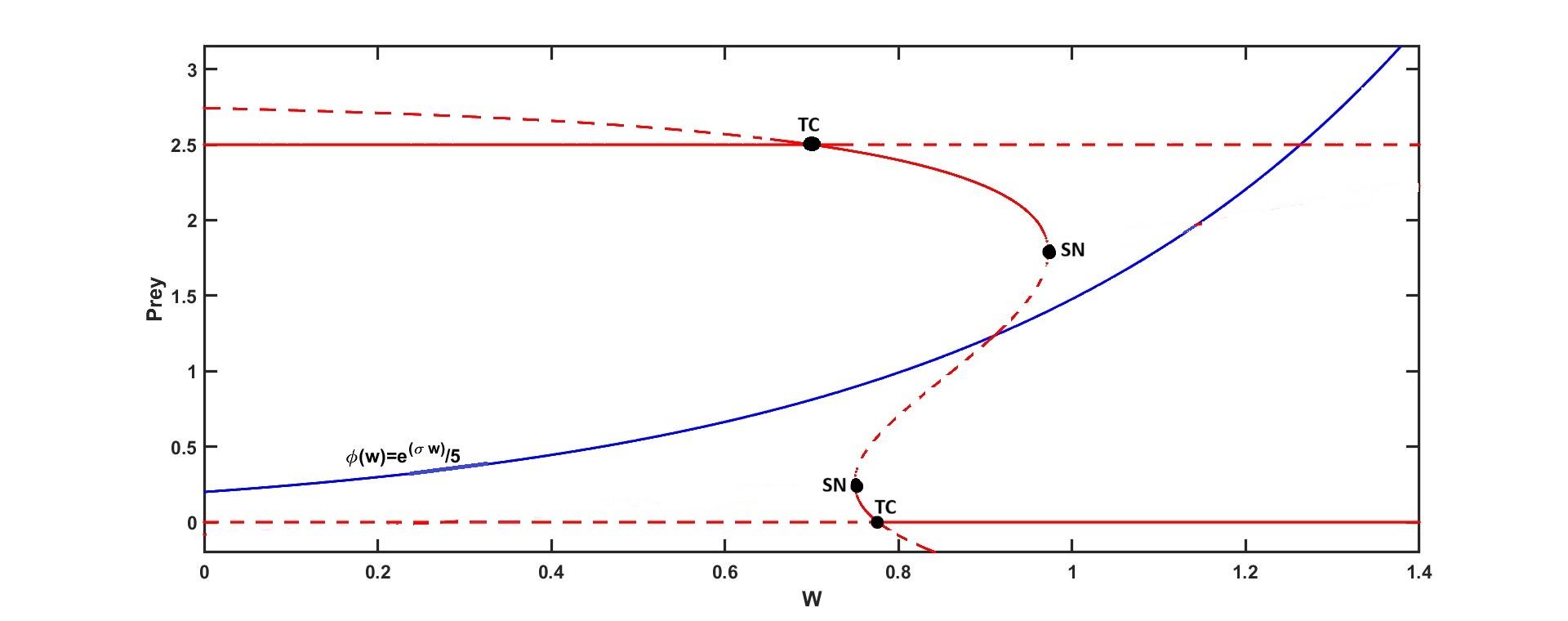} \label{exp^sigma_1}}
    \subfloat[]{\includegraphics[width=3.2 in, height=2 in]{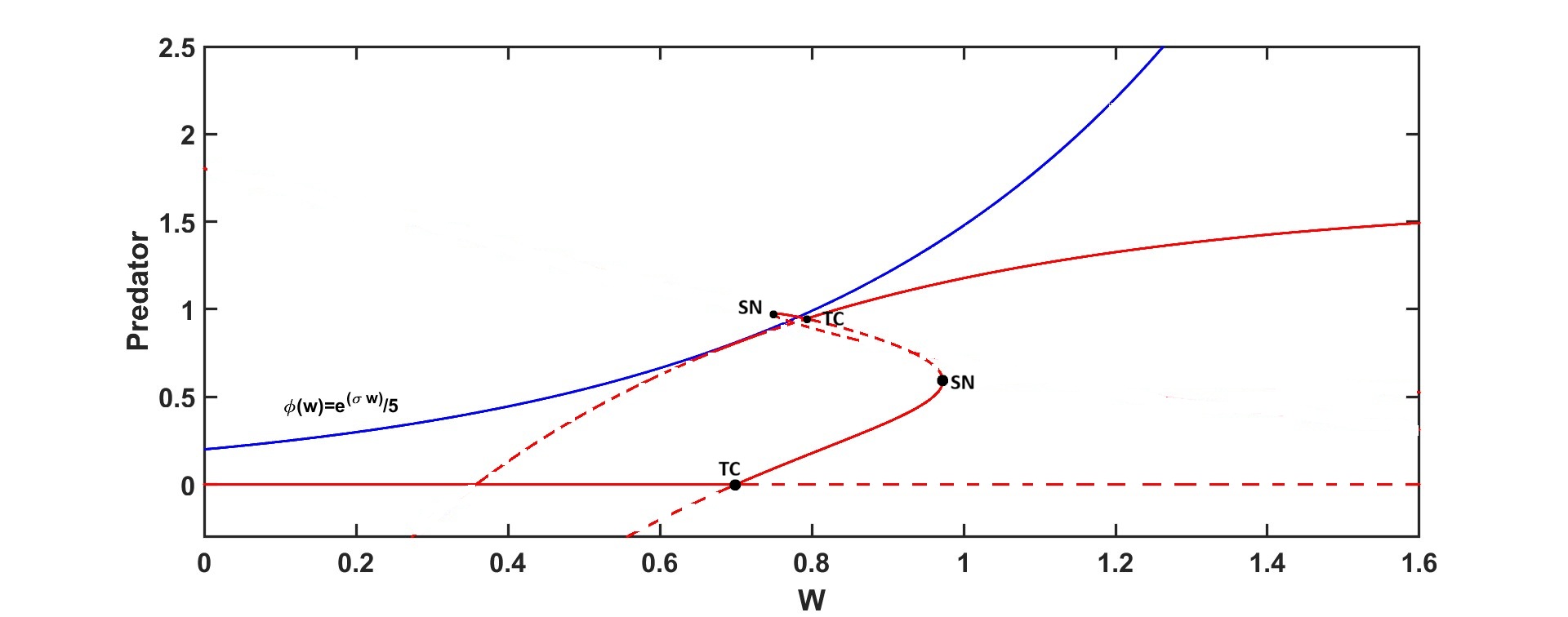}\label{exp^sigma_2}}
    \\
    \subfloat[]{\includegraphics[width=3.2 in, height=2 in]{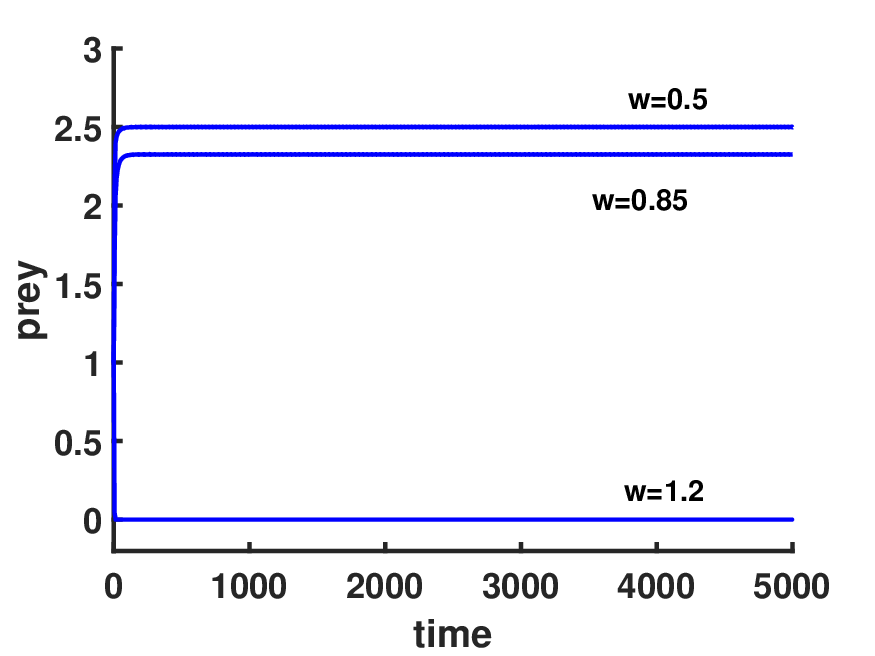}\label{exp^sigma_3}}
    \subfloat[]{\includegraphics[width=3.2 in, height=2 in]{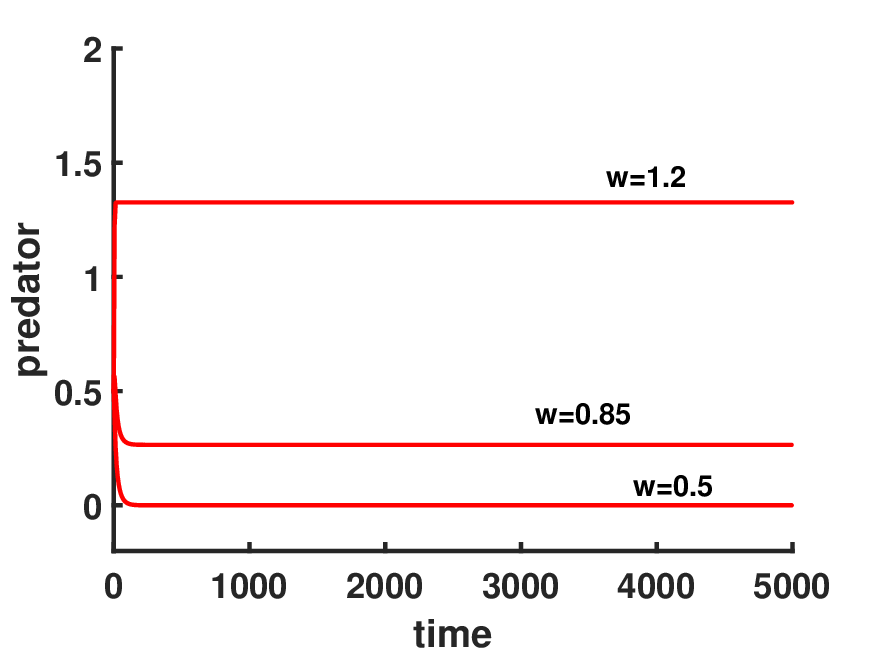}\label{exp^sigma_4}}\\
    \subfloat[]{\includegraphics[width=3.2 in, height=2 in]{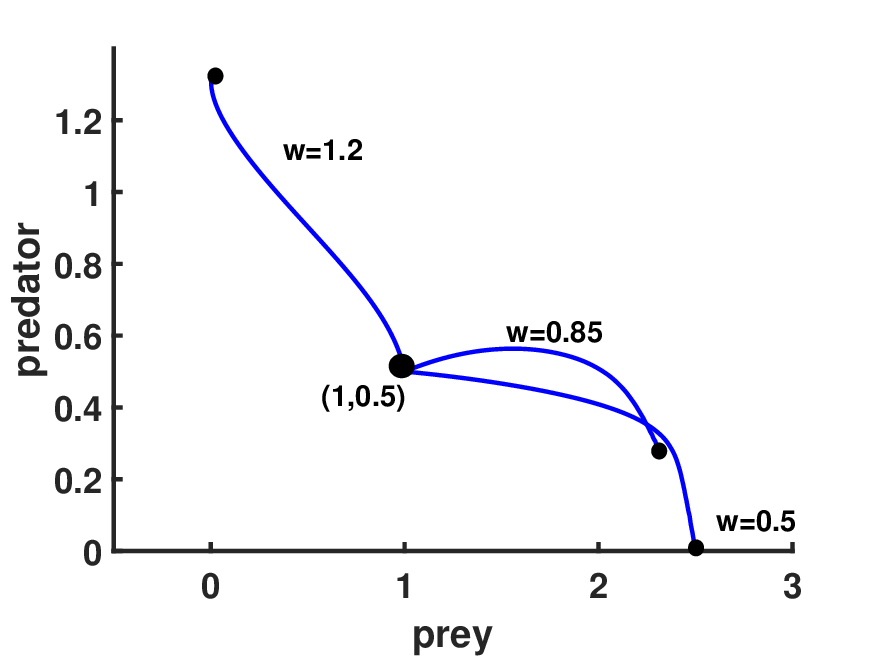}\label{exp^sigma_5}}
      \subfloat[]{\includegraphics[width=3.2 in, height=2 in]{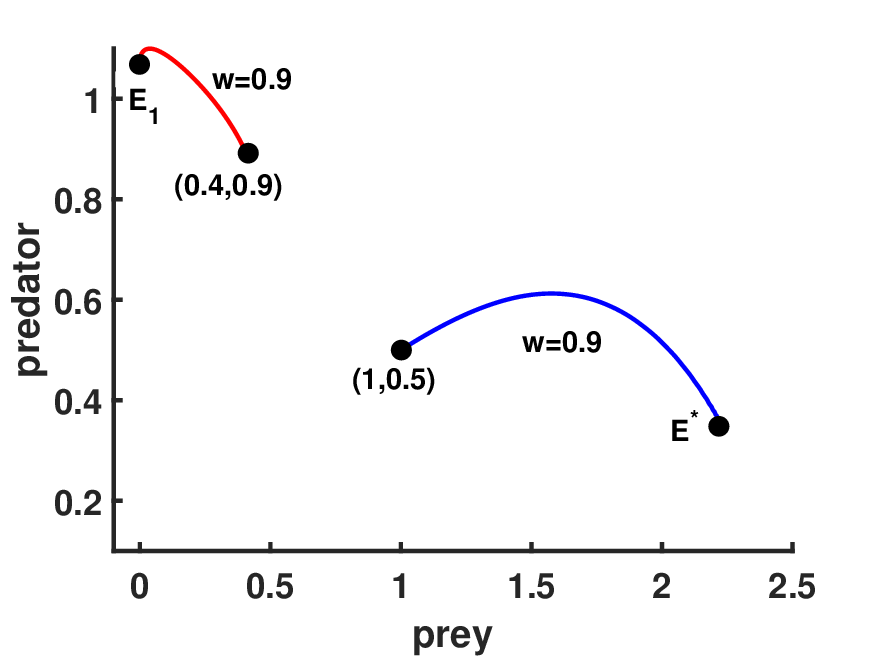}\label{exp^sigma_6}}
    
\caption{We choose an strictly increasing  function  $\phi(w)=\dfrac{e^{\sigma w}}5$. Figs. \ref{exp^sigma_1} and \ref{exp^sigma_2}  shows the bifurcation figures of wind speed $w$ vs prey and predator  population, respectively. In Figs. \ref{exp^sigma_1} and \ref{exp^sigma_2}, dotted red lines and solid red lines represents unstable and stable equilibrium, respectively and solid blue line represents the function $\phi(w)$. Figs. \ref{exp^sigma_3} and \ref{exp^sigma_4} shows the time series for prey and predator population with different $w$ and fig. \ref{exp^sigma_5} is the phase portrait of the respective population dynamics. The others parameters are r=0.5, k=2.5, b=0.5, d=0.1, $\xi=1.5,$ $\beta=0.285,$ $\alpha=0.5,$ c=0.1, $\sigma=2$. SN: Saddle-Node, TC: Transcritical. } 
\end{figure}

\begin{figure}[ht!]
	\centering
	\subfloat[]{\includegraphics[width=3.2 in, height=2 in]{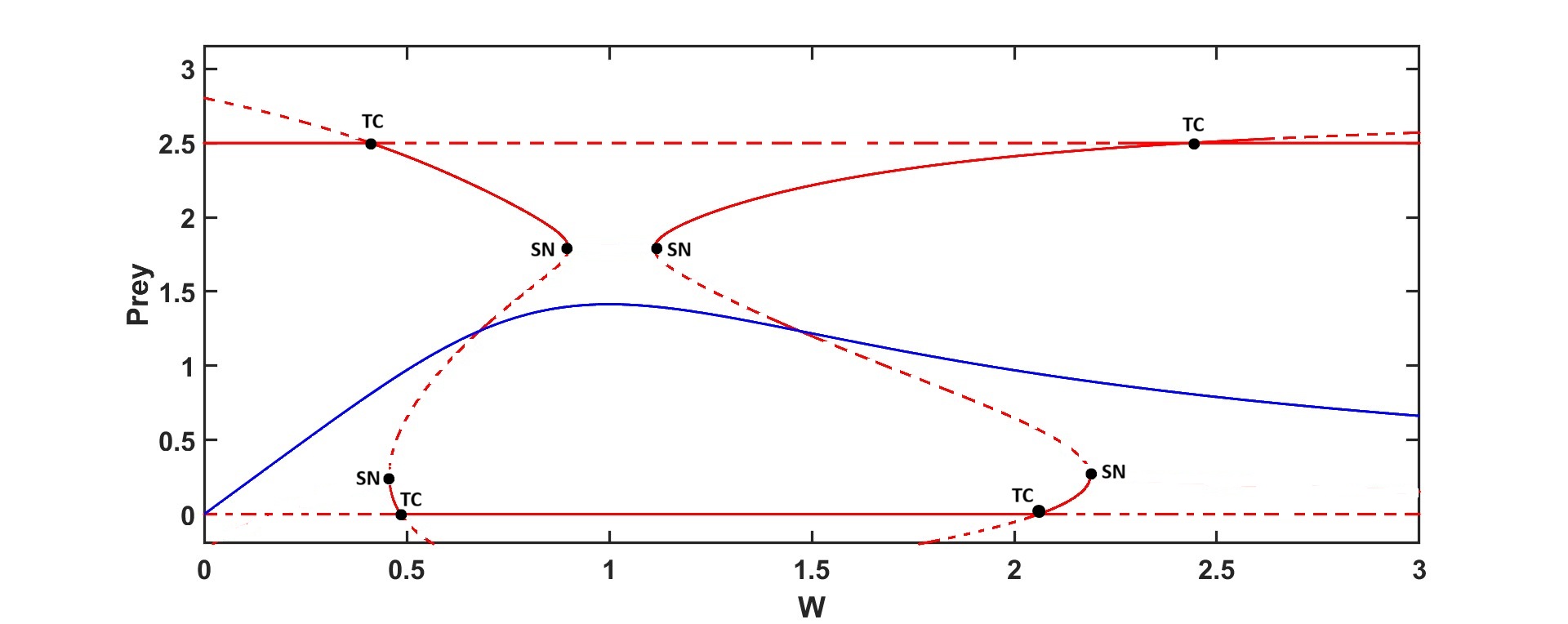} \label{square_1}}
    \subfloat[]{\includegraphics[width=3.2 in, height=2 in]{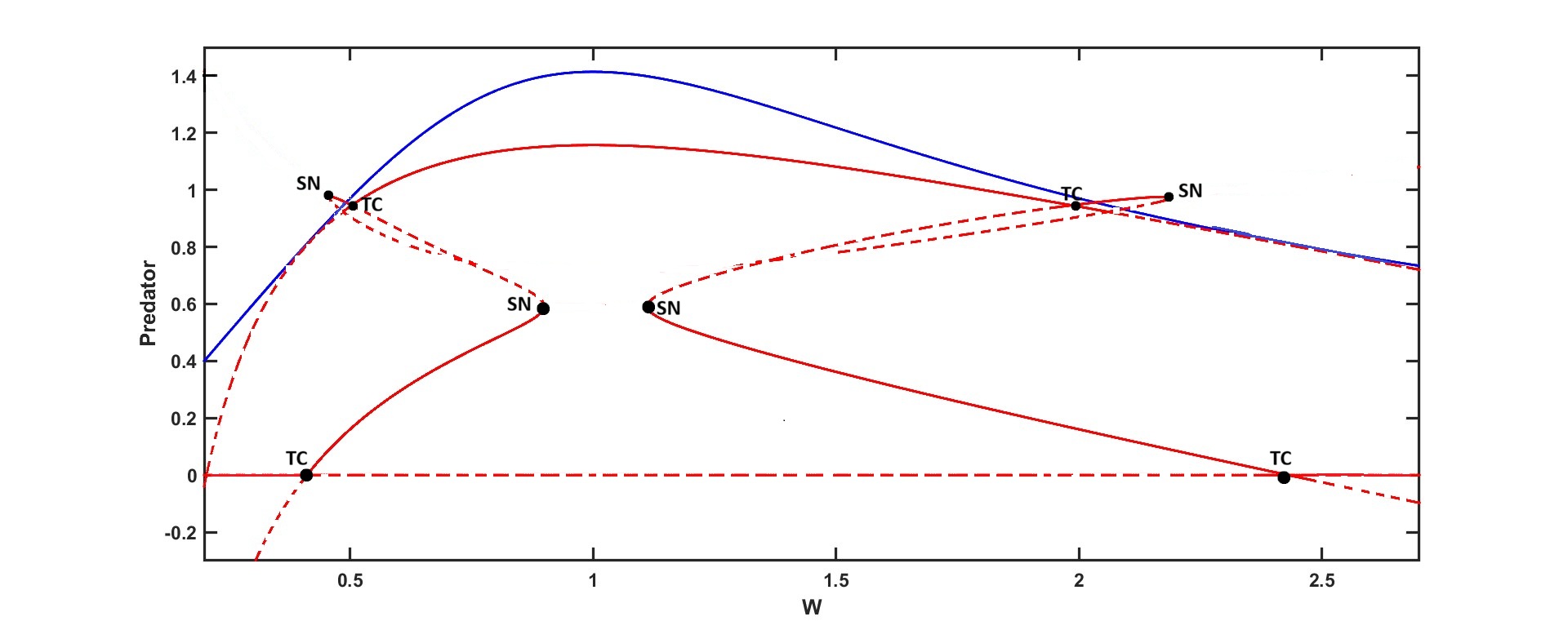}\label{square_2}}
    \\
    \subfloat[]{\includegraphics[width=3.2 in, height=2 in]{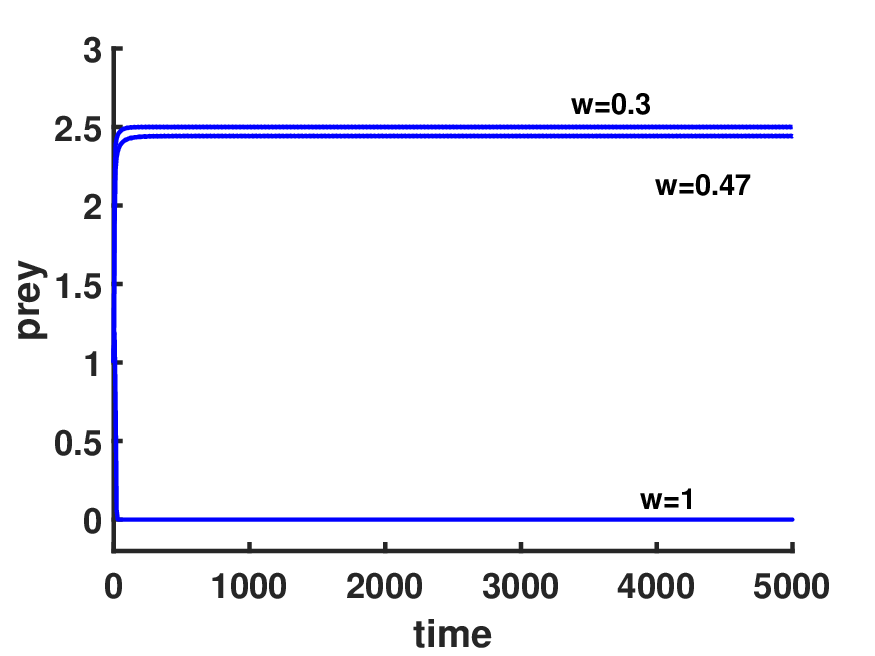}\label{square_3}}
    \subfloat[]{\includegraphics[width=3.2 in, height=2 in]{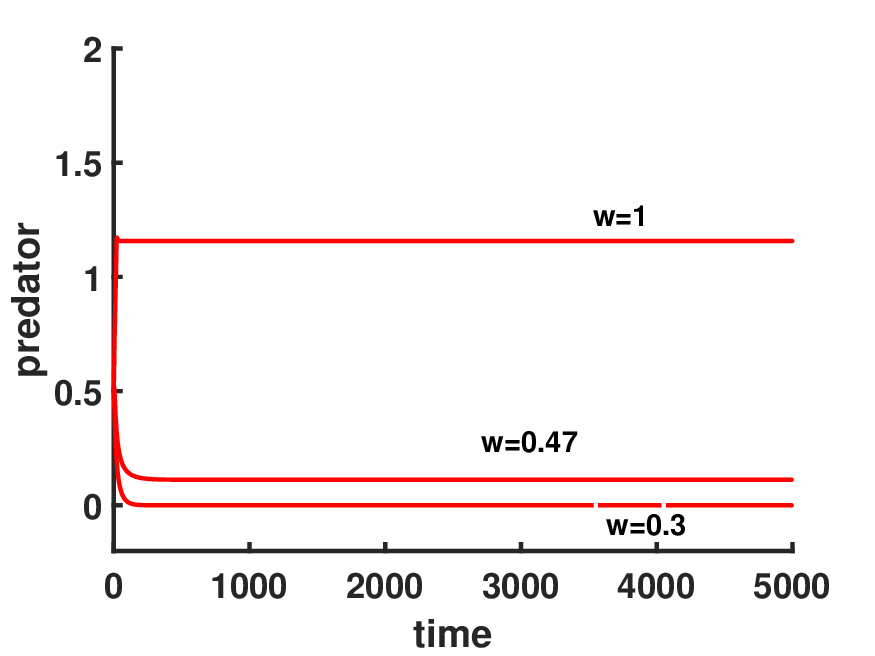}\label{square_4}}\\
    \subfloat[]{\includegraphics[width=3.2 in, height=2 in]{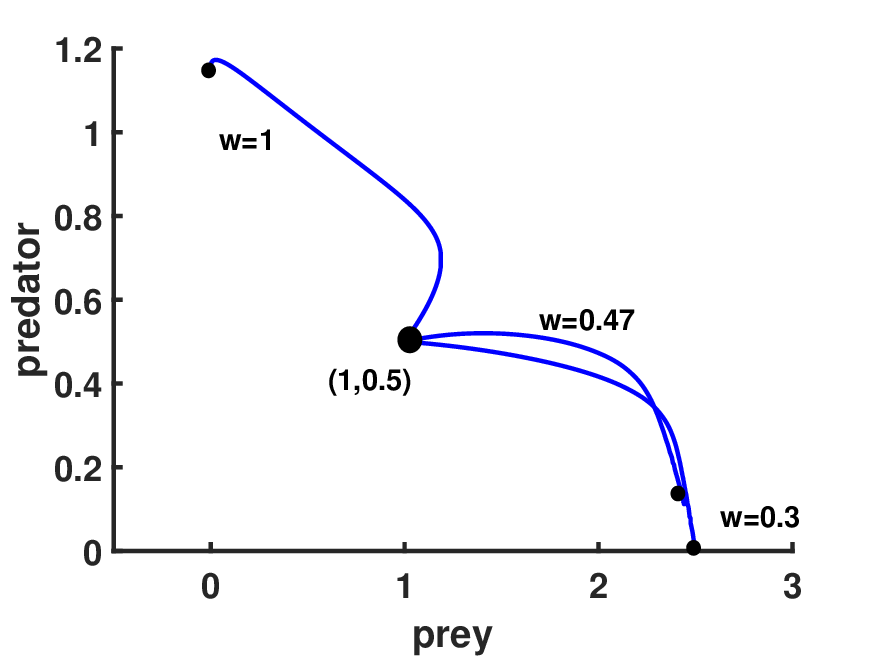}\label{square_5}}
    \subfloat[]{\includegraphics[width=3.2 in, height=2 in]{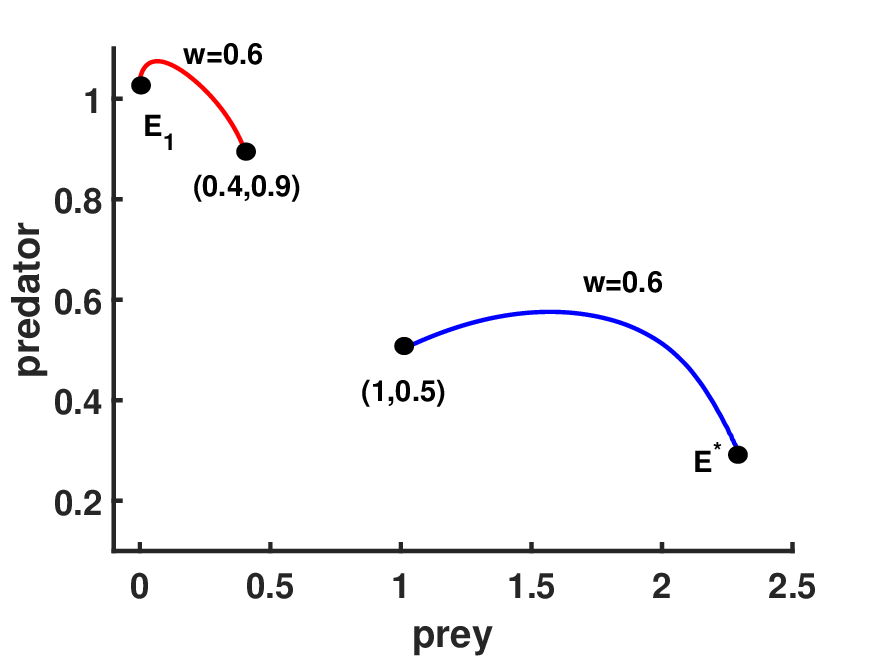}\label{square_6}}
    
\caption{We choose an bounded  function  $\phi(w)$=$\dfrac{2w}{\sqrt{1+w^4}}$. Figs. \ref{square_1} and \ref{square_2}  shows the bifurcation figures of wind speed $w$ vs prey and predator  population, respectively. In Figs. \ref{square_1} and \ref{square_2}, dotted red lines and solid red lines represents unstable and stable equilibrium, respectively and solid blue line represents the function $\phi(w)$. Figs. \ref{square_3} and \ref{square_4} shows the time series for prey and predator population with different $w$ and fig. \ref{square_5} is the phase portrait of the respective population dynamics. The others parameters are r=0.5, k=2.5, b=0.5, d=0.1, $\xi=1.5,$ $\beta=0.285,$ $\alpha=0.5,$ c=0.1. SN: Saddle-Node, TC: Transcritical.}
      
\end{figure}
\begin{figure}[ht!]
	\centering
	\subfloat[]{\includegraphics[width=3 in, height=1.5in]{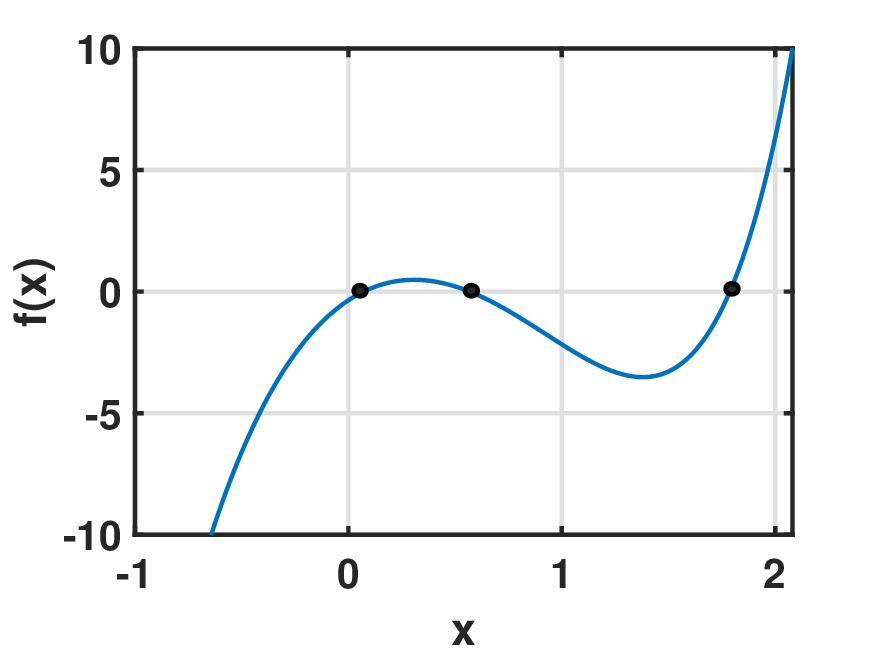}}
    \subfloat[]{\includegraphics[width= 3in, height=1.5 in]{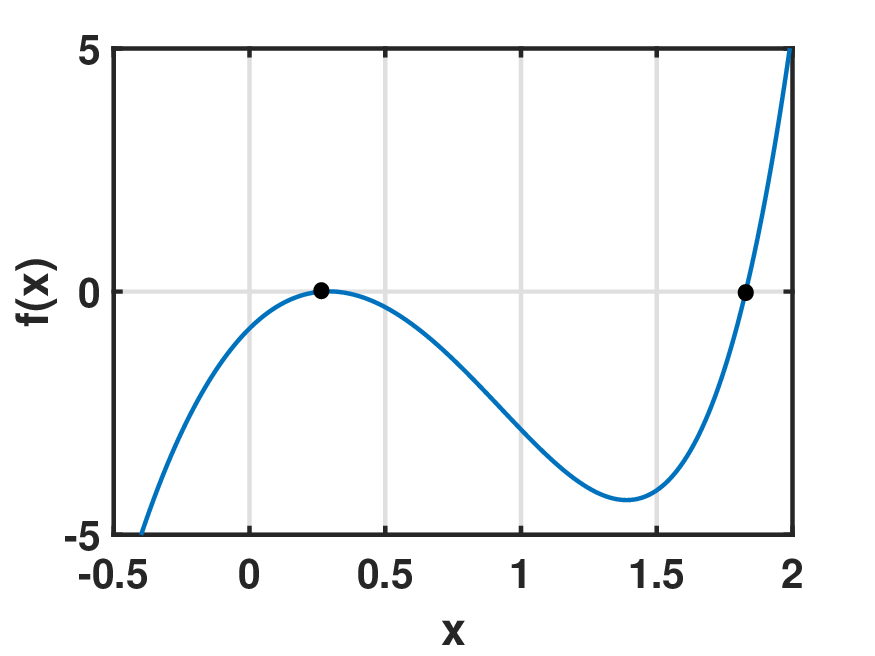}}\\
    \subfloat[]{\includegraphics[width=3 in, height=1.5 in]{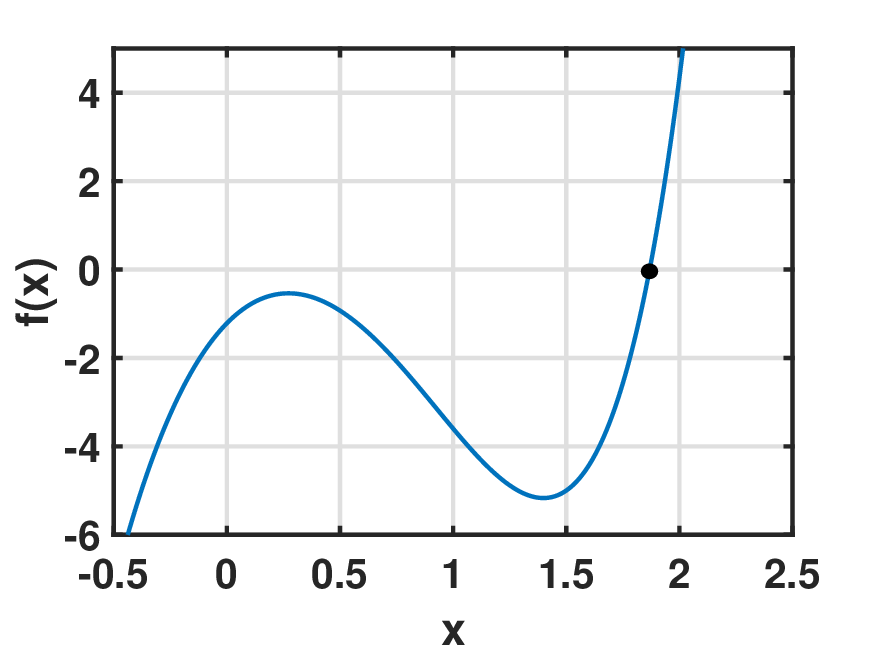}}
    \subfloat[]{\includegraphics[width=3 in, height=1.5 in]{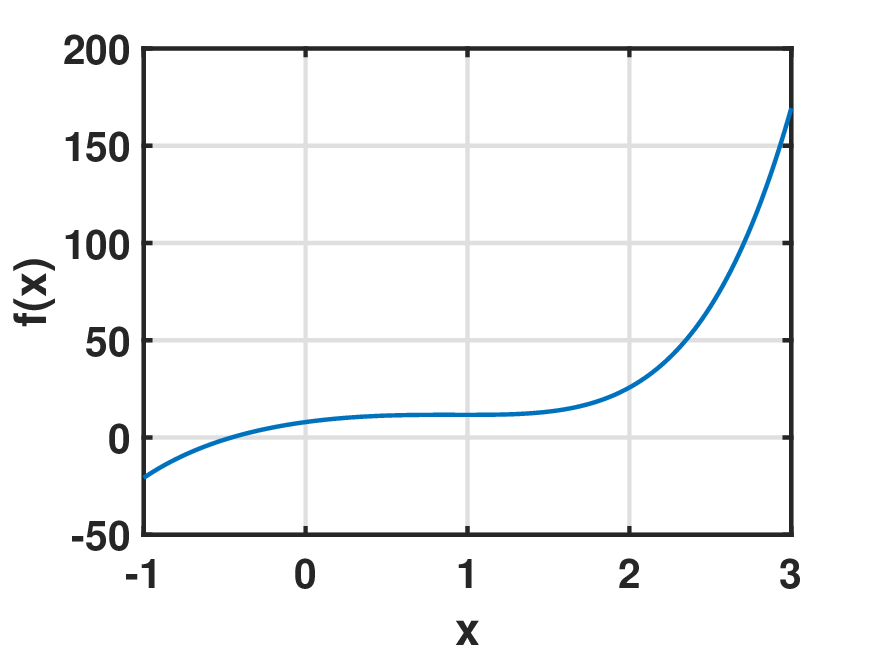}} 
    \caption{ The figure represents the nullcline curve as in equation \eqref{eq:13} for different values of $\phi(w)$. For $\phi(w)=0.92$, $\phi(w)=0.8932$, $\phi(w)=0.86$ and $\phi(w)=1.5$, the equation \eqref{eq:13} that is $f(x)=0$   has  3 solution, 2 solution, unique solution, no solution respectively on $\mathbb{R}^+$. The others parameters are r=0.5, k=2.5, b=0.5, d=0.1, $\xi=1.5,\hspace{2mm}\beta=0.285,\hspace{2mm}\alpha=0.5,\hspace{2mm} c=0.1.$ }
	\label{fig:1}
\end{figure}

\section{Appendix}
\subsection{Positivity and uniform bounded}\label{S3}
With regard to the positivity of the system represented
by equation \eqref{eq:5}, the following outcome has been obtained.\\
\begin{theorem}
All solutions x(t) and y(t) that satisfy 
system \eqref{eq:5} and initial conditions $x(0)>0$, $y(0) > 0$
exhibit uniform boundedness.\\
\end{theorem}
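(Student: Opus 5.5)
First I would establish positivity, and then boundedness in two stages: bound the prey, then bound the predator through a combined functional.

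\textbf{Positivity.} Write each equation of \eqref{eq:5} in the form $\dot x = x\,G_1(x,y)$ and $\dot y = y\,G_2(x,y)$, where $G_1,G_2$ are continuous on the closed positive quadrant. The denominator $1+\alpha\xi+bx+x^2$ is positive there because $b,\alpha,\xi\ge 0$. Integrating gives
\[
x(t)=x(0)\exp\Bigl(\int_0^t G_1\,ds\Bigr)>0,\qquad y(t)=y(0)\exp\Bigl(\int_0^t G_2\,ds\Bigr)>0
\]
on the whole interval of existence. So the open quadrant is invariant, and the right-hand sides are locally Lipschitz there.

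\textbf{Prey bound.} Since the predation term is nonnegative, $\dot x\le rx(1-x/k)$. Comparison with the logistic equation gives $\limsup_{t\to\infty}x(t)\le k$ and $x(t)\le \max\{x(0),k\}$ for all $t$.

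\textbf{Predator bound.} I would take $V=\beta x+y$. The predation terms $\mp\beta\phi(w)xy/(1+\alpha\xi+bx+x^2)$ cancel, leaving
\[
\dot V = \beta r x\Bigl(1-\frac{x}{k}\Bigr)+\frac{\beta\phi(w)\xi y}{1+\alpha\xi+bx+x^2}-dy-c\xi\phi(w)y^2 .
\]
The additional-food term is linear in $y$ and can exceed $dy$, so it is not automatically absorbed. I would use the quadratic damping: since $1+\alpha\xi+bx+x^2\ge 1+\alpha\xi$, this term is at most $\beta\phi(w)\xi y/(1+\alpha\xi)$. Then
\[
\frac{\beta\phi(w)\xi}{1+\alpha\xi}\,y-c\xi\phi(w)y^2\le \frac{\beta^2\phi(w)\xi}{4c(1+\alpha\xi)^2},
\]
by maximizing the quadratic in $y$. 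Choosing any $\eta\in(0,d]$ and adding $\eta V$ gives
\[
\dot V+\eta V\le \max_{0\le x\le M}\beta x\Bigl(r+\eta-\frac{rx}{k}\Bigr)+\frac{\beta^2\phi(w)\xi}{4c(1+\alpha\xi)^2}=:L,
\]
where $M=\max\{x(0),k\}$ (or simply the unconstrained maximum over $x\ge 0$, which is $\beta k(r+\eta)^2/(4r)$). The $y$-terms that remain, $-(d-\eta)y$, are nonpositive and are dropped.

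\textbf{Conclusion.} Gronwall's inequality gives $V(t)\le L/\eta+(V(0)-L/\eta)e^{-\eta t}$. Hence $\limsup_{t\to\infty}V\le L/\eta$, and every solution enters and stays in $\{x,y\ge 0:\ \beta x+y\le L/\eta+\varepsilon\}$. The bound is independent of initial data asymptotically, which gives uniform boundedness and, with positivity, global existence.

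The main obstacle is the additional-food growth term $\beta\phi(w)\xi y/(\cdots)$. It does not vanish at low prey density, and the linear death $dy$ alone cannot dominate it. The step that must be checked is absorbing it with the intraspecific term $-c\xi\phi(w)y^2$, which requires $c\xi\phi(w)>0$. If $c\xi\phi(w)=0$, boundedness of $y$ instead needs $d>\beta\phi(w)\xi/(1+\alpha\xi)$-type conditions, and I would state that as a remark.
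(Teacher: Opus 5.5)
Your argument is correct, and it takes a different route from the paper. The paper argues componentwise. From $\dot x\le rx(1-x/k)$ it gets $\limsup x\le k$, then asserts $k-\epsilon\le x(t)\le k+\epsilon$ for large $t$. It uses that interval to bound the predator's per-capita gain $\beta\phi(w)(x+\xi)/(1+\alpha\xi+bx+x^2)$ by its value near $x=k$, and finishes with a logistic comparison for $y$.

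That argument has three problems. The lower bound on $x$ does not follow from $\limsup x\le k$, since the prey can go extinct. The damping is written as $-\phi(w)y^2$ rather than $-c\xi\phi(w)y^2$. The final $\limsup$ also omits division by the damping coefficient. For the paper's own $E_1$ time-series parameters ($\phi(w)=1.5$, $k=4$), the stable equilibrium has $y_1\approx 1.86$, while the paper's stated bound evaluates to about $0.13$.

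Your functional $V=\beta x+y$ needs no lower bound on $x$. The predation terms cancel, and the additional-food gain is controlled using only $1+\alpha\xi+bx+x^2\ge 1+\alpha\xi$ and then absorbed by the quadratic damping. You obtain an explicit absorbing set with an exponential rate of attraction. You also prove positivity, which the paper uses but never establishes.

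If you want separate bounds on each species, the componentwise idea works once repaired. With $S=\sup_{x\ge 0}(x+\xi)/(1+\alpha\xi+bx+x^2)<\infty$ one gets $\dot y\le y\bigl(\beta\phi(w)S-d-c\xi\phi(w)y\bigr)$. Hence $\limsup y\le\max\{0,(\beta\phi(w)S-d)/(c\xi\phi(w))\}$, without needing the prey bound at all.
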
 
\begin{proof}
Using the first equation of the system \eqref{eq:5}, it can
be inferred that\\
\begin{equation}\label{eq:6}
\begin{cases}
     \dfrac{dx}{dt}&\leq rx\left(1-\frac{x}{k}\right).
        \end{cases}
   \end{equation}

The above inequality leads to
\begin{equation}\label{eq:7}
 \lim_{t \to \infty} sup \hspace{2mm} x(t) \leq k .\\
\end{equation}
 Hence, for any positive constant  $\epsilon> 0$ sufficiently small,
there exists a $T_1 > 0$ such that for all $t \geq T_1$,\\
 \begin{equation}\label{eq:8}
    k-\epsilon \leq x(t) \leq k+\epsilon.\\  
 \end{equation}
 For t $\geq T_1$, it follows from the second equation of the system
\eqref{eq:5}  that\\
\begin{equation*}
   \begin{split}
        \dfrac{dy}{dt}&= \dfrac{\beta \phi(w) (x+\xi)y}{1+\alpha \xi + bx + x^2} - dy - \phi(w) y^2,\\
        & \leq y\bigg(\dfrac{\beta \phi(w) (k+\epsilon+\xi)}{1+\alpha \xi + b(k-\epsilon) + (k-\epsilon)^2} - d - \phi(w) y\bigg),\\
        &\leq y\bigg(\dfrac{\beta \phi(w) (k+\epsilon+\xi)}{1+\alpha \xi + b(k-\epsilon) + (k-\epsilon)^2} - d \bigg)\bigg[1-\dfrac{\phi(w)y}{\bigg(\dfrac{\beta \phi(w) (k+\epsilon+\xi)}{1+\alpha \xi + b(k-\epsilon) + (k-\epsilon)^2} - d \bigg)}\bigg].\\
        \end{split} 
    \end{equation*}
    The above inequality leads to \\
\begin{equation}\label{eq:9}
 \lim_{t \to \infty} sup \hspace{2mm} y(t) \leq \dfrac{\beta \phi(w) (k+\epsilon+\xi)}{1+\alpha \xi + b(k-\epsilon) + (k-\epsilon)^2} - d.  \\
\end{equation}
Since $\epsilon$ is a small positive constant, setting $\epsilon \to 0$ in \eqref{eq:9}
leads to\\
 \begin{equation}\label{eq:10}
 \lim_{t \to \infty} sup \hspace{2mm} y(t) \leq \dfrac{\beta \phi(w) (k+\xi)}{1+\alpha \xi + bk + k^2} - d . \\
\end{equation}\\
Hence, from equations \eqref{eq:10} and \eqref{eq:7},  x(t) and y(t) of the system \eqref{eq:5}
with the initial conditions x(0) $>$ 0, y(0)$ >$ 0 are uniformly
bounded. 
\end{proof}
\subsection{Existence of the equilibria}\label{Existence}
With regard to the presence of equilibria in the system \eqref{eq:5}, the following result has been observed.\\

   1. The vanishing equilibrium
point $E_0(0, 0)$ and the predator-free equilibrium point
$E_2(k, 0)$ always exist.\\
2. There is a prey-free equilibrium point $E_1(0,y_1)$ \\
     where $y_1=\dfrac{\beta\xi}{c\xi(1+\alpha\xi)}-\dfrac{d}{c\xi\phi(w)} $ , if
 \begin{equation}\label{eq:11}
 d-\dfrac{\beta\xi\phi(w)}{1+\alpha\xi} < 0.
     \end{equation}\\
     3. Other non-negative interior steady states of \eqref{eq:5} can be solved from the following equations\\
     \begin{equation}\label{eq:12}
         \begin{split}
           f_1(x,y)=r\left(1-\frac{x}{k}\right) - \dfrac{\phi(w)  y}{1+\alpha \xi + bx + x^2}=0,\\
           g_1(x,y)=\dfrac{\beta \phi(w) (x+\xi)}{1+\alpha \xi + bx + x^2} - d - c\xi\phi(w) y=0.
         \end{split}
     \end{equation}
     From the first equation of \eqref{eq:12}, we have\\
\begin{equation*}
        y=\dfrac{r}{\phi(w)}(1+\alpha\xi+bx+x^2)\bigg(1-\dfrac{x}{k}\bigg)\hspace{2mm}  \hspace{2mm}\implies \hspace{2mm} y>0 \hspace{2mm}\forall \hspace{2mm} 0<x<k. 
          \end{equation*}
 Solving  equation \eqref{eq:12}, we get a 5-degree equation of $x$ as follows:\\
\begin{equation}\label{eq:13}
    x^5+Ax^4+Bx^3+Cx^2+Dx+E=0,
\end{equation}\\
where $A=2b-k$,\\
$B=2+ b^2+2\alpha\xi-2bk$,\\
$C=2b+2b\alpha\xi-\dfrac{dk}{rc\xi}-2k-b^2k-2\alpha\xi k$,\\
$D=1+\alpha^2\xi^2+2\alpha\xi+\dfrac{k\beta\phi(w)}{rc\xi}-\dfrac{dkb}{rc\xi}-2kb-2bk\alpha\xi$,\\
$E=-\alpha^2\xi^2k-k-2\alpha\xi k+\dfrac{k\beta\phi(w)\xi -dk-dk\alpha\xi}{rc\xi}$.\\
By numerical methods, we show that equation \eqref{eq:13} has at most 3 roots in $\mathbb{R^+}$. Figure \ref{fig:1}.

\begin{theorem} \label{Th_Existence}
The system \eqref{eq:5} has \\
\begin{itemize}
    \item trivial and predator-free equilibrium points, which always exist,
    \item prey-free equilibrium point if $d-\dfrac{\beta\xi \phi(w)}{1+\alpha\xi}<0$,
    \item no interior equilibrium point if $A,B,C,D,E >0$, 
    \item unique equilibrium point if
        \begin{enumerate}
            \item $E<0 $ and $A,B,C,D>0$,
            \item $E,D<0$ and $C,B,A>0$,
            \item $E,D,C<0$ and $A,B>0$,
            \item $E,D,C,B<0$ and $A>0$,
            \item $A,B,C,D,E<0$,
        \end{enumerate}
\end{itemize}
where A, B, C, D and E are the coefficients in \eqref{eq:13}.
\end{theorem}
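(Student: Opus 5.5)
The plan is to treat the four items of the statement separately. The first two follow directly from the computations in Subsection~\ref{Existence}, and the interior count is a sign-counting argument on the quintic \eqref{eq:13}.

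\emph{Boundary equilibria.} Setting $y=0$ in \eqref{eq:5} leaves $rx(1-x/k)=0$, which gives $E_0(0,0)$ and $E_2(k,0)$ for every choice of parameters. Setting $x=0$ (with $y\neq0$) in the second equation gives $\frac{\beta\phi(w)\xi}{1+\alpha\xi}-d-c\xi\phi(w)y=0$. Solving yields $y_1$. This is positive exactly when \eqref{eq:11} holds, which is the second bullet.

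\emph{Reducing interior equilibria to a polynomial.} Write $Q(x)=1+\alpha\xi+bx+x^2>0$. From $f_1=0$ in \eqref{eq:12} we get $y=F(x)=\frac{r}{\phi(w)}(1-\frac{x}{k})Q(x)$. Substitute this into $g_1=0$, multiply by $Q(x)$, and then multiply by the positive constant $k/(rc\xi)$. This gives
\[
P(x):=(x-k)Q(x)^2+\frac{k}{rc\xi}\bigl[\beta\phi(w)(x+\xi)-dQ(x)\bigr]=0 .
\]
Expanding $P$ should reproduce $x^5+Ax^4+Bx^3+Cx^2+Dx+E$ with the listed $A,\dots,E$; I will check this coefficient by coefficient. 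Since only positive factors were multiplied, interior equilibria correspond bijectively to roots $x^*>0$ of $P$ with $F(x^*)>0$, i.e.\ $0<x^*<k$.

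\emph{Counting positive roots.} I will apply Descartes' rule of signs to the coefficient sequence $(1,A,B,C,D,E)$, whose leading entry is $+1$. If $A,\dots,E>0$ there is no sign change, hence no positive root and no interior equilibrium. In each of the five listed cases the signs form a block of $+$ followed by a block of $-$. In every such case there is exactly one sign change, so Descartes gives exactly one positive root. Existence of that root is also visible directly from $P(0)=E<0$ and $P(x)\to+\infty$, plus the intermediate value theorem.

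\emph{Main obstacle: placing the root in $(0,k)$.} Positivity of $x^*$ alone does not give $y^*>0$, because we need $x^*<k$. I will evaluate
\[
P(k)=\frac{k}{rc\xi}\bigl[\beta\phi(w)(k+\xi)-dQ(k)\bigr],
\]
which is positive exactly when $\phi(w)>\phi_2^{(TC)}(w)$. Given $P(0)=E<0$, this puts a root in $(0,k)$. Since $P$ has only one positive root, that root is the interior equilibrium and it lies in $(0,k)$.

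If $P(k)\le0$, the unique positive root lies in $[k,\infty)$ and gives $y^*\le0$, so no feasible interior point exists. I therefore expect to state the uniqueness clause with the implicit feasibility requirement $\phi(w)>\phi_2^{(TC)}(w)$, equivalently that $E_2$ is a saddle as in Table~\ref{tab:1}. In the proof I will flag this condition as what is meant by a "feasible" interior equilibrium, consistent with the transcritical bifurcation at $\phi_2^{(TC)}(w)$ in Theorem~\ref{th_TC_2}.
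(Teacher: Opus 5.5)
Your proposal is correct and takes the same route as the paper. The paper gives no argument beyond deriving \eqref{eq:13} and listing the sign patterns of $(1,A,B,C,D,E)$ with zero or one sign change, i.e.\ Descartes' rule of signs; your $P(x)=(x-k)Q(x)^2+\frac{k}{rc\xi}\bigl[\beta\phi(w)(x+\xi)-dQ(x)\bigr]$ does expand to exactly the listed $A,\dots,E$.

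Your additional $P(k)$ step is a genuine correction that the paper misses. The sign patterns alone give a unique positive root of \eqref{eq:13}, but not one in $(0,k)$. For instance, $r=c=\xi=\beta=b=1$, $\alpha=0$, $d=0.01$, $k=0.1$, $\phi(w)=0.005$ falls in case~1, yet $P(k)<0$, so $y^*<0$. The uniqueness clause therefore does need $\phi(w)>\phi_2^{(TC)}(w)$, exactly as you propose.
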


\subsection{Local stability analysis}     
We introduce the Jacobian Matrix of model \eqref{eq:5}, which is calculated at any arbitrary equilibrium point as follows:\\\\
$$ J = \left(
\begin{array}{cc}
f_x& f_y\\
g_x& g_y\\
\end{array}
\right),$$\\
where $f_x$= $r\left(1-\frac{x}{k}\right)-\frac{rx}{k} - \dfrac{\phi(w)  y}{1+\alpha \xi + bx + x^2}+\dfrac{\phi(w)  xy(b+2x)}{(1+\alpha \xi + bx + x^2)^2}$,\\
$f_y$= $- \dfrac{\phi(w)  x}{1+\alpha \xi + bx + x^2}$,\\
$g_x$= $ \dfrac{\beta\phi(w)  y}{1+\alpha \xi + bx + x^2}- \dfrac{\beta\phi(w)(x+\xi)  y(b+2x)}{(1+\alpha \xi + bx + x^2)^2}$,\\
$g_y$= $ \beta\dfrac{\phi(w) (x+\xi)}{1+\alpha \xi + bx + x^2}-d- 2c\xi\phi(w)y.$\\\\
\begin{lemma}\label{lemma:4.3}
 The trivial Equilibrium Point of the system \eqref{eq:5} is always a saddle point.   
\end{lemma}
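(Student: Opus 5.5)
The plan is to evaluate the Jacobian $J$ given above at $E_0(0,0)$ and read off its eigenvalues directly. Setting $x=y=0$ in the four entries:
\begin{itemize}
\item $f_x = r$, since every term other than $r(1-x/k)$ carries a factor $x$ or $y$;
\item $f_y = 0$, because of the factor $x$;
\item $g_x = 0$, because of the factor $y$;
\item $g_y = \dfrac{\beta\phi(w)\xi}{1+\alpha\xi} - d$.
\end{itemize}
Hence $J(E_0)$ is diagonal, and its eigenvalues are $\lambda_1 = r > 0$ and $\lambda_2 = \dfrac{\beta\phi(w)\xi}{1+\alpha\xi} - d$.

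The only point needing care is the sign of $\lambda_2$, since ``always a saddle'' requires $\lambda_2<0$. That holds exactly when $d > \beta\phi(w)\xi/(1+\alpha\xi)$, which is precisely the complement of the feasibility condition \eqref{eq:11} for $E_1(0,y_1)$. This is the main obstacle, because the statement as written is not unconditional. Ecologically, with additional food $\xi>0$ the predator can grow on the $y$-axis, so $E_0$ is repelling along the prey axis but may also be repelling along the predator axis.

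I would therefore split into cases.
\begin{itemize}
\item If \eqref{eq:11} fails strictly, so $E_1$ does not exist, then $\lambda_1>0>\lambda_2$ and $E_0$ is a hyperbolic saddle. Its stable manifold is the $y$-axis and its unstable manifold is the $x$-axis, both of which are invariant.
\item If \eqref{eq:11} holds, both eigenvalues are positive and $E_0$ is an unstable node.
\item In the borderline case $\lambda_2=0$, $E_0$ is non-hyperbolic.
\end{itemize}
To match the lemma's phrasing, I would state it under the standing assumption that $E_1$ is infeasible, or else note that $E_0$ is always unstable (a saddle whenever $d(1+\alpha\xi) > \beta\xi\phi(w)$), and record this caveat alongside Table~\ref{tab:1}.
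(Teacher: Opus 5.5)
Your proof is correct. It follows the paper's route (evaluate the Jacobian at the origin, note that it is diagonal, read off the eigenvalues), but your $(2,2)$ entry is the right one and the paper's is not. The paper's proof writes $J_0=\mathrm{diag}(r,-d)$. That drops the additional-food contribution $\beta\phi(w)\xi/(1+\alpha\xi)$ to $g_y(0,0)$, which is nonzero for every $\xi>0$.

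Consequently, the lemma as stated holds only under your extra hypothesis $d(1+\alpha\xi)>\beta\xi\phi(w)$, i.e.\ exactly when $E_1$ is infeasible. Whenever $E_1$ exists, $E_0$ is an unstable node. The paper itself assumes this case in Lemma~\ref{lemma:4.5} through $a_{22}<0$, and $a_{22}$ is precisely your $-\lambda_2$.

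This is also the situation in the paper's own examples. For instance, $\phi(w)=1.5$, $\xi=0.6$, $\beta=0.6$, $\alpha=0.4$, $d=0.1$ gives $\lambda_2\approx 0.34>0$. So your case split is the correct formulation, with $E_0$ always unstable because $\lambda_1=r>0$.

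In the borderline case $\lambda_2=0$ you can say slightly more than ``non-hyperbolic''. The $y$-axis is a center manifold on which $\dot y=-c\xi\phi(w)y^2$. Hence, within the first quadrant, $E_0$ still attracts along the $y$-axis and repels along the $x$-axis.
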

\begin{proof}
At equilibrium $E_0(0, 0)$, the Jacobian matrix is\\
$$ J_o = \left(
\begin{array}{cc}
r& 0\\
0& -d\\
\end{array}
\right)$$\\

with eigenvalues:
$\lambda_1 = r$, $ \lambda_2 = -d$.
Since$\lambda_1 >0$ and $\lambda_2 <0$, $E_0(0, 0)$ is a saddle point. 
\end{proof}
\begin{lemma}\label{lemma:4.4}
For the predator-free Equilibrium Point  $E_2(k, 0)$ \\
\begin{itemize}
    \item  It is a stable node if
    \[\phi(w)<\dfrac{d(1+\alpha \xi + bk + k^2)}{\beta(k+\xi)},
    \]
    \item It is a saddle point if
    \[
    \phi(w)>\dfrac{d(1+\alpha \xi + bk + k^2)}{\beta(k+\xi)} ,
    \]
    \item It is non-hyperbolic if
    \[
    \phi(w)=\dfrac{d(1+\alpha \xi + bk + k^2)}{\beta(k+\xi)}. 
    \]
\end{itemize}
\end{lemma}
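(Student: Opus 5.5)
The plan is to linearize system \eqref{eq:5} at $E_2(k,0)$ using the general Jacobian entries $f_x,f_y,g_x,g_y$ computed above, and to read the stability type directly from the eigenvalues. At $E_2$ we have $y=0$ and $x=k$. Substituting these values gives the following entries.

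First, $f_x = r(1-k/k) - rk/k - 0 + 0 = -r$, since every term containing $y$ vanishes.

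Second, $f_y = -\phi(w)k/(1+\alpha\xi+bk+k^2)$.

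Third, $g_x = 0$, because both terms of $g_x$ carry the factor $y$.

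Fourth, $g_y = \beta\phi(w)(k+\xi)/(1+\alpha\xi+bk+k^2) - d$, since the term $-2c\xi\phi(w)y$ drops out.

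The Jacobian $J_2$ at $E_2$ is therefore upper triangular, matching the matrix $J_2$ already written in the proof of Theorem \ref{th_TC_2}. Its eigenvalues are the diagonal entries
\[
\lambda_1=-r<0,\qquad \lambda_2=\dfrac{\beta\phi(w)(k+\xi)}{1+\alpha\xi+bk+k^2}-d .
\]

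The classification then reduces to the sign of $\lambda_2$. Because $\beta>0$, $k+\xi>0$ and $1+\alpha\xi+bk+k^2>0$, the quantity $\lambda_2$ is strictly increasing and affine in $\phi(w)$. It vanishes exactly at $\phi(w)=d(1+\alpha\xi+bk+k^2)/(\beta(k+\xi))$, which is $\phi_2^{(TC)}(w)$. This yields three cases.

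If $\phi(w)$ is below this threshold, then $\lambda_2<0$. Both eigenvalues are real and negative, so $E_2$ is a hyperbolic stable node by the Hartman–Grobman theorem; the node, rather than a focus, follows from the eigenvalues being real.

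If $\phi(w)$ is above the threshold, then $\lambda_2>0>\lambda_1$, so $E_2$ is a saddle.

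If $\phi(w)$ equals the threshold, then $\lambda_2=0$ and $E_2$ is non-hyperbolic.

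No step here is a real obstacle. The only points needing care are two checks. One is verifying that every $y$-dependent term in $f_x$, $g_x$ and $g_y$ vanishes at $y=0$. The other is noting that the denominators are positive, so that the inequality on $\lambda_2$ can be rearranged into an inequality on $\phi(w)$ without reversing its direction.
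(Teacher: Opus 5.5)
Your proposal is correct and follows the paper's proof essentially step for step. You evaluate the Jacobian at $E_2(k,0)$, note that it is upper triangular with eigenvalues $-r$ and $\frac{\beta\phi(w)(k+\xi)}{1+\alpha\xi+bk+k^2}-d$, and classify by the sign of the second eigenvalue, which changes exactly at the stated threshold; your remarks on positive denominators and real eigenvalues (node rather than focus) only make explicit what the paper leaves implicit.
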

\begin{proof}
     At equilibrium $E_2(k, 0)$, the Jacobian matrix is\\
$$ J_2 = \left(
\begin{array}{cc}
-r&-\dfrac{\phi(w)k}{1+\alpha \xi + bk + k^2}\\
0&-d+\dfrac{\beta\phi(w)(k+\xi)}{1+\alpha \xi + bk + k^2}
\end{array}
\right)$$\\
with eigenvalues : $\lambda_1 =- r$, $ \lambda_2 = -d+\dfrac{\beta\phi(w)(k+\xi)}{1+\alpha \xi + bk + k^2}$.\\
Since $\lambda_1<0$, the stability depends on $\lambda_2$.\\
$\bullet$ If $\phi(w)<\dfrac{d(1+\alpha \xi + bk + k^2)}{\beta(k+\xi)}$, then $\lambda_2<0$. Hence $E_2(k,0)$ is a stable node.\\
$\bullet$ If $\phi(w)>\dfrac{d(1+\alpha \xi + bk + k^2)}{\beta(k+\xi)}$, then $\lambda_2>0$. Hence $E_2(k,0)$ is a saddle point.\\
$\bullet$ If $\phi(w)=\dfrac{d(1+\alpha \xi + bk + k^2)}{\beta(k+\xi)}$, then $\lambda_2=0$. Hence $E_2(k,0)$ is neutral.   
\end{proof}

\begin{lemma}\label{lemma:4.5}
 For the prey-free Equilibrium Point  $E_1(0, y_1)$ \\
\begin{itemize}
    \item  It is a stable node if
    \[\phi(w)>\dfrac{\bigg(d+rc\xi(1+\alpha\xi)\bigg)(1+\alpha\xi)}{\beta\xi},
    \]
    \item It is a saddle point if
    \[
    \phi(w)<\dfrac{\bigg(d+rc\xi(1+\alpha\xi)\bigg)(1+\alpha\xi)}{\beta\xi} ,
    \]
    \item It is non-hyperbolic if
    \[
    \phi(w)=\dfrac{\bigg(d+rc\xi(1+\alpha\xi)\bigg)(1+\alpha\xi)}{\beta\xi}. 
    \]
\end{itemize}   
\end{lemma}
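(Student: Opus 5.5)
The proof follows the pattern of Lemma~\ref{lemma:4.4}. I will evaluate the general Jacobian $J$ at $E_1(0,y_1)$, observe that it is triangular, and read off the two eigenvalues. I assume throughout that $E_1$ exists, i.e.\ condition \eqref{eq:11} holds, so that $y_1>0$.

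\textbf{Step 1: the Jacobian at $x=0$.} Setting $x=0$ in the entries $f_x,f_y,g_x,g_y$ listed before Lemma~\ref{lemma:4.3} gives
\[
f_x=r-\frac{\phi(w)y_1}{1+\alpha\xi},\qquad f_y=0,\qquad g_y=\frac{\beta\phi(w)\xi}{1+\alpha\xi}-d-2c\xi\phi(w)y_1 .
\]
The entry $g_x$ is irrelevant, because $f_y=0$ makes $J_1$ lower triangular. Its eigenvalues are therefore $\lambda_1=f_x$ and $\lambda_2=g_y$.

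\textbf{Step 2: the sign of $\lambda_2$.} The equilibrium relation defining $y_1$ is $g_1(0,y_1)=0$, that is,
\[
c\xi\phi(w)y_1=\frac{\beta\phi(w)\xi}{1+\alpha\xi}-d .
\]
Substituting this into $g_y$ gives $\lambda_2=-c\xi\phi(w)y_1<0$, so $\lambda_2$ is always negative wherever $E_1$ exists.

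\textbf{Step 3: the sign of $\lambda_1$.} Using the same relation,
\[
\phi(w)y_1=\frac{1}{c\xi}\Big(\frac{\beta\xi\phi(w)}{1+\alpha\xi}-d\Big),
\]
so that
\[
\lambda_1=r-\frac{\beta\xi\phi(w)-d(1+\alpha\xi)}{c\xi(1+\alpha\xi)^2}.
\]
Hence $\lambda_1<0$ if and only if $\beta\xi\phi(w)>d(1+\alpha\xi)+rc\xi(1+\alpha\xi)^2$. Dividing by $\beta\xi>0$, this is exactly
\[
\phi(w)>\frac{\big(d+rc\xi(1+\alpha\xi)\big)(1+\alpha\xi)}{\beta\xi}.
\]
The reversed inequality gives $\lambda_1>0$, and equality gives $\lambda_1=0$. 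The three cases therefore yield, respectively, a stable node (two real negative eigenvalues), a saddle, and a non-hyperbolic point.

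\textbf{Main obstacle.} The only delicate point is the algebra for $y_1$. As printed, $y_1=\frac{\beta\xi}{c\xi(1+\alpha\xi)}-\frac{d}{c\xi\phi(w)}$ should simplify to $\frac{\beta}{c(1+\alpha\xi)}-\frac{d}{c\xi\phi(w)}$. I will work directly from the equilibrium identity $g_1(0,y_1)=0$ rather than the displayed formula, so that the threshold comes out cleanly.

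I will also note that the stable-node threshold exceeds $d(1+\alpha\xi)/(\beta\xi)$. Consequently the first case automatically satisfies the existence condition \eqref{eq:11}, whereas the saddle case must additionally assume \eqref{eq:11}.
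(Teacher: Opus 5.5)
Your proof is correct and takes essentially the paper's route. The paper also evaluates the Jacobian at $E_1$ and uses $J_{12}=0$, phrased through the characteristic polynomial with discriminant $(a_{11}-a_{22})^2\ge 0$. It takes $J_{22}=d-\beta\xi\phi(w)/(1+\alpha\xi)<0$ from existence and reduces the sign of $J_{11}$ to the stated threshold. Your remark that this threshold forces existence of $E_1$ in the stable and non-hyperbolic cases, while the saddle case needs \eqref{eq:11} as a standing assumption, makes explicit what the paper leaves implicit.
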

\begin{proof}
    At equilibrium $E_1(0, y_1)$, the Jacobian matrix is\\
$$ J_1 = \left(
\begin{array}{cc}
r-\dfrac{1}{c\xi(1+\alpha\xi)}\bigg(\dfrac{\beta\phi(w)\xi}{1+\alpha\xi}-d\bigg)& 0\\
\dfrac{\beta}{c\xi(1+\alpha\xi)}\bigg(\dfrac{\beta\phi(w)\xi}{1+\alpha\xi}-d\bigg)-\dfrac{\beta\xi b}{c\xi(1+\alpha\xi)^2}\bigg(\dfrac{\beta\phi(w)\xi}{1+\alpha\xi}-d\bigg)& d-\dfrac{\beta\xi \phi(w)}{1+\alpha\xi}\\
\end{array}
\right).$$\\
Here we are describing the condition of stability of the prey-free equilibrium point $E_1$ of model \eqref{eq:5}.
So the characteristic equation at $E_1(0, y_1)$ is given below,\\
\begin{equation}\label{eq:14}
   \begin{split}
     P(\lambda)=\lambda^2+\mu_1\lambda+\mu_0=0,
    \end{split}
\end{equation}\\
where coefficients are\\
\begin{align*}
    \boldsymbol {\mu_1}&= -tr(J_1)=-(a_{11} + a_{22} ) ,\\
\boldsymbol {\mu_0}&=det(J_1)= a_{11}a_{22} - a_{12}a_{21},\\
\end{align*}\\
where  $ J_1 = \left(
\begin{array}{cc}
a_{11} & a_{12} \\
a_{21} & a_{22} \\
\end{array}
\right)$
is the Jacobian matrix of model \eqref{eq:5} at $E_1$.
The components of the Jacobian matrix are given below:\\
\begin{align*}
    {J_{11}}&=r-\dfrac{1}{c\xi(1+\alpha\xi)}\bigg(\dfrac{\beta\phi(w)\xi}{1+\alpha\xi}-d\bigg),\\
    {J_{12}}&=0,\\
    {J_{21}}&=\dfrac{\beta}{c\xi(1+\alpha\xi)}\bigg(\dfrac{\beta\phi(w)\xi}{1+\alpha\xi}-d\bigg)-\dfrac{\beta\xi b}{c\xi(1+\alpha\xi)^2}\bigg(\dfrac{\beta\phi(w)\xi}{1+\alpha\xi}-d\bigg),\\
    {J_{22}}&= d-\dfrac{\beta\xi \phi(w)}{1+\alpha\xi}.\\
\end{align*}\\
Here $\mu_1^2-4\mu_0=({a_{11}}-{a_{22}})^2\geq0$. Therefore, all the roots of equation \eqref{eq:9} are real.\\
If $tr(J_1)=(a_{11}+a_{22})<0$ i.e. $\mu_1>0$ and $det(J_1)=a_{11}a_{22}>0$ then both roots are real negative. Hence $E_1$ is a stable node.\\
As $a_{22}<0$, $a_{11}a_{22}>0$ implies $a_{11}<0$, that is,\\
\begin{align*}
    \phi(w)>\dfrac{\bigg(d+rc\xi(1+\alpha\xi)\bigg)(1+\alpha\xi)}{\beta\xi}.
\end{align*}\\
If  $det(J_1)=a_{11}a_{22}<0$ then one  positive and one negative real roots. Hence $E_1$ is a saddle point.\\
As $a_{22}<0$, $a_{11}a_{22}<0$ implies  $a_{11}>0$, that is, \\
\begin{align*}
    \phi(w)<\dfrac{\bigg(d+rc\xi(1+\alpha\xi)\bigg)(1+\alpha\xi)}{\beta\xi}.
\end{align*}\\
If \begin{align*}
    \phi(w)=\dfrac{\bigg(d+rc\xi(1+\alpha\xi)\bigg)(1+\alpha\xi)}{\beta\xi},
\end{align*} 
then $E_1$ is neutral.
\end{proof}
\begin{lemma}\label{lemma:4.6}
For the interior equilibrium point $E^*(x^*,y^*)$\\
     \begin{itemize}
    \item  It is  asymptotically stable if $tr(J^*)<0 \hspace{2mm} and \hspace{2mm}  det(J^*)>0,$
    \item It is non-hyperbolic if $
   det(J^*)=0.$
\end{itemize} 
\end{lemma}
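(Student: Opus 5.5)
The plan is to linearise system \eqref{eq:5} at the interior equilibrium and apply the trace--determinant criterion for planar systems. I will use the factored form \eqref{eq:19}, where $\dot x = p(x)\big(F(x)-y\big)$ and $\dot y = y\big(h(x)-c\xi\phi(w)y\big)$. At $E^*=(x^*,y^*)$ we have $y^*=F(x^*)$ and $h(x^*)=c\xi\phi(w)y^*$. Hence the term $p'(x^*)(F(x^*)-y^*)$ drops out of the $(1,1)$ entry, and the bracket $h(x^*)-c\xi\phi(w)y^*$ drops out of the $(2,2)$ entry. This gives exactly the matrix $J^*$ displayed before Theorem \ref{th_SN}, with $tr(J^*)$ and $det(J^*)$ as in \eqref{eq:20} and \eqref{eq:21}. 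I would first confirm this by comparing with the general Jacobian entries $f_x,f_y,g_x,g_y$ listed in the local stability subsection, evaluated at $E^*$.

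The characteristic polynomial of $J^*$ is $\lambda^2-tr(J^*)\lambda+det(J^*)=0$, with roots $\lambda_{1,2}=\tfrac12\big(tr(J^*)\pm\sqrt{tr(J^*)^2-4det(J^*)}\big)$. For the first item, suppose $tr(J^*)<0$ and $det(J^*)>0$, and split into two cases. If the discriminant is nonnegative, both roots are real, their product is positive and their sum is negative, so both are negative. If the discriminant is negative, the roots are complex conjugates with real part $tr(J^*)/2<0$. In either case every eigenvalue has negative real part. The Hartman--Grobman theorem (equivalently, the linearisation theorem in \cite{perko2013differential}) then gives local asymptotic stability of $E^*$, as a node or a focus.

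For the second item, $det(J^*)=0$ means $\lambda=0$ is a root of the characteristic polynomial. So $J^*$ has a zero eigenvalue and $E^*$ is non-hyperbolic by definition. By \eqref{eq:21}, and since $p(x^*)>0$ and $F(x^*)=y^*>0$ for a feasible interior point with $0<x^*<k$, this condition is equivalent to $h'(x^*)=c\xi\phi(w)F'(x^*)$. That is the same relation used in Theorem \ref{th_SN}, and I would mention the link.

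There is no real obstacle here. The only point needing care is the simplification of the Jacobian at $E^*$ using the nullcline identities, in particular checking the sign of the $(2,2)$ entry $-c\xi\phi(w)F(x^*)$. Everything else is the standard planar classification.
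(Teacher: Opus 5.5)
Your proposal is correct and follows essentially the same route as the paper: you linearise at $E^*$, write the characteristic polynomial $\lambda^2-\mathrm{tr}(J^*)\lambda+\det(J^*)$, and apply the planar trace--determinant criterion, with a zero eigenvalue (hence non-hyperbolicity) when $\det(J^*)=0$. The only differences are cosmetic. The paper evaluates the explicit entries $J_{11},\dots,J_{22}$ in $x^*,y^*$ rather than the factored $p,F,h$ form, and the two agree at $E^*$ via the nullcline identities. Your split on the sign of the discriminant, together with the appeal to the linearisation theorem, spells out what the paper simply asserts.
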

\begin{proof}
    At the coexisting equilibrium point $E^*(x^*,y^*)$, the Jacobian matrix is 
$ J^* = \left(
\begin{array}{cc}
J_{11} & J_{12} \\
J_{21} & J_{22} \\
\end{array}
\right)$, \\
where $J_{11}=-\dfrac{rx^*}{k}+\dfrac{\phi(w)x^*y^*(b+2x^*)}{(1+\alpha\xi+bx^*+{x^*}^2)^2}$, \\
$J_{12}=-\dfrac{\phi(w)x^*}{1+\alpha\xi+bx^*+{x^*}^2}$, \\
$J_{21}=\dfrac{\beta\phi(w)y^*}{1+\alpha\xi+bx^*+{x^*}^2}-\dfrac{\beta\phi(w)(x^*+\xi)y^*(b+2x^*)}{(1+\alpha\xi+bx^*+{x^*}^2)^2}$, \\
$J_{22}=-c\xi\phi(w)y^*$.\\
Here, we describe the stability condition of the coexisting equilibrium point $E^*$ of model \eqref{eq:5}.
So, the characteristic equation at $E^*(x^*,y^*)$ is given below,\\
\begin{equation}\label{eq:15}
   \begin{split}
     Q(\lambda)=\lambda^2+\eta_1\lambda+\eta_0=0,
    \end{split}
\end{equation}\\
where coefficients are\\
\begin{align*}
    \boldsymbol {\eta_1}&= -tr(J_1)=-(J_{11} + J_{22} ) ,\\
\boldsymbol {\eta_0}&=det(J_1)= J_{11}J_{22} - J_{12}J_{21}.\\
\end{align*}\\
Here,\\
$tr(J^*)=\dfrac{\phi(w)x^*y^*(b+2x^*)}{(1+\alpha\xi+bx^*+x^{*2})^2}-\dfrac{rx^*}{k}-c\xi\phi(w)y^*$ and\\ 
$det(J^*)=c\xi\phi(w)y^*\bigg[-\dfrac{\phi(w)x^*y^*(b+2x^*)}{(1+\alpha\xi+bx^*+x^{*2})^2}+\dfrac{rx^*}{k}\bigg]+\dfrac{\beta\phi(w)^2x^*y^*}{(1+\alpha\xi+bx^*+x^{*2})^2}\bigg[1-\dfrac{(x^*+\xi)(b+2x^*)}{(1+\alpha\xi+bx^*+x^{*2})}\bigg]$.\\

If $trJ^*<0$ and $detJ^*>0$, then $E^*$ is locally asymptotically stable.\\
If $det(J^*)=0$, then one of eigenvalue of $J^*$ is zero. Hence, $E^*$ is non-hyperbolic. 
\end{proof}
Hence from lemma $\ref{lemma:4.3}$, $\ref{lemma:4.4}$, $\ref{lemma:4.5}$ and $\ref{lemma:4.6}$, a theorem has been formulated as below:\\
\begin{theorem}[Stability of equilibria]Consider the dynamical system \eqref{eq:5} with the parameter set \{$b,c,d,r, k, \alpha, \beta, \xi$, $\phi(w)$\}. The system has the following equilibria:\\
\begin{enumerate}
    \item  The trivial Equilibrium Point $E_0(0,0)$  is always a saddle point.
    \item For the predator-free equilibrium point $E_1(0,y_1)$, stability depends on $\phi(w).$
    \[
    \phi(w)\begin{cases}
        >\dfrac{\bigg(d+rc\xi(1+\alpha\xi)\bigg)(1+\alpha\xi)}{\beta\xi} , & \text{stable},\\[2mm]
        = \dfrac{\bigg(d+rc\xi(1+\alpha\xi)\bigg)(1+\alpha\xi)}{\beta\xi}, & \text{non-hyperbolic},\\[1mm]
        < \dfrac{\bigg(d+rc\xi(1+\alpha\xi)\bigg)(1+\alpha\xi)}{\beta\xi}, & \text{saddle}.
    \end{cases} 
    \]
    \item For the prey-free equilibrium point $E_2(k,0)$, stability depends on $\phi(w).$
    \[
    \phi(w)\begin{cases}
        <\dfrac{ d(1+\alpha \xi + bk + k^2)}{\beta(k+\xi)}, & \text{stable},\\[2mm]
        = \dfrac{ d(1+\alpha \xi + bk + k^2)}{\beta(k+\xi)}, & \text{non-hyperbolic},\\[1mm]
        > \dfrac{ d(1+\alpha \xi + bk + k^2)}{\beta(k+\xi)}, & \text{saddle}.
    \end{cases} 
    \]
    \item For interior equilibrium point $E^*(x^*,y^*)$,
    \[
    \begin{cases}
      tr(J^*)<0 \hspace{2mm}  and \hspace{2mm} det(J^*)>0, & \text{stable},\\[2mm]
      det(J^*)=0,& \text{non-hyperbolic},\\[2mm]
      tr(J^*)>0 \hspace{2mm}  or \hspace{2mm} det(J^*)<0,& \text{unstable}.
       \end{cases}
    \]
\end{enumerate}
\end{theorem}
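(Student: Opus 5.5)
The theorem is a summary of Lemmas \ref{lemma:4.3}--\ref{lemma:4.6}. I will prove it by linearization at each equilibrium, using the Jacobian $J$ with entries $f_x,f_y,g_x,g_y$ computed above. Classification follows from the signs of the trace and determinant, via the Hartman--Grobman theorem. The statement's labels for $E_1$ and $E_2$ are swapped relative to the table: $E_1=(0,y_1)$ is prey-free and $E_2=(k,0)$ is predator-free. I will follow the coordinates rather than the words.

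\textbf{Boundary equilibria.} At $E_0$ the Jacobian is $\mathrm{diag}(r,-d)$, which gives a saddle immediately, as in Lemma \ref{lemma:4.3}. At $E_2(k,0)$ the entry $g_x$ vanishes because $y=0$, so $J_2$ is upper triangular. Its eigenvalues are $-r$ and $\lambda_2=\beta\phi(w)(k+\xi)/(1+\alpha\xi+bk+k^2)-d$. Solving $\lambda_2\gtrless 0$ for $\phi(w)$ gives the three cases, as in Lemma \ref{lemma:4.4}.

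At $E_1(0,y_1)$ we have $f_y=0$ because $x=0$, so $J_1$ is lower triangular. The predator eigenvalue is $a_{22}=d-\beta\xi\phi(w)/(1+\alpha\xi)$. It is negative precisely under the feasibility condition \eqref{eq:11}, using $y_1>0$. The prey eigenvalue is $a_{11}=r-\phi(w)y_1/(1+\alpha\xi)$. Substituting $y_1$ and solving $a_{11}<0$ gives the threshold $(d+rc\xi(1+\alpha\xi))(1+\alpha\xi)/(\beta\xi)$, as in Lemma \ref{lemma:4.5}.

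I expect the main obstacle to be this substitution at $E_1$. The displayed formula for $y_1$ and the entry $J_{11}$ in Lemma \ref{lemma:4.5} have inconsistent factors of $\phi(w)$ and $(1+\alpha\xi)$. I will first recompute $y_1$ directly from $g(0,y)=0$, and only then solve the inequality, so that the threshold comes out exactly as stated.

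\textbf{Interior equilibrium.} I will use the factored form \eqref{eq:19}. Since $y^*=F(x^*)$ and $h(x^*)=c\xi\phi(w)y^*$, the Jacobian reduces to $J^*$ as written above, with trace \eqref{eq:20} and determinant \eqref{eq:21}. The characteristic polynomial is $\lambda^2-\mathrm{tr}(J^*)\lambda+\det(J^*)$.

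\begin{itemize}
\item If $\mathrm{tr}(J^*)<0$ and $\det(J^*)>0$, both eigenvalues have negative real part, so $E^*$ is stable.
\item If $\det(J^*)=0$, zero is an eigenvalue, so $E^*$ is non-hyperbolic.
\item If $\det(J^*)<0$, the eigenvalues are real with opposite signs, so $E^*$ is a saddle.
\item If $\mathrm{tr}(J^*)>0$, some eigenvalue has positive real part, so $E^*$ is unstable.
\end{itemize}

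These cases together establish the final part of the theorem.
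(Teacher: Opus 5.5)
\textbf{Gap in part 1: the trivial equilibrium is not always a saddle.} Your treatment of $E_2(k,0)$, $E_1(0,y_1)$ and the interior equilibrium is correct. It is the same linearization argument the paper uses in Lemmas \ref{lemma:4.4}--\ref{lemma:4.6}, and you are right that ``predator-free'' and ``prey-free'' are swapped in the statement.

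The Jacobian at $E_0$ is not $\mathrm{diag}(r,-d)$. Evaluate the paper's own entry $g_y=\beta\phi(w)(x+\xi)/(1+\alpha\xi+bx+x^2)-d-2c\xi\phi(w)y$ at $(0,0)$. You get $\beta\xi\phi(w)/(1+\alpha\xi)-d$, because the additional food lets the predator grow without prey. So $J_0=\mathrm{diag}\bigl(r,\ \beta\xi\phi(w)/(1+\alpha\xi)-d\bigr)$, and the classification of $E_0$ depends on $\phi(w)$:
\begin{itemize}
\item if $\beta\xi\phi(w)<d(1+\alpha\xi)$, $E_0$ is a saddle;
\item if $\beta\xi\phi(w)>d(1+\alpha\xi)$, which is exactly the existence condition \eqref{eq:11} for the prey-free equilibrium $E_1$, both eigenvalues are positive and $E_0$ is an unstable node;
\item if $\beta\xi\phi(w)=d(1+\alpha\xi)$, $E_0$ is non-hyperbolic, and $E_1$ merges with it since $y_1=0$.
\end{itemize}
So ``$E_0$ is always a saddle'' is false, not merely unproved. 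With the parameters of the paper's time-series figure ($\beta=0.6$, $\xi=0.6$, $\alpha=0.4$, $d=0.1$, $\phi(w)=1.5$), one gets $\beta\xi\phi(w)/(1+\alpha\xi)\approx 0.435>d$, so $E_0$ is an unstable node there. The paper's Lemma \ref{lemma:4.3} makes exactly the same slip. Your argument therefore matches the paper's, but neither establishes the claim; what can actually be proved for $E_0$ is the trichotomy above.

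\textbf{Minor point: the anticipated obstacle at $E_1$ does not exist.} The paper's $y_1=\beta\xi/(c\xi(1+\alpha\xi))-d/(c\xi\phi(w))$ equals $\bigl(\beta\xi\phi(w)/(1+\alpha\xi)-d\bigr)/(c\xi\phi(w))$. Then $r-\phi(w)y_1/(1+\alpha\xi)$ is literally the paper's $J_{11}$, so there is no inconsistency, and your recomputation simply reproduces the stated threshold.
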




	\bibliographystyle{plain}
    \bibliography{biblio.bib}
\end{document}